\providecommand{\eg}{\emph{e.g.},\xspace}
\newcommand{\ie}{\emph{i.e.},\xspace}
\newcommand{\wrt}{\emph{w.r.t.},\xspace}
\newcommand\figref[1]{Figure~\ref{#1}}
\newcommand\tabref[1]{Table~\ref{#1}}
\newcommand\secref[1]{Section~\ref{#1}}
\newcommand\equref[1]{Equation~(\ref{#1})}
\newcommand\algoref[1]{Algo.~\ref{#1}}
\newcommand\lineref[1]{Line~\ref{#1}}
\newcommand\lemref[1]{Lemma~\ref{#1}}
\newcommand\theref[1]{Theorem~\ref{#1}}
\newcommand\expref[1]{Example~\ref{#1}}
\newcommand\defref[1]{Definition~\ref{#1}}
\newcommand{\fakeparagraph}[1]{\vspace{1mm}\noindent\textbf{#1.}}
\newtheorem{definition}{Definition}     %[section]
\newtheorem{example}{Example}     %[section]
\newtheorem{lemma}{Lemma}     %[section]
\newtheorem{theorem}{Theorem}     %[section]
\begin{document}

\title{Accurate and Efficient Trajectory-based Contact
Tracing with Secure Computation and
Geo-Indistinguishability}

\author[1]{Maocheng Li}
\author[1]{Yuxiang Zeng}
\author[2]{Libin Zheng}
\author[1,4]{Lei Chen}
\author[3]{Qing Li}
\affil[1]{The Hong Kong University of Science and Technology, Hong Kong, China
\{csmichael,yzengal,leichen\}@cse.ust.hk}

\affil[2]{Sun Yat-sen University, Guangzhou, China zhenglb6@mail.sysu.edu.cn}
\affil[3]{The Hong Kong Polytechnic University, Hong Kong, China
csqli@comp.polyu.edu.hk}
\affil[4]{The Hong Kong University of Science and Technology (Guangzhou), Guangzhou,
China}

\date{}
\maketitle

\begin{abstract}

  Contact tracing has been considered as an effective measure to limit the transmission of infectious disease such as COVID-19. Trajectory-based contact tracing compares the trajectories of users with the patients, and allows the tracing of both direct contacts and indirect contacts. Although trajectory data is widely considered as sensitive and personal data, there is limited research on how to securely compare trajectories of users and patients to conduct contact tracing with excellent accuracy, high efficiency, and strong privacy guarantee. Traditional Secure Multiparty Computation (MPC) techniques suffer from prohibitive running time, which prevents their adoption in large cities with millions of users. In this work, we propose a technical framework called ContactGuard to achieve accurate, efficient, and privacy-preserving trajectory-based contact tracing. It improves the efficiency of the MPC-based baseline by selecting only a small subset of locations of users to compare against the locations of the patients, with the assist of Geo-Indistinguishability, a differential privacy notion for Location-based services (LBS) systems. Extensive experiments demonstrate that ContactGuard runs up to 2.6$\times$ faster than the MPC baseline, with no sacrifice in terms of the accuracy of contact tracing. 
  
  \textbf{Keywords:} Contact tracing, Differential privacy, Spatial database.
\footnotetext[1]{Copyright may be transferred without notice, after which this version no longer be accesiible.}
\end{abstract}
\section{Introduction}
\label{sec:intro}

Contact tracing has been considered as one of the key epidemic control measures to limit the transmission of infectious disease such as COVID-19 
and Ebola \cite{lewis2020many}. In contrast to traditional contact tracing conducted normally by interviews and manual tracking, digital contact tracing %\cite{cebrian2021past} 
nowadays rely on mobile devices to track the visited locations of users, and are considered as more accurate, efficient, and scalable \cite{rodriguez2021population,salathe2020early}.
Taking the contact tracing of COVID-19 as an example, more than 46 countries/regions have launched contact tracing applications \cite{lewis2020many}, \eg TraceTogether in Singapore and LeaveHomeSafe in Hong Kong SAR. %\footnote{\url{www.leavehomesafe.gov.hk}}. 

As a complementary technique to the Bluetooth-based contact tracing applications such as the Exposure Notification developed by Apple and Google\footnote[2]{https://covid19.apple.com/contacttracing}, \textit{trajectory-based contact tracing} \cite{kato2020secure,DBLP:conf/icde/DaA0S21} compares the trajectories of users with the patients, and allows the tracing of both the direct contacts and the indirect contacts. The direct contacts happen when the users and the patients co-visit the same location at the \textit{same} time. In contrast, the indirect contacts normally happen when users and patients visit the same location at \textit{different} times, and the location (\eg environment, surface of objects) is contaminated and becomes the transmission medium of the virus. Because its capability of tracing both direct and indirect contacts, trajectory-based contact tracing has been widely deployed in countries/regions where there is a stricter policy of COVID-19 control.  For example, in China, prior to entering high-risk locations such as restaurants, bars, and gyms, users need to check-in with a health code, which is linked with their personal identities. 

Despite the usefulness of trajectory-based contact tracing, the privacy of trajectory data is an obvious issue. In fact, privacy concerns prevent the contact tracing application from being widely adopted in cities like Hong Kong, and there are extreme cases that citizens use two different mobile phones, one for their daily use, and the other one solely for the purpose of fulfilling the check-in requirement when entering the high-risk locations, in order to prevent any privacy breach. Existing privacy-preserving trajectory-based contact tracing research relies on specific hardware (Trusted Hardware such as Intel SGX \cite{kato2020secure}), but the security guarantee heavily depends on the hardware, and it leads to poor portability. In fact, Intel plans to stop its support for SGX from its 11th and 12th generation processors. %\cite{pcgamer2022sgx}
Thus, in this work, we aim to develop a hardware-independent software-based solution, which can provide high accuracy of contact tracing, excellent execution efficiency, and strong privacy guarantee.

\begin{figure}[t]
	\centering \vspace{0ex}
	   \begin{subfigure}[b]{0.49\textwidth}
		   \centering
		   \includegraphics[width=\textwidth]{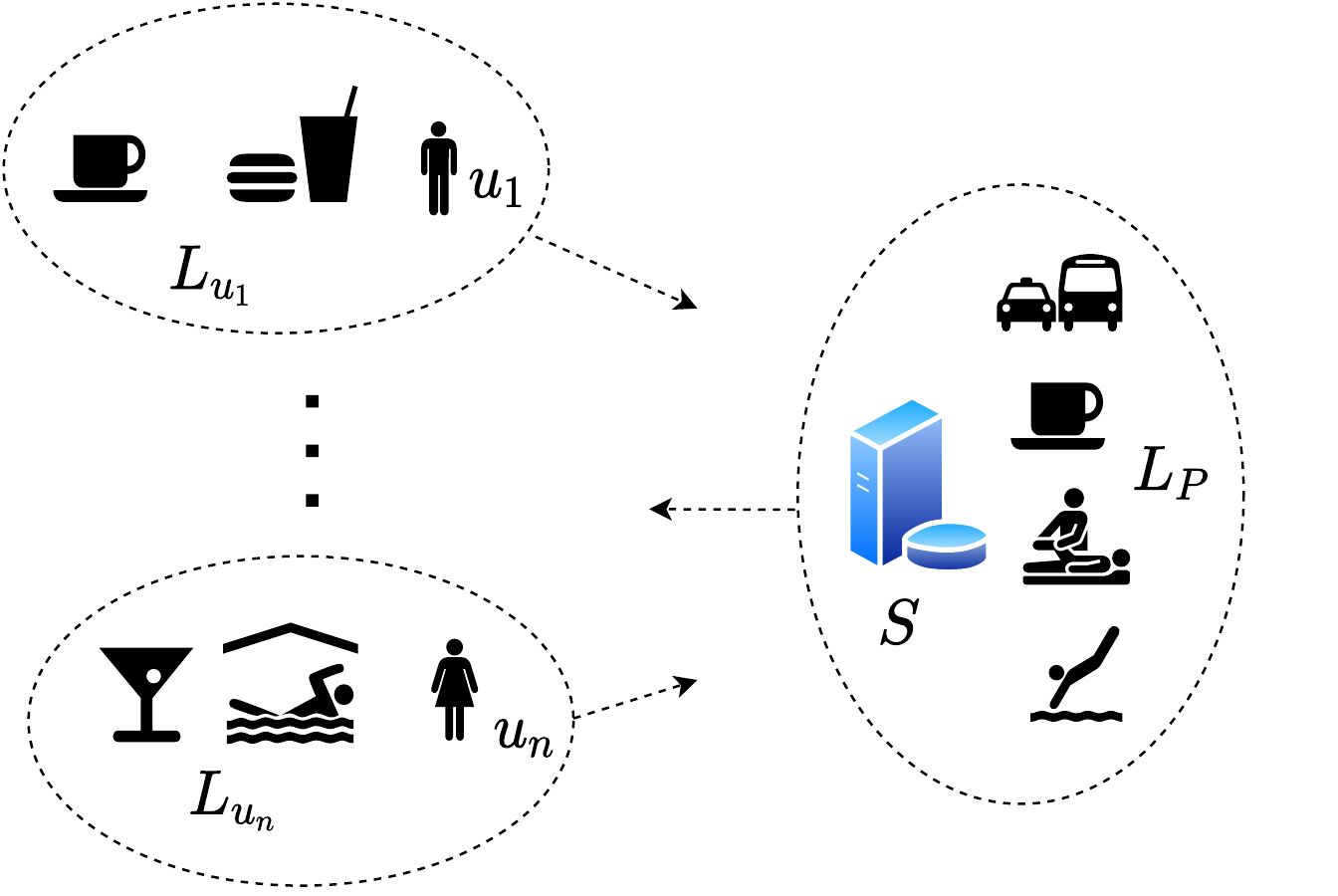}
		   \caption{\small A simplified example of our problem.}
		   \label{subfig:problem}
	   \end{subfigure}
	   \hfill
		\begin{subfigure}[b]{0.49\textwidth}
		   \centering
		   \includegraphics[width=\textwidth]{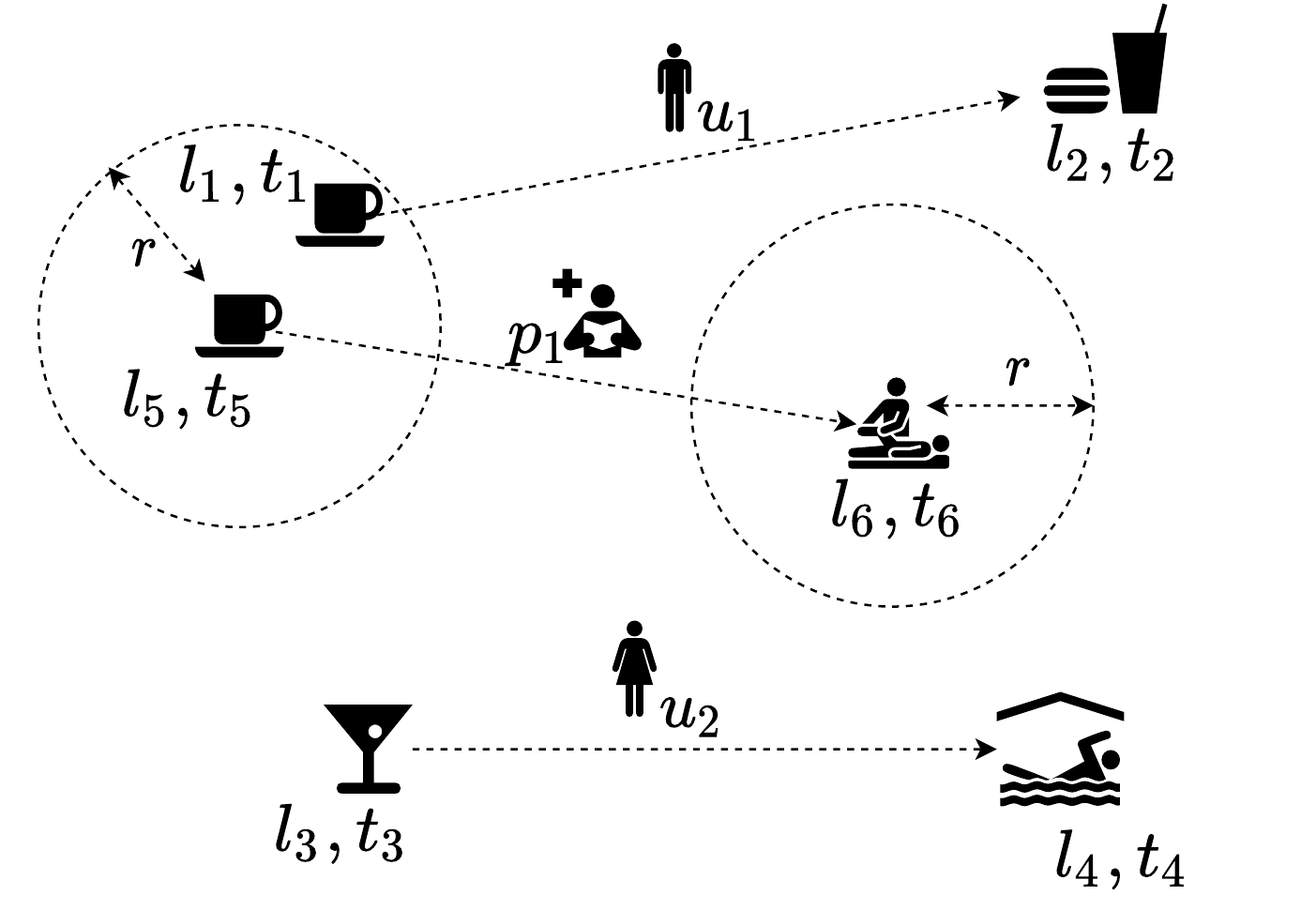}
		   \caption{An example of the close-contact.}
		   \label{subfig:example_close_contact}
	   \end{subfigure}
	\caption{\small \small (a) A simplified example of our problem. $L_{u_i}$ denotes the visited location of a user $u_i$. $L_P$ denotes the collection of all visited locations of all patients. (b) An example of the close-contact. $u_1$ is a close contact, while $u_2$ is not. }	
	\label{fig:problem_and_example}
   \end{figure}

In this paper, we formulate the \textit{\underline{P}rivacy-\underline{P}reserving \underline{C}ontact \underline{T}racing (PPCT)} problem, which is illustrated with the simplified example in \figref{subfig:problem}. We aim to correctly identify whether each user $u_i$ is a close contact or not. If a user $u_i$ co-visited some location with some patient under some spatial and temporal constraints (\eg co-visit the same location with 5 meters apart, and within a time window of 2 days), then our proposed solution should correctly identify the user $u_i$ as a close contact. The key challenge of PPCT is that there is a strong privacy requirement: during the entire process of contact tracing, the private information (the visited locations of users, together with the timestamps) is strictly protected from the access of other parties. 

The challenge of PPCT goes beyond the strong privacy requirement. On one hand, the contact tracing application requires a high level of accuracy. Taking COVID-19 as an example, some countries spend tremendous amount of resources to identify potential close contacts of patients, and sometimes a whole region of citizens need to go through mandated testing after only one case of patient was found in the area. Thus, our proposed solution needs to avoid false negatives as far as possible, maintaining a reasonably high recall, if not close to 100\%. On the other hand, we aim to offer an efficient solution to the PPCT problem, as traditional Secure Multiparty Computation (MPC) techniques usually induce unpractical running time. As our experimental results show, a na\"ive MPC solution requires more than 5 days  for 1 million users, which is the population of a modern city. Such long running time prohibits its daily execution, because it could not even terminate within 24 hours. 

To tackle the aforementioned challenges, we propose a novel technical framework ContactGuard. ContactGuard improves the efficiency of MPC operations by selecting only small subsets of users' locations to compare with the patients. The subset selection process is assisted by an efficient privacy-preserving mechanism -- Geo-Indistinguishability (Geo-I) for Location-based services (LBS) systems. The basic idea is that each user perturbs his/her true visited locations with Geo-I and submits only the perturbed locations to the server, where the patients' locations are stored. Using the perturbed locations of the user, the server compares them with the patients' locations, and then informs the user about which locations are more ``risky'', because they are closer to the patients. After that, the user could use MPC only on the high-risk locations, to largely reduce the computational overhead.

To summarize, we make the following contributions in this paper:
 
\begin{itemize}
	\setlength\itemsep{0.1em}

	\item[\textbullet] We formally define an important problem, the Privacy-Preserving Contact Tracing (PPCT) problem in \secref{sec:problem}, which addresses the privacy issue in trajectory-based contact tracing. 
	
	\item[\textbullet] We propose a novel solution ContactGuard in \secref{sec:contactGuard}. It combines the strengths of two different privacy-preserving paradigms: differential privacy and secure multiparty computation. The running time is improved significantly because it selects only a small subset of locations to compare with the patients during the MPC operation. 
	
	\item[\textbullet] We conduct extensive experiments to validate the effectiveness and efficiency of ContactGuard in \secref{sec:experiment}. It runs up to 2.6$\times$ faster than the MPC baseline, with no sacrifice in terms of the accuracy of contact tracing. For moderate privacy budget settings for each user, ContactGuard obtains close to 100\% recall and 100\% precision.  

\end{itemize}
In addition, we review related works in \secref{sec:relatedWork} and conclude in \secref{sec:conclusion}. 
\section{Problem Definition}
\label{sec:problem}
In this section, we introduce some basic concepts, the adversary model, and a formal definition of the Privacy-Preserving Contact Tracing (PPCT) problem.

\subsection{Basic Concepts}
\label{subsec:basic_concept}

\begin{definition}[Location]\label{def:location}
A location $l=(x,y)$ represents a 2-dimensional spatial point with the coordinates $(x,y)$ on an Euclidean space. 
\end{definition}

\begin{definition}[Trajectory]\label{def:visited_locations}
	A trajectory is a set of (location, timestamp) tuples indicating the visited location and time that the location is visited. A trajectory $L=\{(l_1,t_1), \ldots, (l_{|L|}, t_{|L|})\}$, where $|L|$ is the number of visited locations in the trajectory. 
	\end{definition}

Examples: let location $l_1 = (300, 500)$ and $l_2$ be another location $l_2=(200, 200)$. An example of trajectory $L_{example} = \{(l_1, $ 2020-06-10 12:28:46)$, (l_2, $ 2020-06-10 17:03:24$)\}$ indicates that location $l_1$ is visited at timestamp 2021-06-10 12:28:46 and location $l_2$ is visited at timestamp 2021-06-10 17:03:24. The size of $L_{example}$ is 2, \ie $|L_{example}|=2$. 

Here, the definition of \textit{trajectory} %\cite{mamoulis2011spatial}
includes both short sequences (\eg a user leaves home in the morning, moves to the working place, and comes back home at night) and longer sequences across multiple days. In our problem setting, in order to trace contacts, the trajectory of a user contains the union of all visited locations of a particular user in the incubation period of the disease (\eg past 14 days for COVID-19 \cite{doi:10.1056/NEJMoa2002032,ferretti2020quantifying}).

\begin{definition}[Users]\label{def:test_user}
	A set $U$ denotes all $n$ users. Each user $u \in U$ is associated with a trajectory $L_u=\{(l_1,t_1), \ldots, (l_{|L_u|}, t_{|L_u|})\}$, following \defref{def:visited_locations},  indicating the locations the user $u$ visited and the time of the visits. 
\end{definition}

Each user needs to be checked against the patients to see whether the user is a contact of some patient or not. The definition of a patient is similar to the one of the user, and we define them separately for easier illustration in the contact tracing problem. 

\begin{definition}[Patients]\label{def:patient}
	A set $P$ denotes all $m$ patients. Each patient $p\in P$ has a trajectory $L_p = \{(l_1,t_1), \ldots, (l_{|L_p|}, t_{|L_p|})\}$, following \defref{def:visited_locations},  indicating the locations the patient $p$ visited and the time of the visits. The set $L_P = L_{p_1} \cup L_{p_2} \cup \ldots \cup L_{p_m}$ is a union of trajectories of all patients $p_1,\ldots,p_m$.
\end{definition}

In our problem setting, we use $L_P$ to denote the aggregated set of all trajectories of patients. It is an important notion because a user $u \in U$ is defined as a \textit{contact} (see a more formal definition in \defref{def:close_contact}) if the user $u$'s trajectory $L_u$ overlaps with some location in $L_P$. It is not necessary to identify which specific patient $p\in P$  leads to the contact. Indeed, as we will explain further in our adversary model and system settings in \secref{subsec:adversary_model}, it suffices to store the union of all trajectories $L_P$ of the patients, without distinguishing each patient's individual trajectories, and it also enhances privacy protection.  

\begin{definition}[Contacts]\label{def:close_contact}
	Given a distance threshold $r$, and a time difference threshold $\delta$, a user $u \in U$ is called the \textbf{contact} if
	the user $u$ has visited a location $l_u$ that is within $r$ distance to some visited location $l_p$ of some patient $p\in P$, and the time difference between the two visits is within $\delta$, i.e.,
	\begin{equation}
		\exists (l_u, t_u) \in L_u \; \exists (l_p, t_p) \in L_P \; ((d_s(l_u, l_p) \leq r) \land (d_t(t_p, t_u) \leq \delta))
	\end{equation}
	where the function $d_s(l_u,l_p)$ represents the \textit{spatial distance} -- the Euclidean distance between locations $l_u$ and $l_p$. $d_t(t_u, t_p)$ represents the \textit{temporal difference} -- the time difference in seconds from an earlier timestamp $t_p$ to a later timestamp $t_u$. 
\end{definition}

We give a toy example in \expref{example:close-contact} to better illustrate the concept of contacts.  

\begin{example} \label{example:close-contact}
As shown in \figref{subfig:example_close_contact}, there is one patient $p_1$ and two users $u_1$-$u_2$. User $u_1$ has a trajectory $L_{u_1} = \{(l_1, t_1), (l_2, t_2)\}$. User $u_2$ has a trajectory $L_{u_2} = \{(l_3, t_3), (l_4, t_4)\}$. The set of patients $P$ (only contains one single patient $p$) has the trajectory $L_P=\{(l_5, t_5), (l_6, t_6)\}$. The locations are shown on the figure, and let us further specify the time. Let $t_1 = $ 2021-06-10 11:00:00 and $t_5 = $ 2021-06-10 10:00:00, indicating that patient $p$ visits the location $l_5$ an hour earlier than the time when user $u$ visits $l_1$. 

Based on \defref{def:close_contact}, let $r$ be the distance shown in the figure, and the time difference threshold be $\delta=$ 2 hours, then the user $u_1$ is a contact, since $d_s(l_1, l_5) \leq r$, \ie $u_1$ has visited $l_1$ and $l_1$ is within distance $r$ to patient $p_1$'s visited location $l_5$, and $d_t(t_1, t_5) = $ 1 hour $ \leq \delta$, \ie the time difference between the two visits is within $\delta$ ($\delta=$ 2 hours). 

On the other hand, user $u_2$ is not a close-contact, because her visited locations, $l_3$ and $l_4$, are not in proximity of any patient's visited locations. 
\end{example}

\subsection{Adversary Model}
\label{subsec:adversary_model}

There are two major roles in our application: the users (the clients) and the government (the server). All trajectories by the patients (represented as $L_P$ as in \defref{def:patient}) are aggregated and stored at the server. The setting follows the real-world situation: governments often collect whereabouts of confirmed patients in order to minimize any further transmission, usually starting with tracing the close contacts of the patients. For example, this is enforced by legislation in Hong Kong\footnote[3]{https://www.elegislation.gov.hk/hk/cap599D!en?INDEX\_CS=N}. 

We adopt a semi-honest model as the adversary model in our problem. Adversary could exist on both the client (the user) and the server side (the government). We assume both the client and the server are curious about other parties' private information, but they are not malicious and follow our designed system protocols. The semi-honest model is a commonly adopted setting in recent privacy-preserving LBS related applications \cite{TongJoS17,DBLP:conf/icde/TaoTZSC020}. 

\subsection{Privacy-Preserving Contact Tracing Problem}
\label{subsec:problem}

Based on the concepts and adversary model introduced previously, we now define the Privacy-Preserving Contact Tracing (PPCT) problem as follows.

\begin{definition}[Privacy-Preserving Contact Tracing (PPCT) problem] \label{def:ppct}
	Given a set $U$ of $n$ users , the trajectory $L_u$  for each user $u\in U$, a set $P$ of $m$ patients, a set $L_P$ containing the union of trajectories for all patients $p_1, \ldots, p_m \in P$, a distance threshold $r$, and a time difference threshold $\delta$,
	the PPCT problem needs to correctly identify each user as a contact (see \defref{def:close_contact}) or not. 

	In addition, PPCT has the following privacy requirements:
	\begin{itemize}
        \setlength\itemsep{0.1em}
		\item[\textbullet] The computation process needs to be differentially private / confidential \wrt each user's trajectory $L_u$. 
		\item[\textbullet] The computation process needs to be differentially private / confidential \wrt $L_P$, the union of trajectories of the patients. 
	\end{itemize}
\end{definition}
\section{Methodology}
%\section{Our Solution ContactGuard}
\label{sec:contactGuard}

In this section, we first introduce two baselines, one based on Secure Multiparty Computation (MPC) and the other one based on Geo-Indistinguishability (Geo-I). %, an extension of differential privacy into the location-based services domain). 
Then, we introduce our proposed method -- ContactGuard.

\subsection{Baselines}

MPC and Geo-I are two different paradigms to satisfy the privacy requirements in the PPCT problem, but they differ greatly in terms of efficiency and accuracy. MPC allows the server and the client to use cryptographic primitives to securely compare each visited location without revealing the exact locations of one party to the other parties. However, it induces significant computation overhead. 

Geo-I \cite{andres13} is an extended notion of differential privacy into the spatial domain (strictly speaking, Geo-I is a variant of local differential privacy notion \cite{duchi2013local,cormode2019answering}). 
It is an efficient mechanism to be applied to the location inputs, however Geo-I injects noise into the inputs (protecting the input location by perturbing it to an obfuscated location). Using the perturbed locations to compare trajectories would then lead to errors and reduce the accuracy of contact tracing.

\subsubsection{MPC baseline}
\label{subsec:mpc_baseline}

The MPC baseline is to directly apply existing secure multiparty computation techniques to our problem. In our setting, each time, we treat a client (a user) and the server (which holds the patients' data) as two parties. Each of the party holds their own visited locations, which are private. Then, they use MPC operations to compare their visited locations and check whether the user is a contact or not, according to \defref{def:close_contact}. Note that the baseline computes the \textit{exact} result, which means that it does not lose any accuracy.

\subsubsection{Geo-I baseline}
\label{subsec:geo-i_baseline}

In Geo-I baseline, each user perturbs his/her trajectory $L_u$ to a protected noisy location set $L'_u$ using Geo-I (\algoref{algo:perturb_location_set}). Then, the user submits $L'_u$ to the server. The server directly compares $L'_u$ with the patients' locations $L_P$. If there exists some perturbed location $l' \in L'_u$ that is within distance $r'$ (a system parameter), then $u$ is identified as a contact.

    \begin{algorithm}[H]
      \DontPrintSemicolon
    	\KwIn{$\epsilon, L_u$. }
    	\KwOut{$L'_u$.}
    
        $L'_u := \{\}$ \;
        $\epsilon'= \epsilon / |L_u|$ \label{line:eps_divide}\;
    	\ForEach{$l \in L_u$}{
            $l' = $ Geo-I($\epsilon', l$) \label{line:geo_i}\;
            $L'_u$.insert($l'$) \;
    	}
    
    	\Return{$L'_u$}\;
    	\caption{\texttt{Perturb\_Location\_Set} }
	    \label{algo:perturb_location_set}
    \end{algorithm}

\algoref{algo:perturb_location_set} is a generic method to obtain a set of perturbed locations $L'_u$, given a set of original locations $L_u$ and a total privacy budget of $\epsilon$. At \lineref{line:eps_divide}, the total privacy budget $\epsilon$ is equally divided into $|L_u|$ shares. Thus, each share is $\epsilon'=\epsilon/|L_u|$. Then we perturb each location $l \in L_u$ into a new location $l'$ by the Geo-I mechanism. 
All perturbed locations $l'$ compose the set $L'_u$ of the perturbed locations.

Note that in the original definition for the trajectory $L_u$ in \defref{def:visited_locations}, each visited location $l$ is associated with a timestamp that the location is visited (see Example~\ref{example:close-contact}). We omit the timestamp information for each visited location, and perturb \textit{only} the locations to perturbed locations. We abuse the notation $L_u$ and $L'_u$ to denote only the spatial locations (\defref{def:location}) and the generated perturbed locations. 

\fakeparagraph{Time Complexity	} For \algoref{algo:perturb_location_set} (the client side), the time complexity is $\mathcal{O}(|L_u|)$, linear to the number of visited locations given in the input $L_u$. For the server side, because it compares each location of $L'_u$ against every location of $L_P$, the time complexity is $\mathcal{O}(|L'_u||L_P|) \to \mathcal{O}(|L_u||L_P|)$. For the server $S$ to finish the processing of all users, the total time complexity is thus $\mathcal{O}(\sum_u |L_u||L_P|) \to \mathcal{O}(|U||L_P|\max_u|L_u|)$. 

\fakeparagraph{Privacy Analysis} We defer the privacy analysis of \algoref{algo:perturb_location_set} to \secref{subsec:contact_guard}, as it is the same as the privacy analysis of ContactGuard.

\subsection{Our Solution ContactGuard}
\label{subsec:contact_guard}

In this section, we propose a novel method called ContactGuard. As we have previously introduced, the MPC baseline provides excellent accuracy, because it invokes computationally heavy MPC operations to compare all visited locations of users against the visited locations of the patients. The weakness of the MPC baseline is its poor efficiency. On the other hand, the Geo-I baseline is efficient because each client (user) applies an efficient Geo-I mechanism to change his/her visited locations to perturbed ones. But the inaccurate perturbed locations could lead to false positives/negatives for the PPCT problem. 

The ContactGuard aims to combine the advantages of both baselines. On one hand, it uses the exact locations and MPC to determine whether a user is a contact or not to ensure excellent accuracy while providing strong privacy protection. On the other hand, it accelerates the MPC operation by selecting only a subset of users' visited locations to invoke the heavy MPC operations. The selection is done with the help of Geo-I. The overall workflow of ContactGuard is illustrated in \figref{subfig:workflow}. The details are explained next.

\begin{figure}[t]
	\centering \vspace{0ex}
	   \begin{subfigure}[b]{0.49\textwidth}
		   \centering
		   \includegraphics[width=\textwidth]{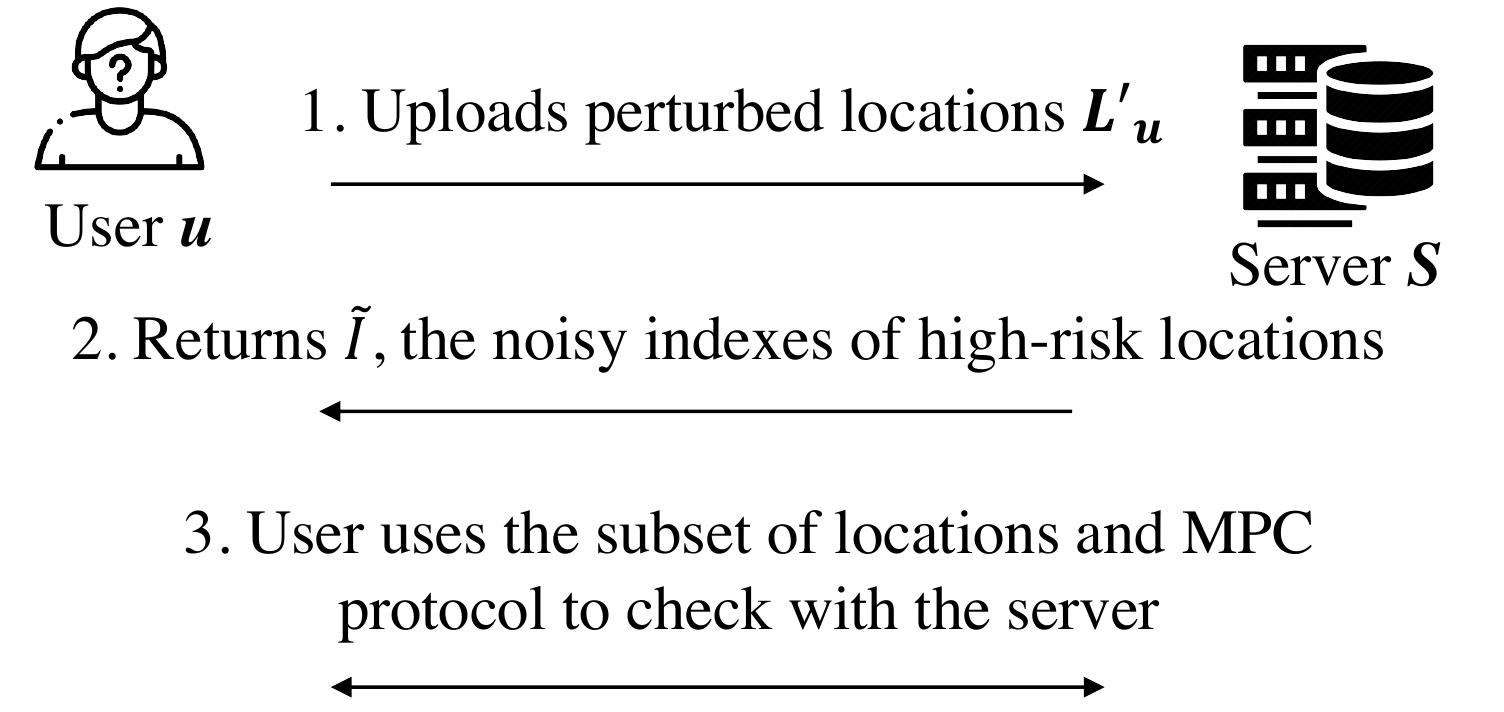}
		   \caption{\small Overall workflow. }
		   \label{subfig:workflow}
	   \end{subfigure}
	   \hfill
		\begin{subfigure}[b]{0.40\textwidth}
		   \centering
		   \includegraphics[width=\textwidth]{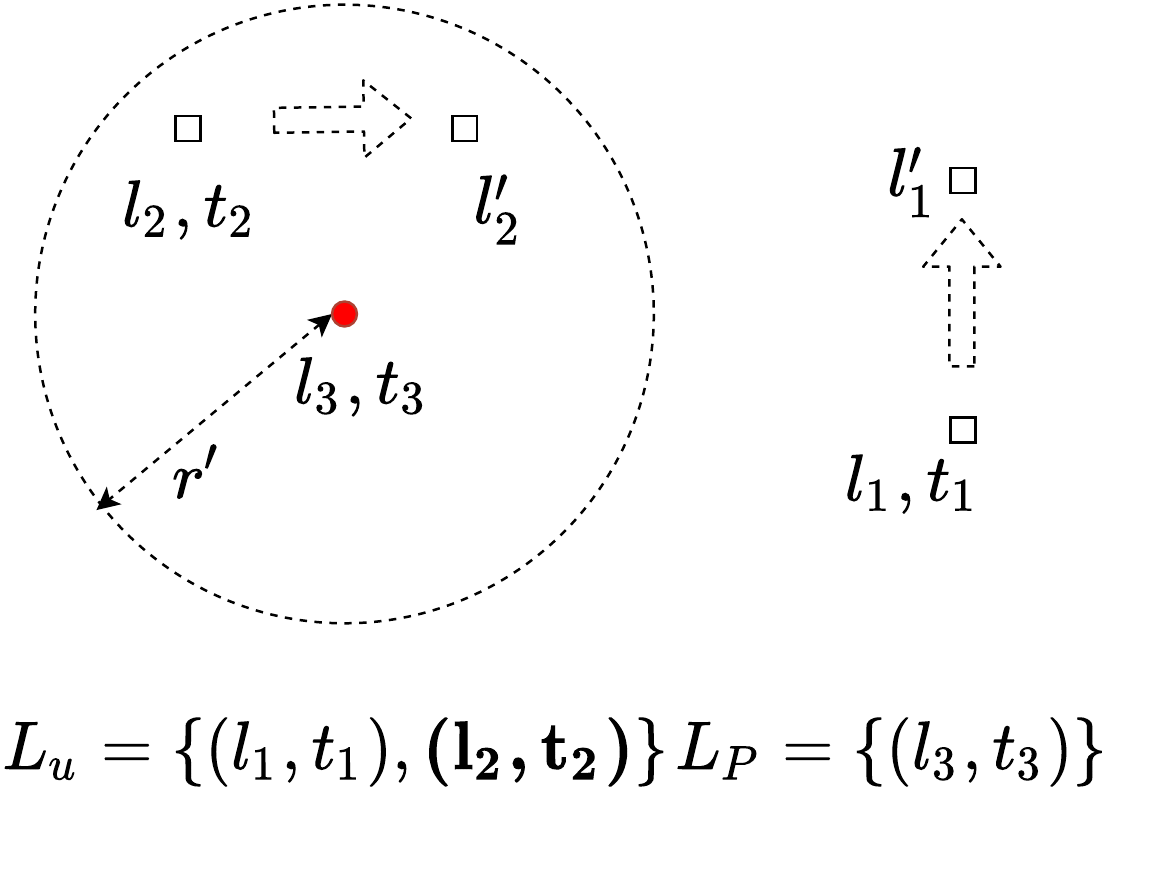}
		   \caption{Subset selection.}
		   \label{subfig:subset}
	   \end{subfigure}
	\caption{\small \small The overall workflow and the subset selection process in Step 3. }	
	\label{fig:workflow_subset}
   \end{figure}

\fakeparagraph{Step 1. Location perturbation}
At this step, the user perturbs his/her visited location set $L_u$ to a perturbed visited location set $L'_u$, and submits $L'_u$ to the server. The steps are similar to the Geo-I baseline (\secref{subsec:geo-i_baseline}). 

Given a privacy budget $\epsilon$ for each user, and the true location set $L_u$, the user generates a perturbed location set $L'_u$ using \algoref{algo:perturb_location_set}. Similar to the Geo-I baseline, the total privacy budget $\epsilon$ is equally divided into $|L_u|$ shares. Each location $l \in L_u$ is then perturbed to a noisy location $l'$ by Geo-I mechanism with a privacy budget of $\epsilon'=\epsilon/|L_u|$. 

Similar to the Geo-I baseline, we abuse notations $L_u$ and $L'_u$ to denote only the locations (without the timestamps) and the perturbed locations. This is slightly different from the original definition for the visited location set $L_u$ in \defref{def:visited_locations}, where each location is associated with a timestamp. We simply ignore all the temporal information during the perturbation process, as a way to provide strong privacy guarantee for the timestamps (as we do not use it at all). The temporal information will be used to check for the temporal constraints for determining a contact in later MPC steps (Step 3). 

\fakeparagraph{Privacy analysis} To understand the level of privacy guarantee of \algoref{algo:perturb_location_set}, we provide the theoretical privacy analysis for it. It extends Geo-I \cite{andres13} from protecting a single location to a set of locations. The brief idea of the extension was mentioned in \cite{andres13}, we provide a concrete and formal analysis here.

First, we extend the privacy definition of Geo-I from a single location to a set of locations. 

\begin{definition} (General-Geo-I) \label{def:geo-i-general}For two tuples of locations $\mathbf{l}=(l_1, \ldots, l_n),\mathbf{l}'=(l'_1, \ldots, l'_n)$, a privacy parameter $\epsilon$, a mechanism $M$ satisfies  $\epsilon$-General-Geo-I iff:
	$$d_{\rho}(M(\mathbf{l}),M(\mathbf{l}'))\leq \epsilon d_{\infty}(\mathbf{l},\mathbf{l}'),$$
\end{definition}
where $d_{\infty}(\mathbf{l},\mathbf{l}') = max_id(l_i, l'_i)$, which is the largest Euclidean distance among all pairs of locations from $\textbf{l}$ and $\textbf{l}'$. 

Different from the Geo-I definition for a single location, % in \defref{def:geo-i}, 
here the distance between two tuples of locations is defined as the largest distance at any dimensions. If a randomized mechanism $M$ satisfies the  General-Geo-I in \defref{def:geo-i-general}, then given the any perturbed location set output $\mathbf{o}=(o_1, \ldots, o_n)$, which is a sequence of perturbed locations generated by $M$, no adversary could distinguish whether the perturbed locations are generated from true location set $\mathbf{l}$ or from $\mathbf{l}'$. The ratio of the probabilities of producing the same output $\mathbf{o}$ from the true location set $\mathbf{l}$ or its neighboring input $\mathbf{l}'$ is bounded by $\epsilon d_{\infty}(\mathbf{l},\mathbf{l}')$. 

As a special case, when there is only one location in $\mathbf{l}$, \defref{def:geo-i-general} becomes the original definition in Geo-I%\defref{def:geo-i}
. We start the privacy analysis by showing the composition theorem for the General-Geo-I. 

\begin{lemma} Let $K_0$ be a mechanism satisfying $\epsilon_1$-Geo-I and $K_1$ be another mechanism satisfying $\epsilon_2$-Geo-I. For a given 2-location tuple $\mathbf{l}=(l_1, l_2)$, the combination of $K_0$ and $K_1$ defined as $K_{1,2}(\mathbf{l})=(K_1(l_1), K_2(l_2))$ satisfies $(\epsilon_1+\epsilon_2)$-General-Geo-I (\defref{def:geo-i-general}). 
\label{lemma:geo-i_compose}
\end{lemma}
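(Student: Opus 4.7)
The plan is to treat this as a Geo-I analogue of the basic sequential composition theorem for differential privacy, where the new ingredient is simply that the aggregate ``distance budget'' is measured under $d_\infty$ rather than a sum. I will fix two arbitrary 2-location tuples $\mathbf{l}=(l_1,l_2)$ and $\mathbf{l}'=(l_1',l_2')$, together with an arbitrary joint output $\mathbf{o}=(o_1,o_2)$ (or, more rigorously, a measurable event $S_1 \times S_2$ in the output space), and bound the multiplicative distance between $K_{1,2}(\mathbf{l})$ and $K_{1,2}(\mathbf{l}')$ directly from the definitions.

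The first step is to exploit the fact that $K_1$ and $K_2$ are invoked independently on the two coordinates, so the joint density factors as
\[
\Pr[K_{1,2}(\mathbf{l})=\mathbf{o}] \;=\; \Pr[K_1(l_1)=o_1]\cdot\Pr[K_2(l_2)=o_2],
\]
and similarly for $\mathbf{l}'$. Applying $\epsilon_1$-Geo-I to the first factor gives $\Pr[K_1(l_1)=o_1] \le e^{\epsilon_1 d(l_1,l_1')}\Pr[K_1(l_1')=o_1]$, and applying $\epsilon_2$-Geo-I to the second factor gives the analogous bound with $\epsilon_2$ and $d(l_2,l_2')$. Multiplying the two inequalities yields
\[
\Pr[K_{1,2}(\mathbf{l})=\mathbf{o}] \;\le\; e^{\epsilon_1 d(l_1,l_1')+\epsilon_2 d(l_2,l_2')}\,\Pr[K_{1,2}(\mathbf{l}')=\mathbf{o}].
\]

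The next step is the only place where the $d_\infty$ metric enters: by the very definition $d_\infty(\mathbf{l},\mathbf{l}')=\max_i d(l_i,l_i')$, one has $d(l_i,l_i')\le d_\infty(\mathbf{l},\mathbf{l}')$ for $i=1,2$, hence
\[
\epsilon_1 d(l_1,l_1')+\epsilon_2 d(l_2,l_2') \;\le\; (\epsilon_1+\epsilon_2)\,d_\infty(\mathbf{l},\mathbf{l}').
\]
Substituting back and taking the logarithm of the resulting ratio of probabilities gives exactly $d_\rho\bigl(K_{1,2}(\mathbf{l}),K_{1,2}(\mathbf{l}')\bigr)\le(\epsilon_1+\epsilon_2)d_\infty(\mathbf{l},\mathbf{l}')$, which is the desired $(\epsilon_1+\epsilon_2)$-General-Geo-I guarantee. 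To be fully rigorous, the event $\{\mathbf{o}\}$ should be replaced by an arbitrary measurable rectangle $S_1\times S_2$ and then by an arbitrary measurable $S$ via a standard Fubini/monotone-class argument, but this is routine.

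I do not expect any real obstacle; the only subtle point is the replacement of the additive composition bound $\epsilon_1 d(l_1,l_1')+\epsilon_2 d(l_2,l_2')$ by the cleaner $(\epsilon_1+\epsilon_2)\,d_\infty(\mathbf{l},\mathbf{l}')$, which is what motivates the choice of the $d_\infty$ metric in Definition~\ref{def:geo-i-general} in the first place. If tightness is desired one could note the inequality is only tight when $d(l_1,l_1')=d(l_2,l_2')$, but the looseness is benign and yields a clean composition statement that extends by induction to the $|L_u|$-fold composition actually used in \algoref{algo:perturb_location_set}.
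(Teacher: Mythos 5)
Your proposal is correct and follows essentially the same route as the paper's own proof: factor the joint output probability by independence, apply the single-location $\epsilon_i$-Geo-I bound to each factor, and relax each $d(l_i,l_i')$ to $d_\infty(\mathbf{l},\mathbf{l}')$ to obtain the $(\epsilon_1+\epsilon_2)$ bound. The only difference is your added (and welcome) remark on extending from point outputs to measurable events, which the paper omits.
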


\begin{proof}
Let a potential output of the combined mechanism, $K_{1,2}(\mathbf{l})$,  be $\mathbf{o}=(o_1, o_2)$, indicating a tuple of two perturbed locations. The probability of generating $\mathbf{o}$ from two input location sets $\mathbf{l}=(l_1, l_2)$ and $\mathbf{l}'=(l'_1, l'_2)$ are $\Pr[K_{1,2}(\mathbf{l})=\mathbf{o}]$ and $\Pr[K_{1,2}(\mathbf{l}')=\mathbf{o}]$, respectively. Then, we measure the ratio of the two probabilities:

\begin{align}
	 \frac{\Pr[K_{1,2}(\mathbf{l})=\mathbf{o}]}{\Pr[K_{1,2}(\mathbf{l}')=\mathbf{o}]}  & = \frac{\Pr[K_1(l_1)=o_1]\cdot \Pr[K_2(l_2)=o_2]}{\Pr[K_1(l'_1)=o_1]\cdot \Pr[K_2(l'_2)=o_2]} \label{eq:proof_1}\\
	& = \frac{\Pr[K_1(l_1)=o_1]}{\Pr[K_1(l'_1)=o_1]}\cdot \frac{\Pr[K_2(l_2)=o_2]}{\cdot \Pr[K_2(l'_2)=o_2]} \label{eq:proof_2}\\
	& \leq e^{\epsilon_1 \cdot d(l_1, l'_1)} \cdot e^{\epsilon_2 \cdot d(l_2, l'_2)} \label{eq:proof_3}\\
	& \leq e^{\epsilon_1 \cdot d_{\infty}(\mathbf{l}, \mathbf{l}')} \cdot e^{\epsilon_2 \cdot d_{\infty}(\mathbf{l}, \mathbf{l}')} \label{eq:proof_4}\\
	& = e^{(\epsilon_1 + \epsilon_2) d_{\infty}(\mathbf{l}, \mathbf{l}')} \label{eq:proof_5}
\end{align}

The end result in \equref{eq:proof_5} shows that $K_{1,2}$ satisfies $(\epsilon_1+\epsilon_2)$-General-Geo-I (\defref{def:geo-i-general}).

\end{proof}

\begin{theorem}\label{theo:privacy_location_perturb} \algoref{algo:perturb_location_set}  satisfies $\epsilon$-General-Geo-I (\defref{def:geo-i-general}). 
\label{theo:general_geo_i}
\end{theorem}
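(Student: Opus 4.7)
The plan is to obtain Theorem~\ref{theo:privacy_location_perturb} as a direct consequence of the composition result in Lemma~\ref{lemma:geo-i_compose}, extended from two locations to $|L_u|$ locations by induction. Algorithm~\ref{algo:perturb_location_set} is precisely an independent parallel composition: each true location $l_i \in L_u$ is processed by its own independent invocation of the Geo-I mechanism with budget $\epsilon' = \epsilon/|L_u|$, and the outputs are concatenated into the tuple $L'_u$. So the theorem essentially asks us to carry the bookkeeping of Lemma~\ref{lemma:geo-i_compose} through an $n$-fold product.

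Concretely, I would proceed as follows. First, let $n = |L_u|$ and let $K_i$ denote the Geo-I invocation on $l_i$ at Line~\ref{line:geo_i}, each satisfying $\epsilon'$-Geo-I. Define the joint mechanism $K_{1,\dots,n}(\mathbf{l}) = (K_1(l_1), \dots, K_n(l_n))$, which is exactly the output distribution of Algorithm~\ref{algo:perturb_location_set}. I would then prove by induction on $n$ that $K_{1,\dots,n}$ satisfies $(\sum_{i=1}^n \epsilon_i)$-General-Geo-I: the base case $n=2$ is Lemma~\ref{lemma:geo-i_compose}, and the inductive step regroups $K_{1,\dots,n+1}$ as the combination of $K_{1,\dots,n}$ (by hypothesis $(\sum_{i \le n}\epsilon_i)$-General-Geo-I) with $K_{n+1}$ ($\epsilon_{n+1}$-Geo-I), mirroring the algebra of equations~(\ref{eq:proof_1})--(\ref{eq:proof_5}). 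The only subtle point to flag is that the distance bound $d(l_i, l'_i) \le d_{\infty}(\mathbf{l}, \mathbf{l}')$ still holds for every coordinate when the tuples have length $n$, so the per-coordinate bound $e^{\epsilon_i d(l_i, l'_i)} \le e^{\epsilon_i d_{\infty}(\mathbf{l},\mathbf{l}')}$ used at step~(\ref{eq:proof_4}) goes through unchanged.

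To finish, I would instantiate $\epsilon_i = \epsilon' = \epsilon/|L_u|$ for every $i$, as set at Line~\ref{line:eps_divide}. Summing yields $\sum_{i=1}^{|L_u|} \epsilon/|L_u| = \epsilon$, so Algorithm~\ref{algo:perturb_location_set} satisfies $\epsilon$-General-Geo-I as claimed.

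I do not expect any substantial obstacle. The one thing worth being explicit about is the independence of the randomness used in each call to Geo-I at Line~\ref{line:geo_i}, since the proof of Lemma~\ref{lemma:geo-i_compose} implicitly uses this to factor the joint probability $\Pr[K_{1,2}(\mathbf{l}) = \mathbf{o}]$ into a product. Provided the implementation uses fresh noise per location (which is the standard assumption and matches the description of the algorithm), the factorization and hence the induction are immediate.
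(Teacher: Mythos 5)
Your proposal is correct and follows essentially the same route as the paper's own proof: it invokes the composition result of Lemma~\ref{lemma:geo-i_compose} and extends it by induction to $|L_u|$ independent Geo-I invocations, each with budget $\epsilon'=\epsilon/|L_u|$, summing to $\epsilon$. Your explicit remarks on the independence of the per-location randomness and on the bound $d(l_i,l'_i)\le d_\infty(\mathbf{l},\mathbf{l}')$ for arbitrary tuple length are careful touches the paper leaves implicit, but they do not change the argument.
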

\begin{proof}

	We use the composition theorem of Geo-I in \lemref{lemma:geo-i_compose} to show that \algoref{algo:perturb_location_set} composes linearly and consumes a total privacy budget of $\epsilon$ over the entire set of locations. 

	At \lineref{line:geo_i} of \algoref{algo:perturb_location_set}, for each location $l\in L_u$,  it is perturbed to $l'$ with a privacy budget $\epsilon' = \epsilon_u/|L_u|$. Then, for each location, we achieve $\epsilon'$-Geo-I. If there are two locations in $L_u$, and because each location is perturbed independently, we achieve $\epsilon' + \epsilon' = 2\epsilon'$-General-Geo-I. Then, by induction, for $|L_u|$ locations, we achieve $\sum_{l\in L_u} \epsilon' = \epsilon' \cdot |L_u| = \frac{\epsilon}{|L_u|}\cdot |L_u| =\epsilon$-General-Geo-I. 

\end{proof}

\fakeparagraph{Step 2. Subset selection}
After Step 1, each user has a set of perturbed locations $L'_u$ and submits it to the server. At Step 2, subset selection, upon receiving $L'_u$, the server selects a subset of high-risk locations based on the perturbed locations. This step is done at the server side, where the patients' true locations are stored. The server returns to the client the index of the selected high-risk locations (\eg if the 2nd location and the 4th location are high-risk locations, then the server returns $\{2, 4\}$) to indicate which location out of $L_u$ are close to the patients. In order to protect the privacy of the patients, randomized response is used to perturb the indexes of high-risk locations to noisy indexes. 

On the server side, the server does not know the true locations nor the timestamps of the visits, which are stored in $L_u$ on the client side (the user). However, the server has access to the true locations visited by the patients, which are denoted as $L_P$. $L_P$ are the basis to judge whether a user location is high-risk or not. 

%A detailed running example is deferred to the full report \cite{full-report}. 

\begin{theorem}\label{theo:privacy_subset} The subset selection step satisfies $\epsilon_P$-Local Differential Privacy . 
\end{theorem}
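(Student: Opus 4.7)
The plan is to show that the randomized-response post-processing applied to the high-risk indicator vector provides $\epsilon_P$-local differential privacy with respect to the patients' locations $L_P$. First I would make the mechanism precise: for each of the $|L'_u|$ perturbed locations submitted by the user, the server computes a deterministic binary indicator $b_i = \mathbf{1}[\exists\, l_p \in L_P : d_s(l'_i, l_p) \leq r']$, and then feeds each $b_i$ through randomized response with a per-bit budget $\epsilon'_P = \epsilon_P / |L'_u|$, producing a noisy indicator $\tilde b_i$; the output returned to the user is the vector $(\tilde b_1, \ldots, \tilde b_{|L'_u|})$ (or equivalently the set of indices where $\tilde b_i = 1$).

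The first step of the proof is the single-bit guarantee. For any two neighboring patient datasets $L_P, L_P'$ that induce (possibly different) bits $b$ and $b'$ on a fixed location $l'_i$, standard randomized response with flipping probability $\frac{1}{1+e^{\epsilon'_P}}$ satisfies
\[
\frac{\Pr[\tilde b_i = o \mid b]}{\Pr[\tilde b_i = o \mid b']} \leq e^{\epsilon'_P}
\]
for every possible output $o \in \{0,1\}$. Since the perturbed locations $L'_u$ are treated as public inputs on the server side (they have already been released under Geo-I), conditioning on $L'_u$ is harmless, and each $\tilde b_i$ individually satisfies $\epsilon'_P$-LDP with respect to $L_P$.

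The second step is to compose across all $|L'_u|$ coordinates. Because a single patient's trajectory can in principle affect multiple indicators simultaneously (a patient visiting several hotspots could flip several $b_i$'s at once), parallel composition is unsafe and I would instead apply sequential composition, following the same inductive argument used in Lemma~1 and Theorem~1 for Geo-I: the joint distribution of the output vector factorizes as a product of independent per-coordinate randomizations (the Geo-I noise per bit is drawn independently), so the ratio of joint probabilities telescopes and is bounded by $\prod_{i=1}^{|L'_u|} e^{\epsilon'_P} = e^{\epsilon_P}$. Plugging in $\epsilon'_P = \epsilon_P / |L'_u|$ gives the claimed $\epsilon_P$-LDP bound.

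The main obstacle I expect is making the neighboring-dataset relation for $L_P$ precise and defending the use of sequential (rather than parallel) composition. Concretely, I need to argue that ``neighboring'' at the patient level is the right granularity (as opposed to per-location), and that, because one patient's record can influence many of the $b_i$'s, the worst-case sensitivity forces the $1/|L'_u|$ budget split. A secondary subtlety is confirming that feeding the user $L'_u$ into the indicator computation does not leak extra information about $L_P$: this reduces to the standard post-processing / deterministic-query observation, since $L'_u$ is independent of $L_P$ and the only data-dependent randomness comes from the per-bit randomized response.
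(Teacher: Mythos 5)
Your argument is internally consistent for the mechanism you define, but that mechanism is not the one the paper analyzes, and the difference is the crux. In the paper's \texttt{Subset\_Selection} routine each indicator bit is kept with probability $e^{\epsilon_P}/(e^{\epsilon_P}+1)$ and flipped with probability $1/(e^{\epsilon_P}+1)$ --- the \emph{full} budget $\epsilon_P$ is spent on every coordinate, with no $\epsilon_P/|L'_u|$ split. Correspondingly, the paper's proof does not use sequential composition at all: it fixes a neighboring pair $L_P, L'_P$ differing in a single patient location $l \mapsto l'$, assumes $l$ lies within $r'$ of exactly the $i$-th perturbed user location while $l'$ lies within $r'$ of none of them, so that the two exact indicator vectors differ only in coordinate $i$; the factors contributed by all other coordinates are identical in numerator and denominator and cancel, leaving the single-bit randomized-response ratio $\frac{e^{\epsilon_P}/(e^{\epsilon_P}+1)}{1/(e^{\epsilon_P}+1)} = e^{\epsilon_P}$. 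This is in effect a parallel-composition argument under a neighboring relation in which a change to $L_P$ toggles at most one bit.

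The sensitivity concern you raise --- that one patient location can fall within $r'$ of several of the user's perturbed locations and hence toggle several bits simultaneously --- is real, and it is exactly the case the paper's proof excludes by assumption. If $k>1$ bits can change, the full-budget-per-bit mechanism yields only $k\epsilon_P$-LDP under that neighboring relation; your budget-splitting variant is the standard repair, but it proves the theorem for a different, much noisier mechanism (per-bit flip probability $1/(e^{\epsilon_P/|L'_u|}+1)$, which approaches $1/2$ for long trajectories and would gut the utility of the subset selection). So the two proofs diverge in both the object analyzed and the composition step: you prove $\epsilon_P$-LDP for a conservative mechanism under worst-case sensitivity; the paper proves it for the implemented mechanism under a restricted single-toggle neighboring notion. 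To recover the paper's statement you would need to drop the budget split and instead justify the cancellation of the unaffected coordinates, making the single-toggle assumption explicit --- or argue that it holds, e.g., when the user's perturbed locations are pairwise more than $2r'$ apart.
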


We also provide a detailed running example therein for this step.

\fakeparagraph{Step 3. Accelerated MPC}
After Step 2, the user $u$ receives the set $\tilde{I}$, which contains the indexes of high-risk locations as identified by the server. At Step 3, accelerated MPC, the user uses MPC protocol to check with the server about whether the user is a contact or not, using only the locations indicated by the set $\tilde{I}$. In contrast, the MPC baseline uses every location in $L_u$ to check with the server using MPC protocol. Here, the user uses only a small subset of locations to perform secure computations. 

Our method runs the same secure computation procedure as the MPC baseline. However, it only provides the subset of locations as the inputs to the \texttt{ProtocolIO} object \texttt{io}. The timestamps are also included in the $io$ object, and are used to compare with the patients' trajectories $L_P$. This is different from the Geo-I baseline or Step 1. location perturbation in ContactGuard, which only uses the spatial information (the locations).

\fakeparagraph{Overall Analysis}
Overall, for the server to finish processing all users, it takes $\mathcal{O}(\sum_u |I_u||L_P|) \to \mathcal{O}(|U||L_P|\max_u|I_u|)$, where $I_u$ indicates the subset for each user $u$. 

\fakeparagraph{Privacy analysis} ContactGuard offers end-to-end privacy guarantee \wrt to both $L_u$ and $L_P$.
\fakeparagraph{}

\tabref{tab:comparison_methods} lists the comparison of the two baselines with our proposed ContactGuard. While achieving the same privacy and accuracy, our proposed solution significantly improves the efficiency of the MPC solution. 

\begin{table}
	\centering 
	{\small\scriptsize
		\caption{\small Comparison of baselines and ContactGuard} \label{tab:comparison_methods}
		\begin{tabular}{l|p{3cm}|p{3.2cm}|p{3cm}}
			\hline
			{\bf Methods} & {\bf MPC baseline} & {\bf Geo-I baseline} & {\bf ContactGuard} \\ 
			\hline 
			Accuracy (Recall) & 100\% $\surd$ & 39.0-70\% $\times$ & 86-100\% $\surd$\\
			\hline
			Efficiency & \# of Secure Operations: $\mathcal{O}(|U||L_P|\max_u|L_u|)$ $\times$ & \#of Plaintext Operations: $\mathcal{O}(|U||L_P|\max_u|L_u|)$ $\surd$ & \# of Secure Operations: $\mathcal{O}(|U||L_P|\max_u|I_u|)$ $\surd$\\
			\hline
			Privacy & Confidentiality $\surd$ & Local-$d$ DP $\surd$ & Local-$d$ DP $\surd$\\
			\hline

		\end{tabular}
	}
\end{table}

\section{Experimental Study}
\label{sec:experiment}

In this section, we first introduce the experimental setup in Sec.~\ref{subsec:experimentSetup}. Then, we present the detailed experimental results in Sec.~\ref{subsec:experimentResults}.

\subsection{Experimental setup}
\label{subsec:experimentSetup}

\fakeparagraph{Datasets} We use a real-world dataset Gowalla \cite{DBLP:conf/kdd/ChoML11} and a randomly generated synthetic dataset in our experiments. 

\fakeparagraph{Baselines} We compare our proposed ContactGuard (short as \textbf{CG}) method with the MPC baseline (\secref{subsec:mpc_baseline}, short as \textbf{MPC}) and the Geo-I baseline (\secref{subsec:geo-i_baseline}, short as \textbf{Geo-I}). 

\fakeparagraph{Metrics}
\label{subsec:experimentsMetrics}
We focus on the following metrics:
\begin{itemize}
\setlength\itemsep{0.1em}
    \item Recall: the number of found true close-contacts divided by the total number of close-contacts. 
    \item Precision: the number of found true close-contacts divided by the number of predicted close-contacts. 
    \item $F_1$ score: the harmonic mean of the precision and recall. $F_1=2\cdot \frac{\text{precision}\cdot \text{recall}}{\text{precision}+ \text{recall}}$.
    \item Accuracy: the percentage of correctly classified users. 
    \item Running time: the running time of the method in seconds. 
\end{itemize}

Recall is considered as the most important metric in our experiments, as it measures the capability of the tested method for identifying potential close-contacts out of the true close-contacts, which is crucial in the application scenario of contact tracing of infectious disease. 

\fakeparagraph{Control variables}
We vary the following control variables in our experiments to test our method. Default parameters are in boldface. The privacy budget setting adopts the common setting with the real-world deployments (Apple and Google) \cite{domingo2021limits}.
\begin{itemize}
\setlength\itemsep{0.1em}
    \item  Number of users: $|U| \in [ \textbf{200}, 400, 800, 1600]$. 
    \\Scalability test: $|U| \in [10K, 100K, 1M]$. 
    \item Privacy budget for each user: $\epsilon \in [2.0, 3.0, \textbf{4.0}, 5.0]$.  
    \item Privacy budget for the patients: $\epsilon_P \in [2.0, 3.0, \textbf{4.0}, 5.0]$.

\end{itemize}

\subsection{Experimental Results}
\label{subsec:experimentResults}

Our results show that the proposed solution ContactGuard achieves the best effectiveness/efficiency tradeoff, reducing the running time of the MPC baseline by up to $2.85 \times$. The efficient Geo-I baseline, however, only provides poor effectiveness.  Meanwhile, ContactGuard maintains the same level of effectiveness (measured by recall, precision, $F_1$ score and accuracy) as the one of the MPC baseline. The scalability tests also show that ContactGuard accelerates the MPC baseline significantly. When the number of users is large (\eg 500K), our ContactGuard supports daily execution, which is desirable in real-world application, while the MPC baseline fails to terminate within reasonable time (\eg 24 hours).

\fakeparagraph{Effectiveness vs. Efficiency tradeoff}
\figref{fig:exp_recall_prec_time} demonstrates the effectiveness and efficiency tradeoff of different methods. The y-axis are the running time of different methods (MPC baseline, Geo-I baseline and our proposed ContactGuard), and the x-axis are the respective measures for effectiveness (recall, precision, $F_1$ and accuracy). 

As we could see from \figref{fig:exp_recall_prec_time}, the MPC baseline is an \textit{exact} method, which obtains 100\% in all measures of effectiveness. However, the high accuracy comes with prohibitive running time. It runs for 99.35 seconds when the number of users is 200 ($|U|=200$) and 194.99 seconds for 400 users ($|U|=400$). 

As a comparison, the Geo-I method is efficient, but comes with a significant decrease in effectiveness. For the setting $|U|=200, \epsilon=4.0$, Geo-I obtains about 70\% recall and merely 46.7\% precision. For the setting $|U|=400, \epsilon=3.0$, the recall drops to 39.0\% and the precision drops to 42.0\%. 

Our proposed solution ContactGuard achieves the best effectiveness/efficiency tradeoff. For the setting $|U|=200, \epsilon=4.0$ (\figref{fig:exp_recall_prec_time}), CG obtains the same level of effectiveness as the one of the MPC baseline, achieving 100\% in recall, precision and accuracy. 

As compared to the Geo-I baseline, CG method obtains 2.28$\times$, 2.38$\times$, and 1.24$\times$ improvement in terms of recall, precision and accuracy, respectively. On the other hand, when we compare to the \textit{exact} MPC baseline, the CG method is about 2.5$\times$ faster.  

\begin{figure}[t]
    \centering
       \begin{subfigure}[b]{0.23\textwidth}
           \centering
           \includegraphics[width=\textwidth]{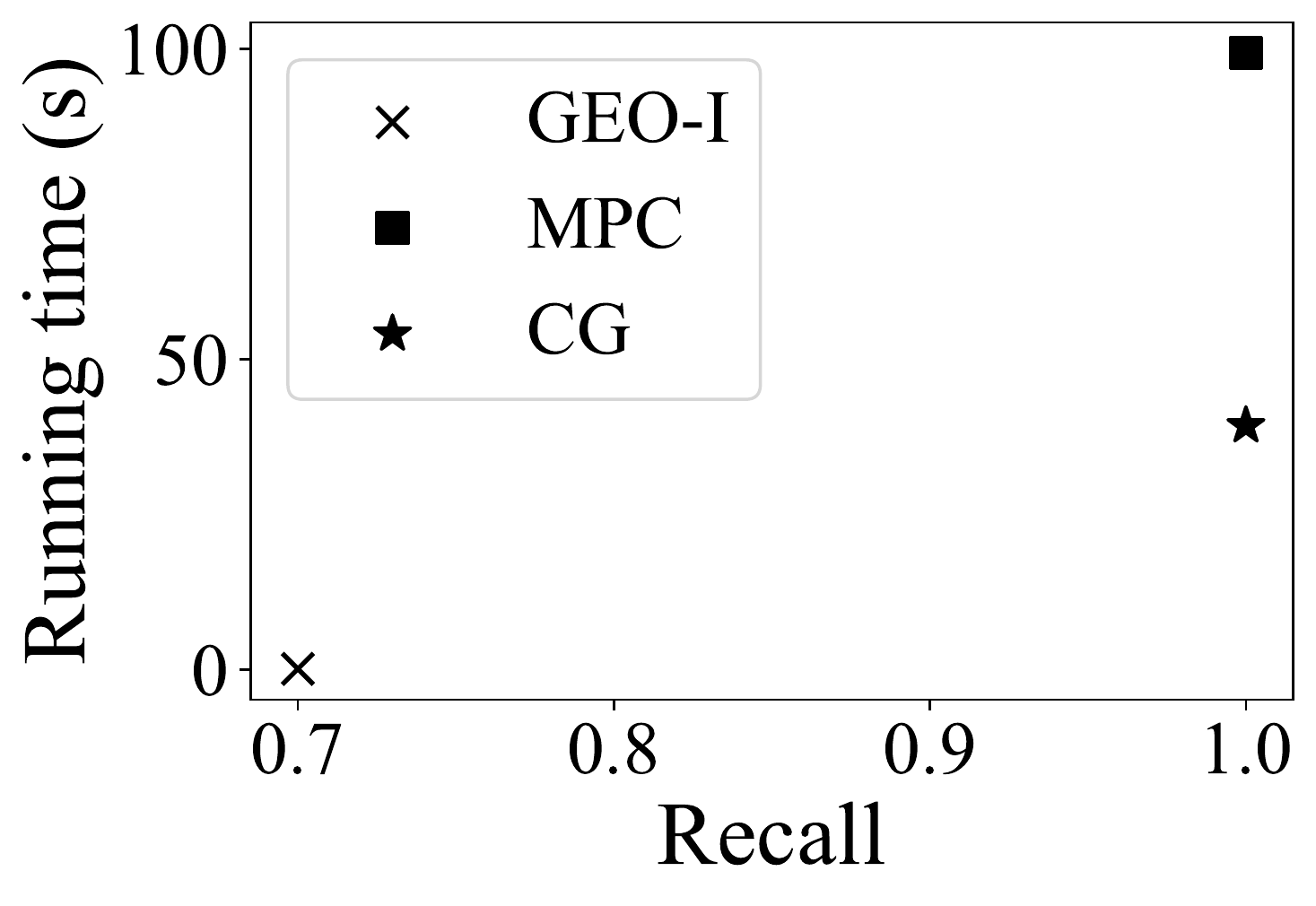}
           \caption{Recall}
           \label{subfig:exp_recall_time}
       \end{subfigure}
       \hfill
        \begin{subfigure}[b]{0.23\textwidth}
           \centering
           \includegraphics[width=\textwidth]{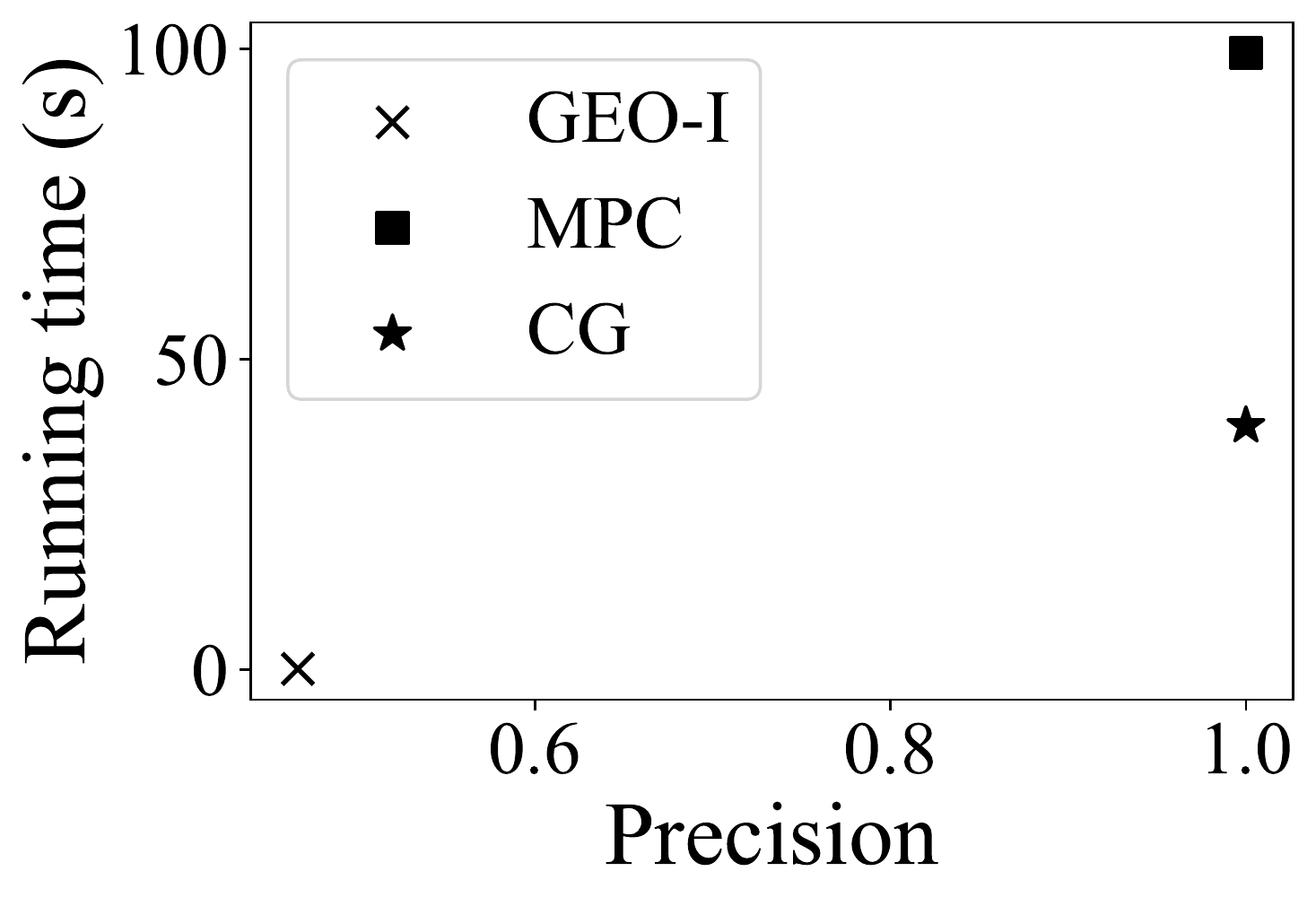}
           \caption{Precision}
           \label{subfig:exp_prec_time}
       \end{subfigure}
       \hfill
        \begin{subfigure}[b]{0.23\textwidth}
           \centering
           \includegraphics[width=\textwidth]{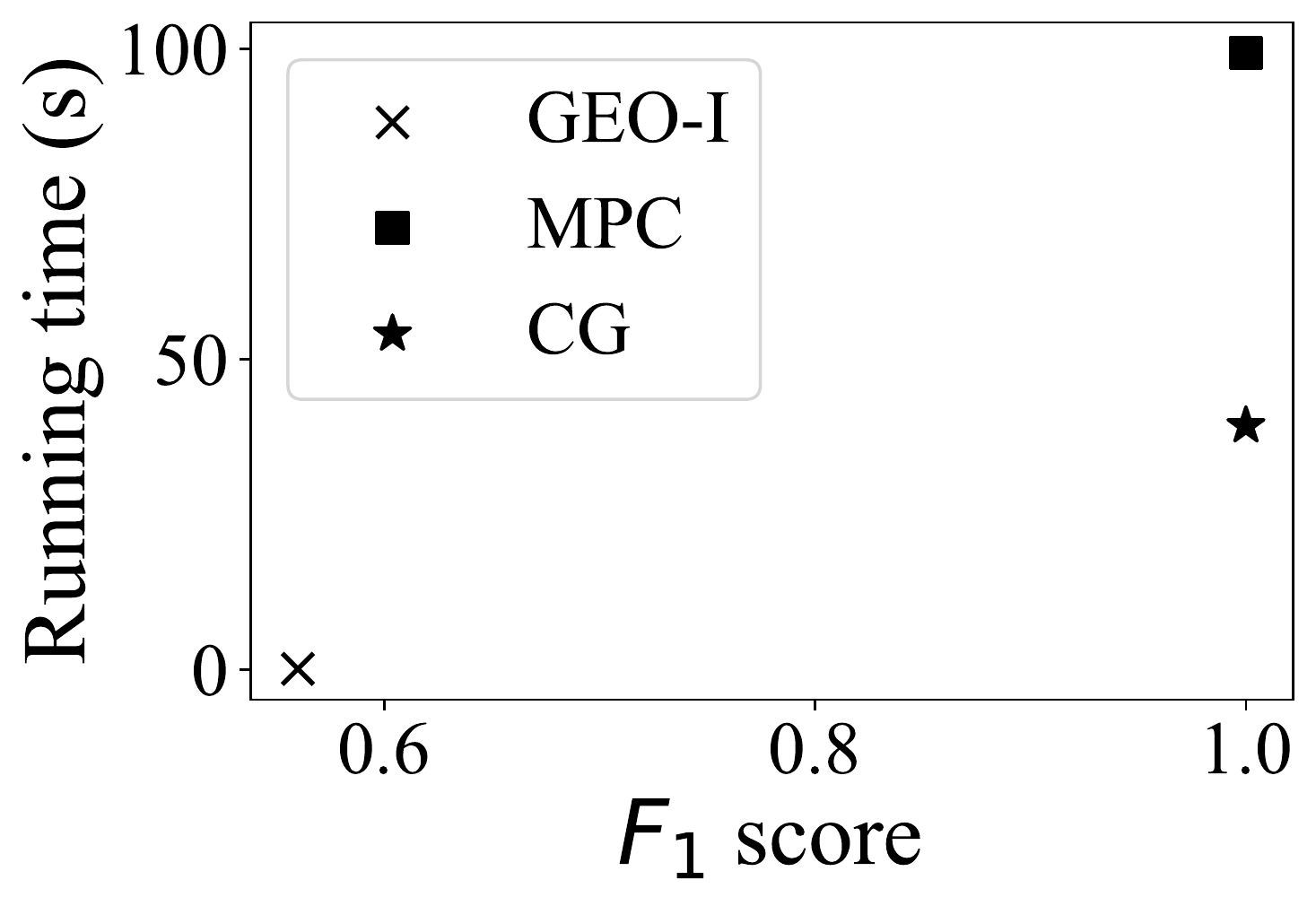}
           \caption{$F_1$ score}
           \label{subfig:exp_f1_time}
       \end{subfigure}
       \hfill
        \begin{subfigure}[b]{0.23\textwidth}
           \centering
           \includegraphics[width=\textwidth]{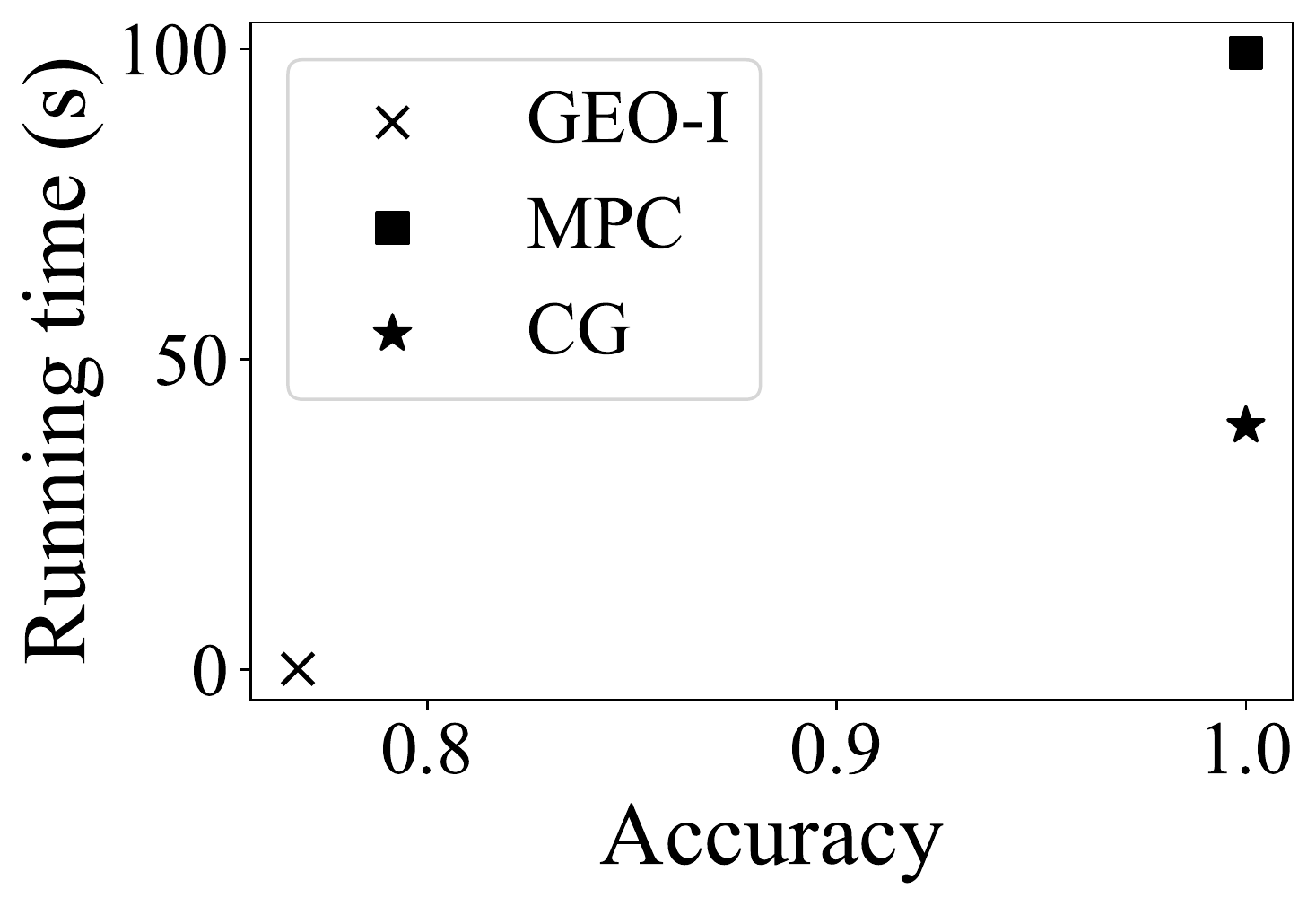}
           \caption{Accuracy}
           \label{subfig:exp_acc_time}
       \end{subfigure}
    \caption{\small Effectiveness (Recall/Precision/$F_1$/Accuracy) vs. Running time trade-off of different methods on the Gowalla dataset.  $ |U|=200, \epsilon=4.0.$}\label{fig:exp_recall_prec_time}
   \end{figure}

\fakeparagraph{Effect of control variables}
Then, we test the effectiveness across different input sizes ($|U|$) and different values of the privacy budget ($\epsilon$). 

Varying $|U|$: \figref{fig:exp_recall_U} shows measures of effectiveness (recall, precision, $F_1$, and accuracy) when we vary the number of users ($|U|$) for the Gowalla dataset. Across different input sizes, the CG method maintains the same level of recall/precision/$F_1$ score/accuracy as the \textit{exact} MPC baseline across different $|U|$. This verifies that our CG method is highly effective across different input sizes. 

On the other hand, the Geo-I baseline sacrifices significant effectiveness (in recall/precisio\\n/$F_1$/accuracy) across different $|U|$. The recall drops from 70\% when $|U|=200$ to 54.2\% when $|U|=1600$. The precision is constantly below 50\%, and hovers around 35\% when $|U|$ gets larger than 800. It shows that the pure differential privacy based solution injects noise to the perturbed locations, and the perturbed locations inevitably lead to poor effectiveness in the contact tracing applications. 

Varying $\epsilon$: \figref{fig:exp_effectiveness_eps} demonstrates the impact of the privacy budget on the effectiveness (recall/precision/$F_1$/accuracy). In terms of recall, CG obtains the same level of effectiveness (reaching 100\%) as the MPC baseline when $\epsilon=5.0$. The recall is 85.2\% when the privacy budget is small ($\epsilon=2.0$) and improves to 88.9\% when $\epsilon=3.0$. In these two settings, the improvement over the Geo-I baseline are about 2.12$\times$. 

In terms of precision, CG never introduces false positives, because positive results are returned only when there is indeed a true visited location of the user (included in the subset selected and verified with MPC operation) which overlaps with the patients. Thus, the precision is constantly at 100\% over different $\epsilon$. 

The $F_1$ score reflects the similar pattern as the recall, as it is an average of recall and precision. The accuracy of CG stays at more than 99\% across different $\epsilon$. \looseness=-1

\begin{figure}[t]
    \centering
       \begin{subfigure}[b]{0.23\textwidth}
           \centering
           \includegraphics[width=\textwidth]{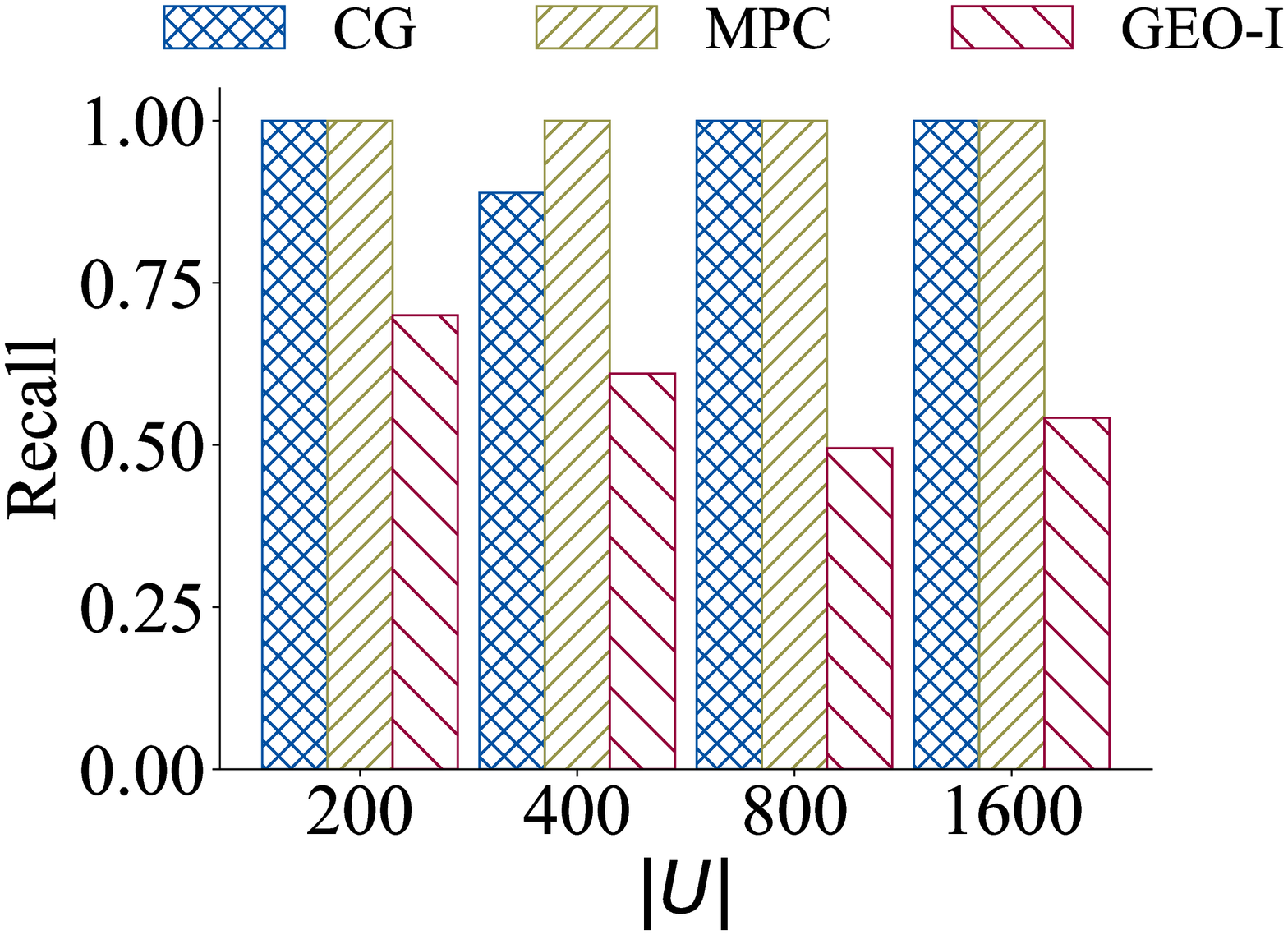}
           \caption{Recall.}
           \label{subfig:exp_recall_U}
       \end{subfigure}
       \hfill
        \begin{subfigure}[b]{0.23\textwidth}
           \centering
           \includegraphics[width=\textwidth]{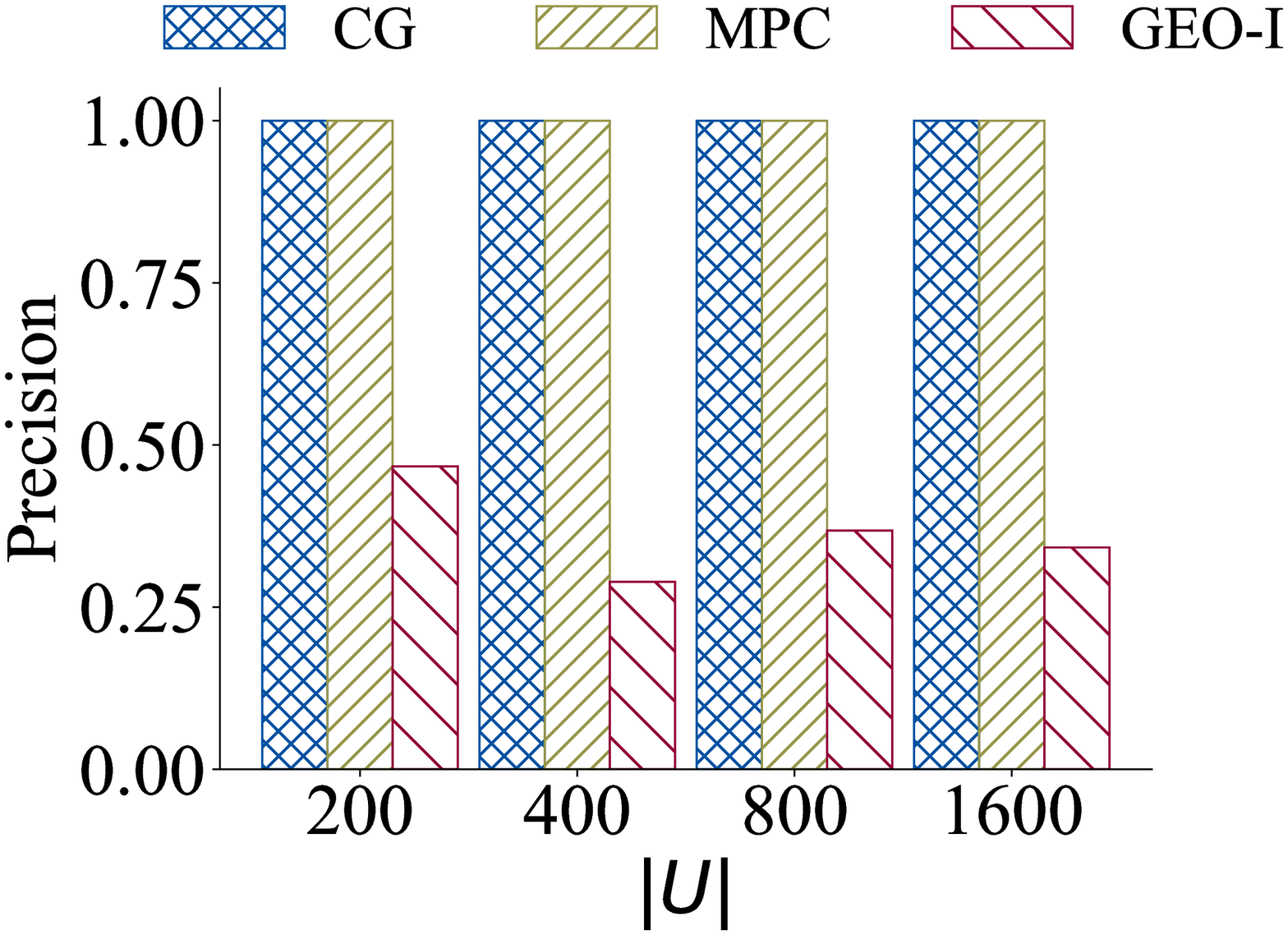}
           \caption{Precision.}
           \label{subfig:exp_prec_U}
       \end{subfigure}
       \hfill
        \begin{subfigure}[b]{0.23\textwidth}
           \centering
           \includegraphics[width=\textwidth]{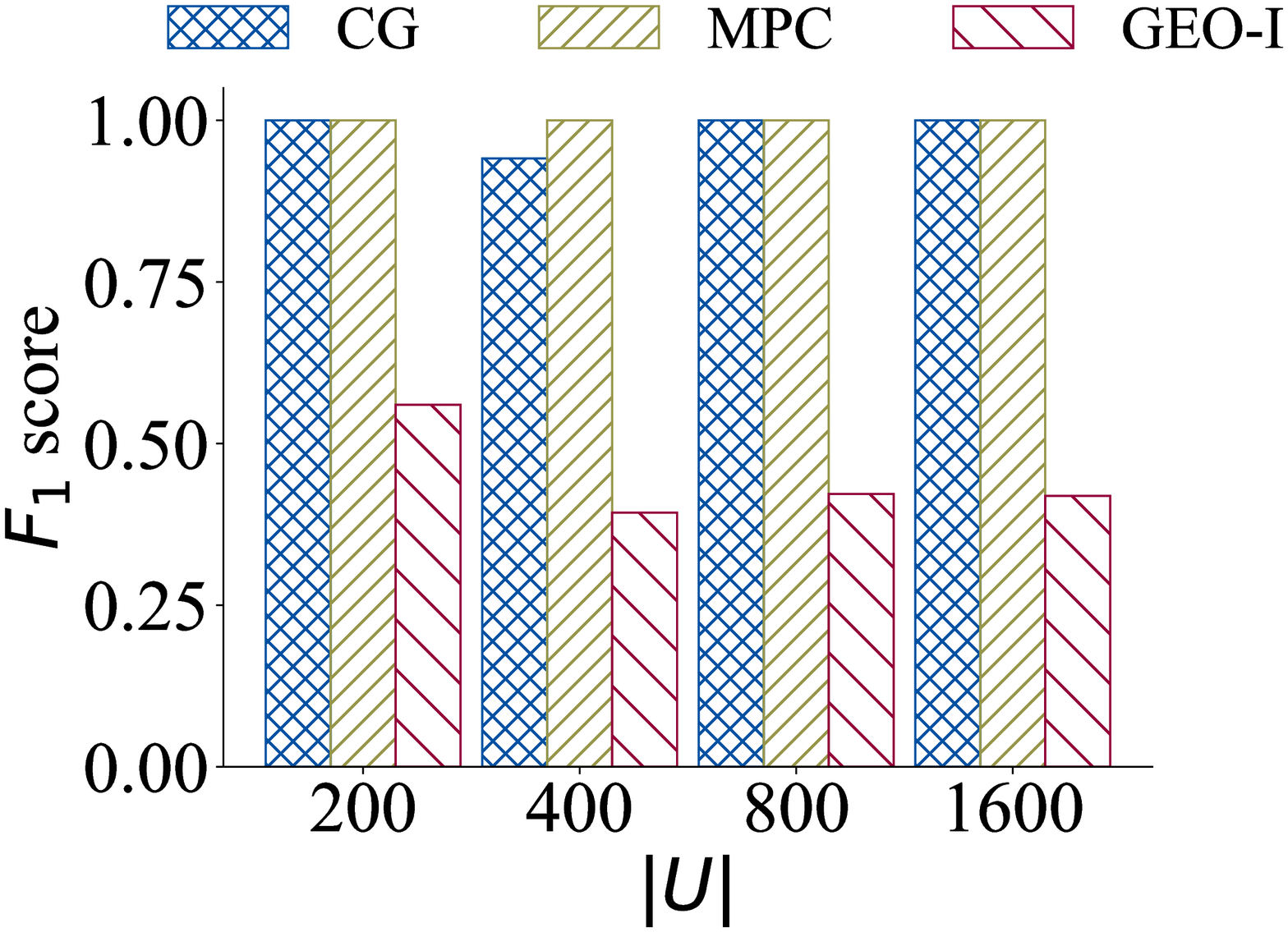}
           \caption{$F_1$ score.}
           \label{subfig:exp_f1_U}
       \end{subfigure}
       \hfill
        \begin{subfigure}[b]{0.23\textwidth}
           \centering
           \includegraphics[width=\textwidth]{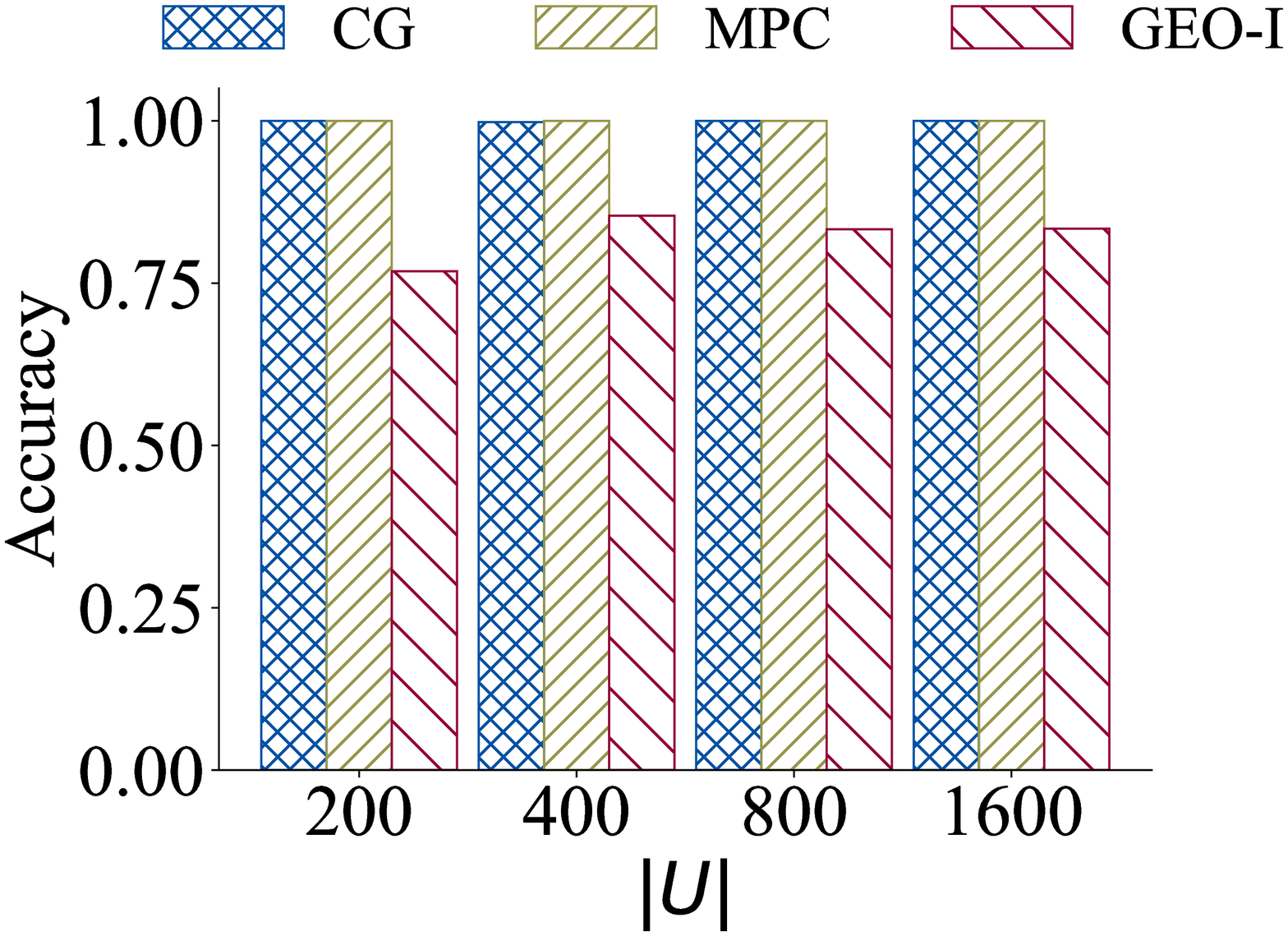}
           \caption{Accuracy.}
           \label{subfig:exp_acc_U}
       \end{subfigure}
    \caption{\small Effectiveness (Recall/Precision/$F_1$/Accuracy) when varying the number of users ($|U|$) on the Gowalla dataset.  $ \epsilon=4.0.$}\label{fig:exp_recall_U}
\end{figure}

\begin{figure}[t]
    \centering
       \begin{subfigure}[b]{0.23\textwidth}
           \centering
           \includegraphics[width=\textwidth]{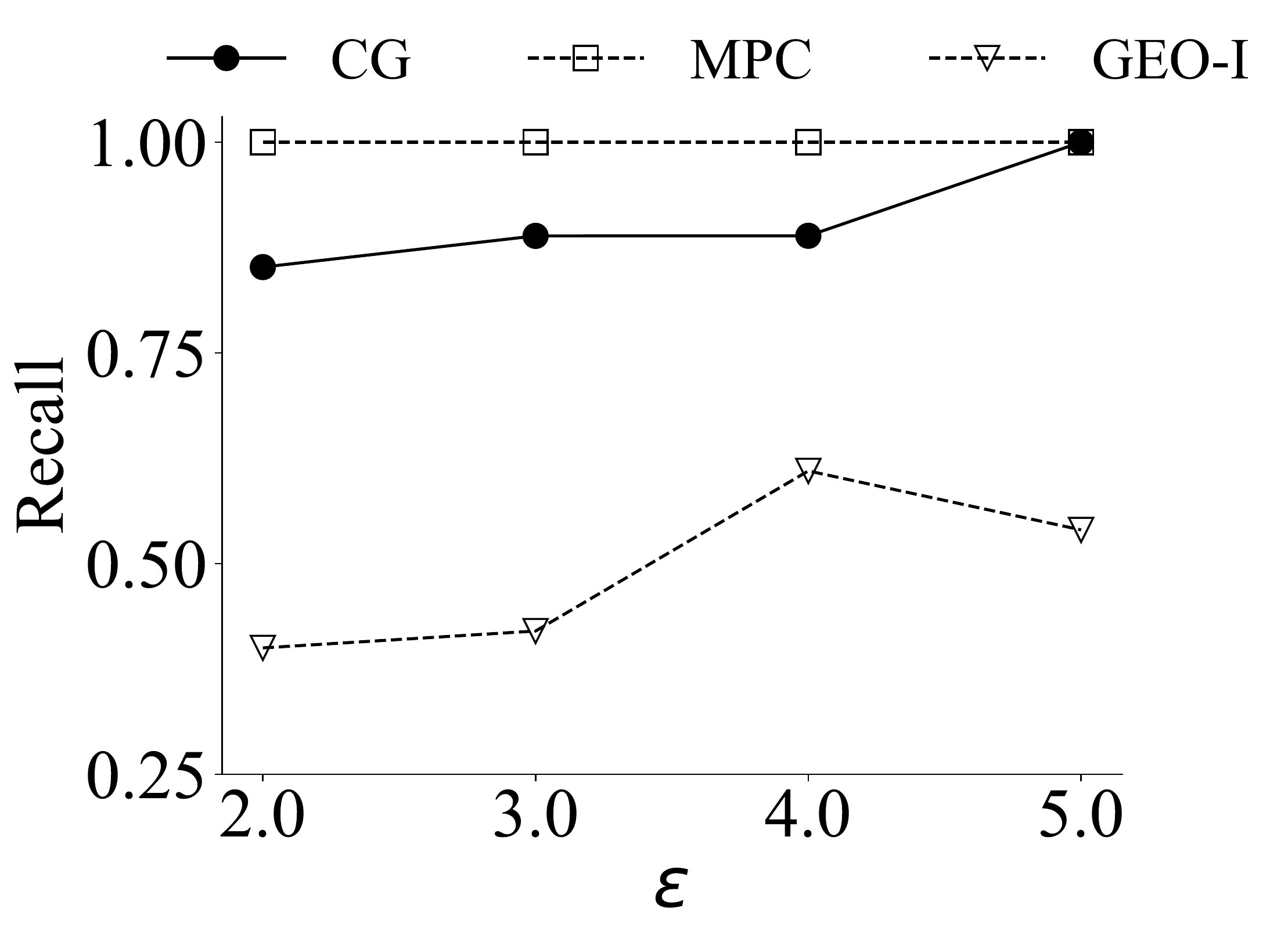}
           \caption{Recall.}
           \label{subfig:recall_eps}
       \end{subfigure}
       \hfill
        \begin{subfigure}[b]{0.23\textwidth}
           \centering
           \includegraphics[width=\textwidth]{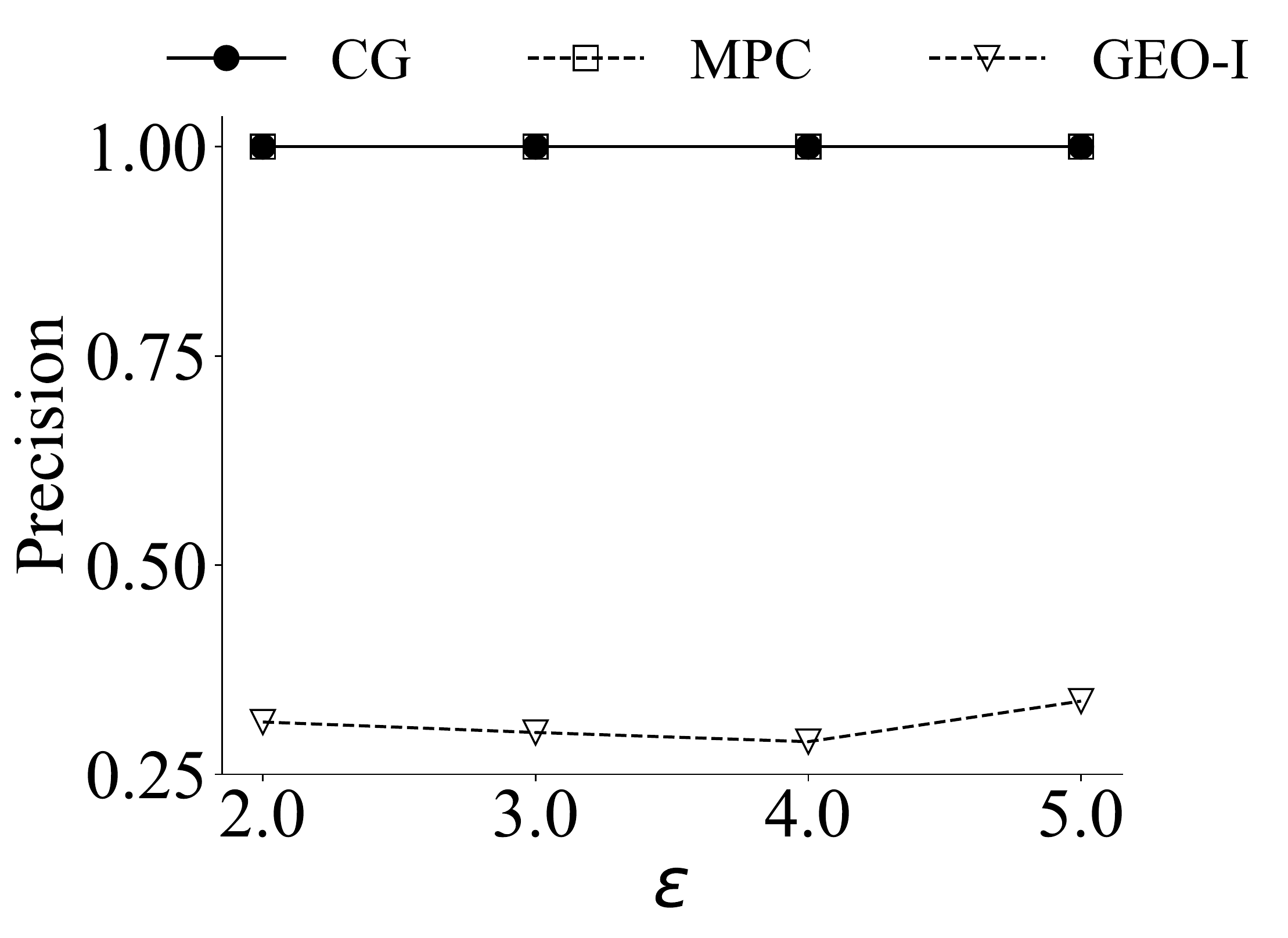}
           \caption{Precision.}
           \label{subfig:prec_eps}
       \end{subfigure}
       \hfill
        \begin{subfigure}[b]{0.23\textwidth}
           \centering
           \includegraphics[width=\textwidth]{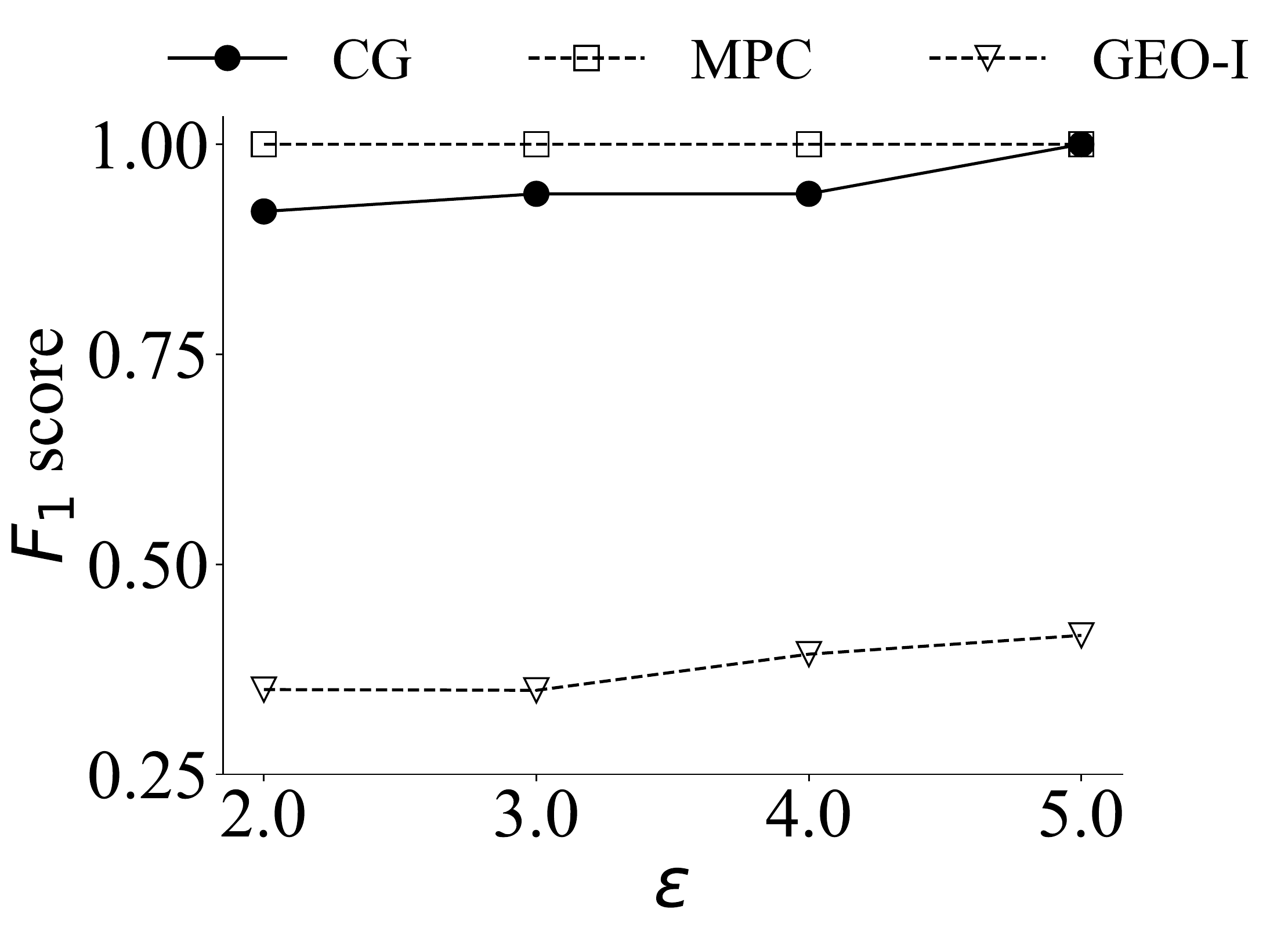}
           \caption{$F_1$ score.}
           \label{subfig:f1_eps}
       \end{subfigure}
       \hfill
        \begin{subfigure}[b]{0.23\textwidth}
           \centering
           \includegraphics[width=\textwidth]{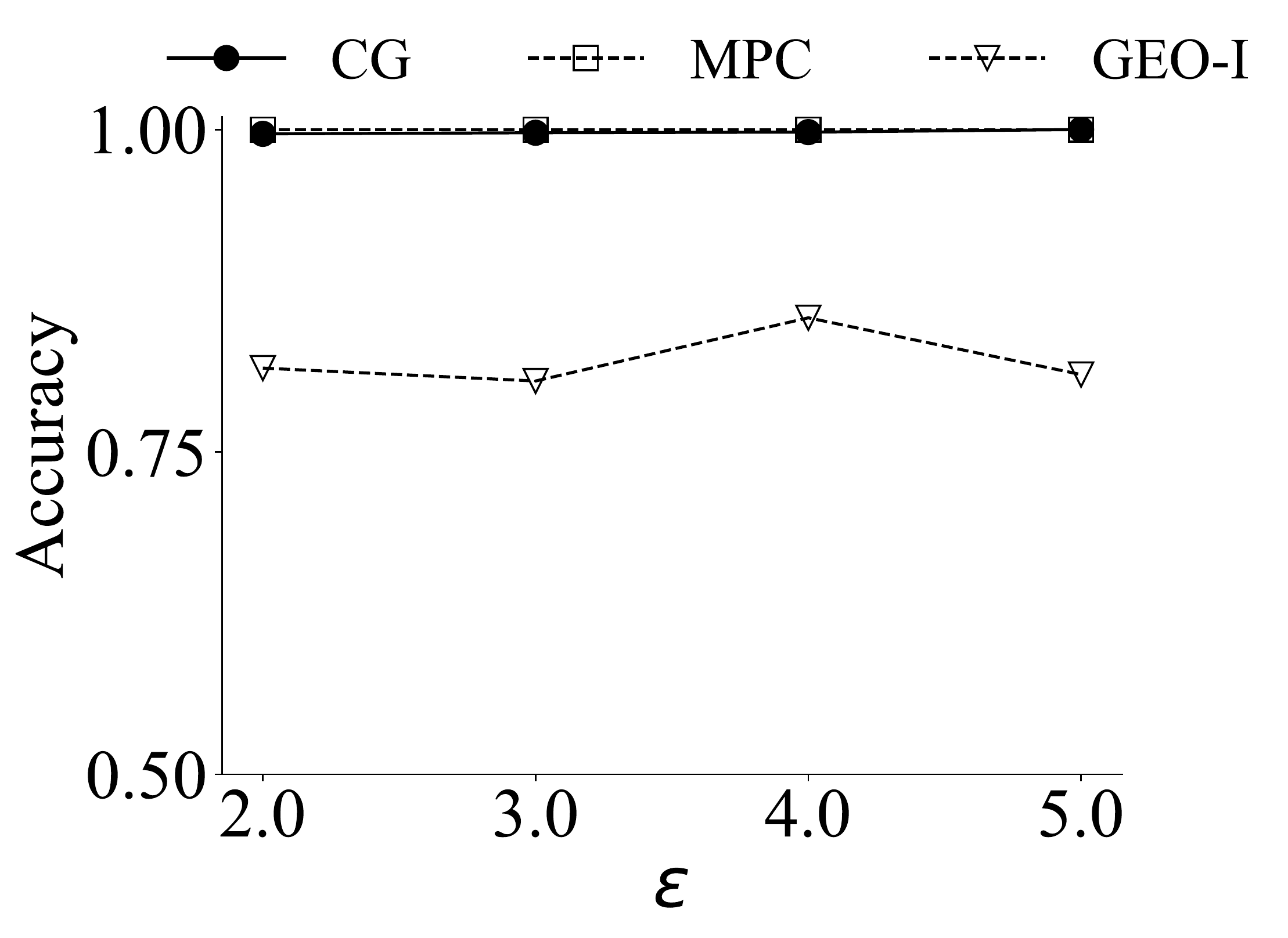}
           \caption{Accuracy.}
           \label{subfig:acc_eps}
       \end{subfigure}
    \caption{\small Effectiveness (Recall/Precision/$F_1$/Accuracy) when varying the privacy budget ($\epsilon$) on the Gowalla dataset.  $ |U|=400.$}\label{fig:exp_effectiveness_eps}
\end{figure}

\fakeparagraph{Efficiency and scalability}
When compared to the MPC baseline, our solution CG achieves significant speed up. \figref{fig:exp_time_U} shows the running time of different methods across different input sizes, when the number of users scales up to 1 million. 

When the number of users ($|U|$) increases from 200 to 1600, the running time of the MPC baseline grows from 99.35 seconds to 746.34 seconds. The running time grows linearly with $|U|$. 

In comparison, the running time of CG grows from 39.31 seconds to 306.20 seconds, also with a linear growth with $|U|$. The speed up of CG over MPC is 2.52$\times$, 2.50$\times$, 2.48$\times$, and 2.44$\times$ when $|U|$ increases from 200 to 1600. The results are shown in \figref{subfig:exp_time_U}. 

Then, we use the synthetic dataset to compare the scalability of the MPC baseline and the CG method. The results are shown in \figref{subfig:exp_time_scale}.  The running time for MPC is 4754.7 seconds, 46,913 seconds (13 hours) and 476,504.6 seconds (5.5 days) for $|U|=$10K, 100K, and 1M.  As a comparison, the CG method runs in 1936.8 seconds, 19,087.2 seconds (5.3 hours) and 195,060.7 seconds (2.25 days). It maintains a $2.5\times$ speed up over the MPC baseline. This verifies that the subset selection in CG significantly reduces the running time of the MPC protocol. When the number of users reaches 1 million, the MPC baseline takes about 5.5 days to process all the users, while our proposed CG finishes in about 2 days.

\begin{figure}[t]
    \centering
       \begin{subfigure}[b]{0.23\textwidth}
           \centering
           \includegraphics[width=\textwidth]{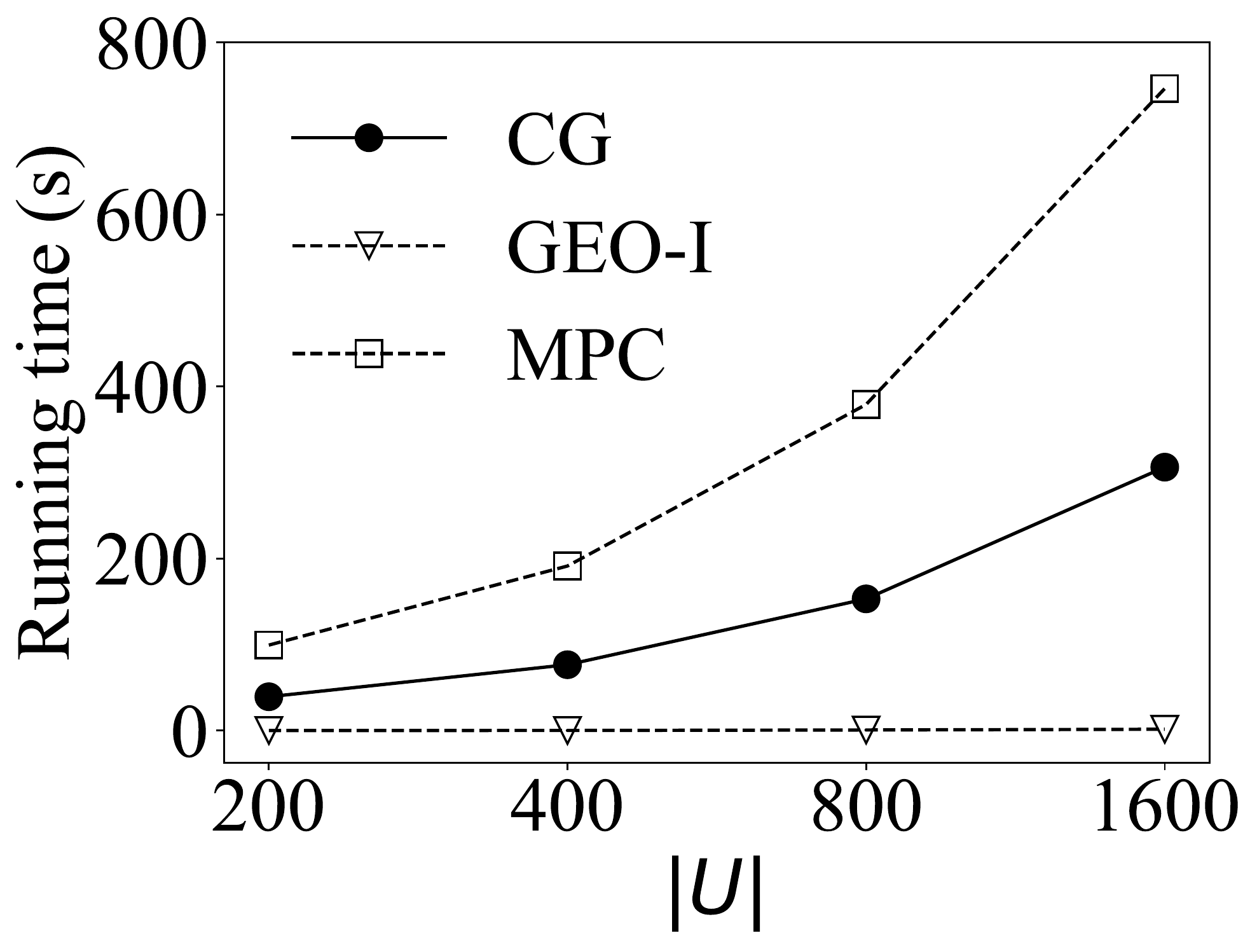}
           \caption{ $\epsilon=4.0$.}
           \label{subfig:exp_time_U}
       \end{subfigure}
       \hfill
        \begin{subfigure}[b]{0.23\textwidth}
           \centering
           \includegraphics[width=\textwidth]{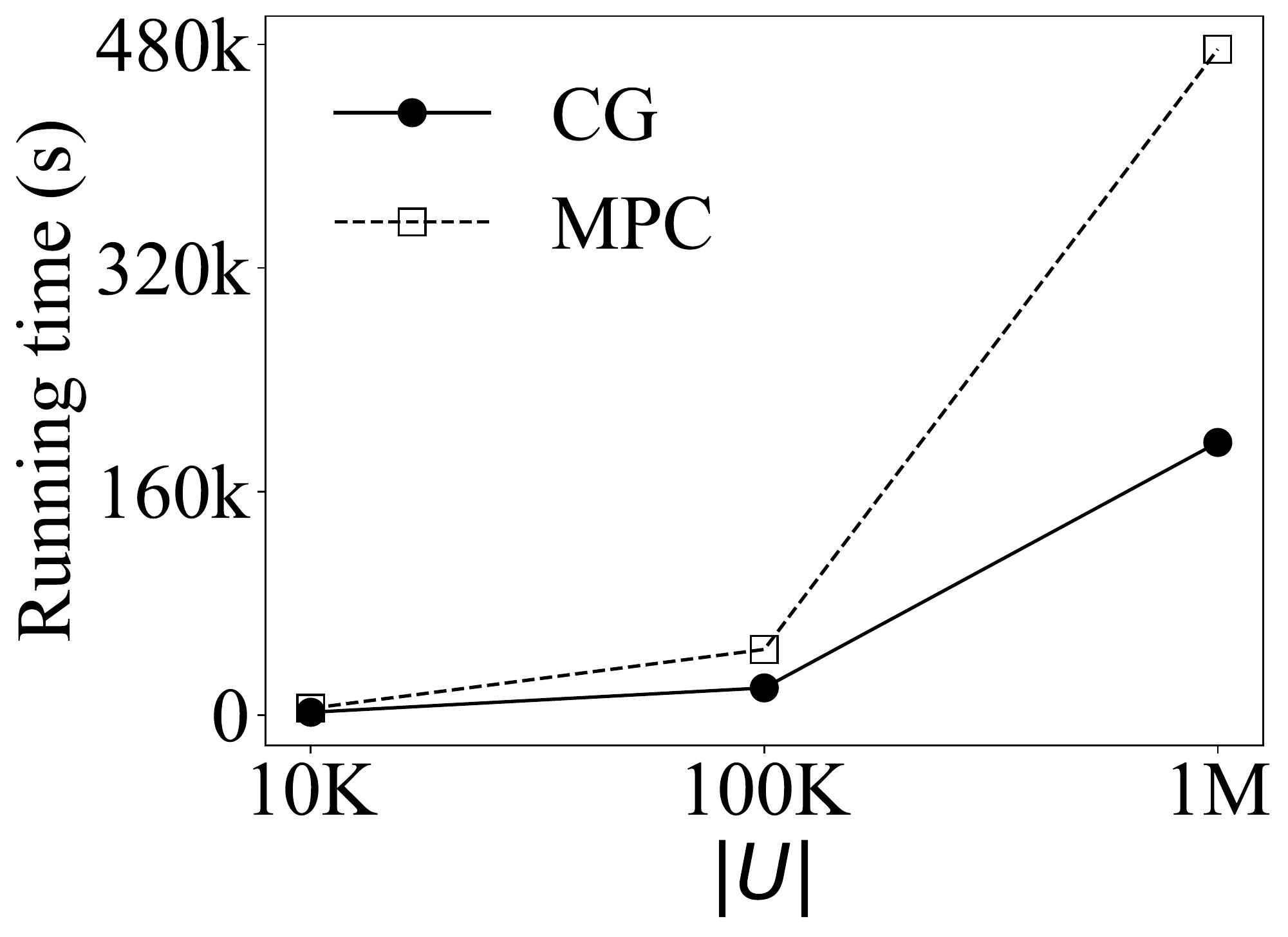}
           \caption{Scalability tests.}
           \label{subfig:exp_time_scale}
       \end{subfigure}
       \hfill
       \begin{subfigure}[b]{0.23\textwidth}
        \centering
        \includegraphics[width=\textwidth]{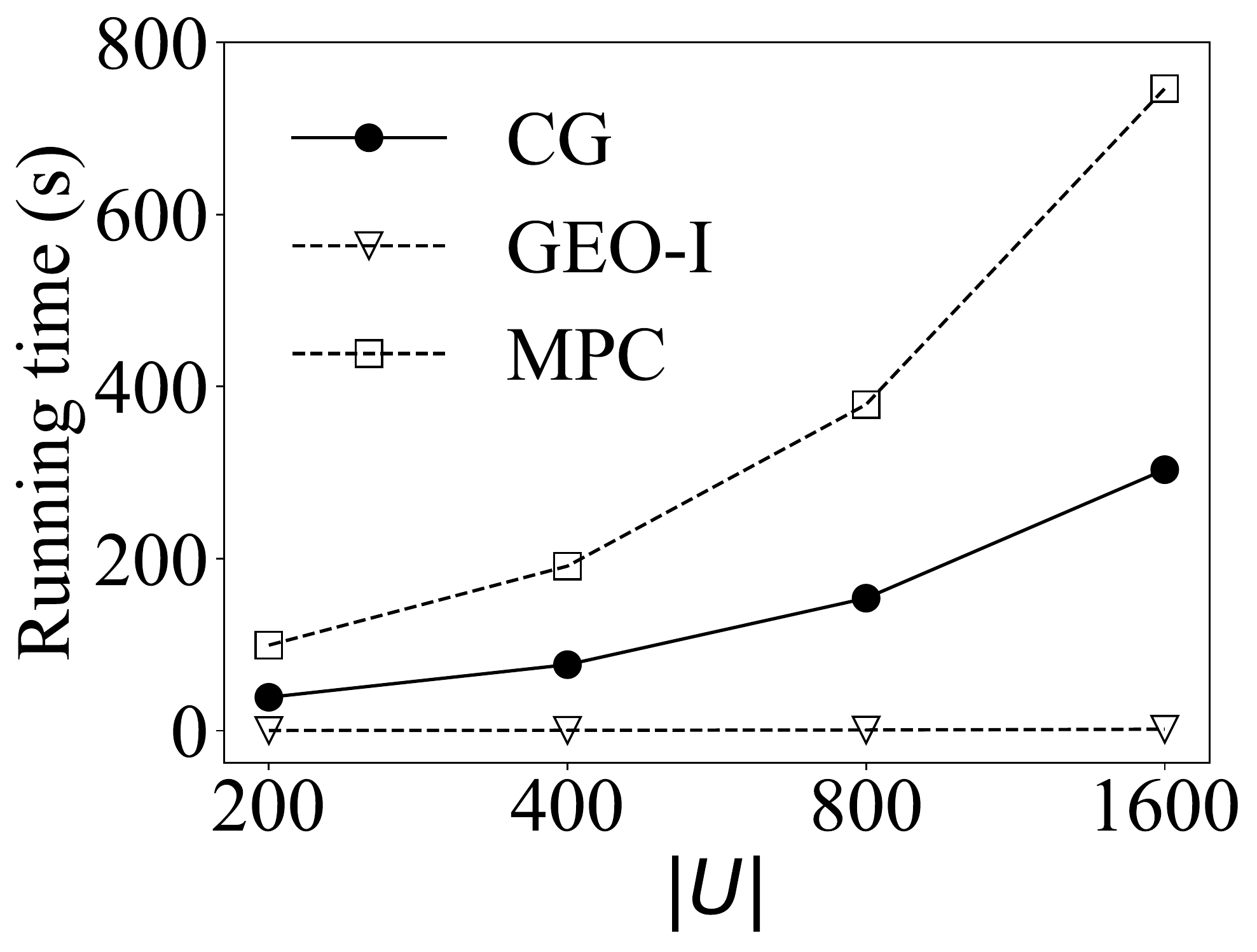}
        \caption{$\epsilon=2.0$.}
        \label{subfig:exp_time_U_eps_2}
        \end{subfigure}
       \hfill
       \begin{subfigure}[b]{0.23\textwidth}
        \centering
        \includegraphics[width=\textwidth]{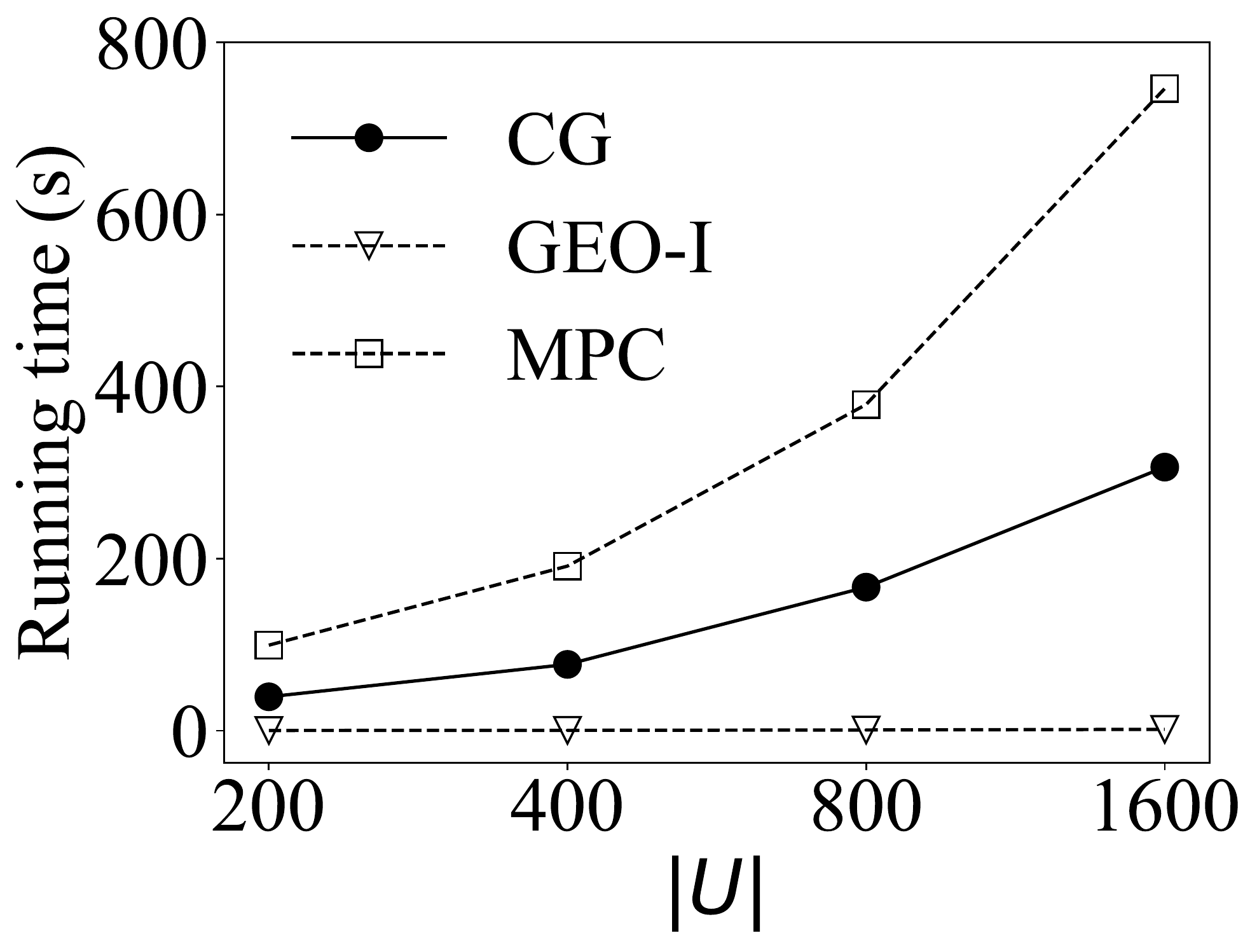}
        \caption{$\epsilon=3.0$.}
        \label{subfig:exp_time_U_eps_3}
    \end{subfigure}
    
    \caption{\small Efficiency (running time in seconds) when varying the number of users ($|U|$) on the synthetic dataset. }\label{fig:exp_time_U}
   \end{figure}

\section{Related Work}
\label{sec:relatedWork}

Our work is closely related to REACT \cite{DBLP:conf/icde/DaA0S21}, as it also uses Geo-I \cite{andres13} to preserve the location privacy. In REACT, each location is applied with Geo-I with the same privacy budget. As shown in the experiments (both in \cite{DBLP:conf/icde/DaA0S21} and in our results), the Geo-I baseline performs poorly in terms of both recall and precision, because it only uses perturbed locations. As a similar notion to Geo-I, local differential privacy has also been used to protect the trajectory data \cite{DBLP:journals/pvldb/CunninghamCFS21}. However, its main use is for aggregated statistics over relatively larger regions, where in our application, each user's contact tracing and fine-grained visited location is important. 

Many other privacy-preserving techniques are used for contact-tracing, including blockchain based P$^2$B-Trace \cite{DBLP:conf/sigmod/PengXWHXC21} and cryptographic primitives enabled BeeTrace \cite{DBLP:journals/debu/LiuTKS20} and Epione \cite{DBLP:journals/debu/TrieuSSSS20}.
As opposed to using relative locations in these works, our setting uses absolute locations, which enables tracing indirect contacts (transmission via a contaminated environment) in addition to direct contacts.

On topics not related to contact tracing, there are many recent works on combining differential privacy and cryptographic techniques \cite{wagh2021dp}. 
Secure shuffling and encryption-based techniques %\cite{bittau2017prochlo} 
\cite{erlingsson2019amplification,roy2020crypte} could help reduce the error introduced by differential privacy, and differential privacy could enable more efficient secure computation.
Our solution, ContactGuard, is inspired by this line of ideas, and uses Geo-I to accelerate MPC computations. In addition, privacy issues are raised in many Location-based services (LBS) applications \cite{DBLP:conf/icde/TaoTZSC020,tong2019two,she2017feedback,gao2016top}.

\section{Conclusion}
\label{sec:conclusion}

In this work, we study the Privacy-Preserving Contact Tracing (PPCT) problem, to identify the close-contacts of the patients while preserving the location privacy of the patients and the users. To address this problem, we propose an accurate and efficient solution named ContactGuard, which accelerates the Secure Multiparty Computation (MPC) with the help of Geo-Indistinguishability (Geo-I). Experimental results demonstrate that ContactGuard provides significant speed up over the MPC baseline, while maintains an excellent level of effectiveness (as measured by recall and precision). For future works, we look into more advanced ways to compress data \cite{DBLP:conf/sigmod/LiuS022,DBLP:conf/icde/LiuSC21} to achieve better privacy/utility tradeoff. 

\bibliography{add}

\begin{thebibliography}{10}

\bibitem{andres13}
Miguel~E. Andres, Nicolas~E. Bordenabe, Konstatinos Cjhatzikokolakis, and
  Catuscia Palamidessi.
\newblock Geo-indistinguishability: differential privacy for location-based
  systems.
\newblock In {\em CCS}, 2013.

\bibitem{DBLP:conf/kdd/ChoML11}
Eunjoon Cho, Seth~A. Myers, and Jure Leskovec.
\newblock Friendship and mobility: user movement in location-based social
  networks.
\newblock In {\em {KDD}}, 2011.

\bibitem{cormode2019answering}
Graham Cormode, Tejas Kulkarni, and Divesh Srivastava.
\newblock Answering range queries under local differential privacy.
\newblock {\em VLDB}, 2019.

\bibitem{DBLP:journals/pvldb/CunninghamCFS21}
Teddy Cunningham, Graham Cormode, Hakan Ferhatosmanoglu, and Divesh Srivastava.
\newblock Real-world trajectory sharing with local differential privacy.
\newblock {\em VLDB}, 2021.

\bibitem{DBLP:conf/icde/DaA0S21}
Yanan Da, Ritesh Ahuja, Li~Xiong, and Cyrus Shahabi.
\newblock {REACT:} real-time contact tracing and risk monitoring via
  privacy-enhanced mobile tracking.
\newblock In {\em ICDE}, 2021.

\bibitem{domingo2021limits}
Josep Domingo-Ferrer, David S{\'a}nchez, and Alberto Blanco-Justicia.
\newblock The limits of differential privacy (and its misuse in data release
  and machine learning).
\newblock {\em {CACM}}, 2021.

\bibitem{DBLP:conf/kdd/DuTZTZ18}
Bowen Du, Yongxin Tong, Zimu Zhou, Qian Tao, and Wenjun Zhou.
\newblock Demand-aware charger planning for electric vehicle sharing.
\newblock In {\em {KDD}}, pages 1330--1338, 2018.

\bibitem{duchi2013local}
John~C Duchi, Michael~I Jordan, and Martin~J Wainwright.
\newblock Local privacy and statistical minimax rates.
\newblock In {\em FOCS}, 2013.

\bibitem{erlingsson2019amplification}
{\'U}lfar Erlingsson, Vitaly Feldman, Ilya Mironov, Ananth Raghunathan, Kunal
  Talwar, and Abhradeep Thakurta.
\newblock Amplification by shuffling: From local to central differential
  privacy via anonymity.
\newblock In {\em SODA}, 2019.

\bibitem{ferretti2020quantifying}
Luca Ferretti, Chris Wymant, Michelle Kendall, Lele Zhao, Anel Nurtay, Lucie
  Abeler-D{\"o}rner, Michael Parker, David Bonsall, and Christophe Fraser.
\newblock Quantifying sars-cov-2 transmission suggests epidemic control with
  digital contact tracing.
\newblock {\em Science}, 368(6491), 2020.

\bibitem{gao2016top}
Dawei Gao, Yongxin Tong, Jieying She, Tianshu Song, Lei Chen, and Ke~Xu.
\newblock Top-k team recommendation in spatial crowdsourcing.
\newblock In {\em WAIM}, 2016.

\bibitem{doi:10.1056/NEJMoa2002032}
Wei-jie Guan, Zheng-yi Ni, Yu~Hu, and et~al.
\newblock Clinical characteristics of coronavirus disease 2019 in china.
\newblock {\em New England Journal of Medicine}, 382(18):1708--1720, 2020.

\bibitem{kato2020secure}
Fumiyuki Kato, Yang Cao, and Masatoshi Yoshikawa.
\newblock Secure and efficient trajectory-based contact tracing using trusted
  hardware.
\newblock In {\em IEEE Big Data}, 2020.

\bibitem{lewis2020many}
Dyani Lewis.
\newblock Why many countries failed at covid contact-tracing-but some got it
  right.
\newblock {\em Nature}, pages 384--387, 2020.

\bibitem{DBLP:conf/icde/LiuSC21}
Qiyu Liu, Yanyan Shen, and Lei Chen.
\newblock Lhist: Towards learning multi-dimensional histogram for massive
  spatial data.
\newblock In {\em {ICDE}}, 2021.

\bibitem{DBLP:conf/sigmod/LiuS022}
Qiyu Liu, Yanyan Shen, and Lei Chen.
\newblock {HAP:} an efficient hamming space index based on augmented pigeonhole
  principle.
\newblock In {\em {SIGMOD}}, 2022.

\bibitem{DBLP:journals/debu/LiuTKS20}
Xiaoyuan Liu, Ni~Trieu, Evgenios~M. Kornaropoulos, and Dawn Song.
\newblock Beetrace: {A} unified platform for secure contact tracing that breaks
  data silos.
\newblock {\em {IEEE} Data Eng. Bull.}, 2020.

\bibitem{DBLP:conf/sigmod/PengXWHXC21}
Zhe Peng, Cheng Xu, Haixin Wang, Jinbin Huang, Jianliang Xu, and Xiaowen Chu.
\newblock P\({}^{\mbox{2}}\)b-trace: Privacy-preserving blockchain-based
  contact tracing to combat pandemics.
\newblock In {\em SIGMOD}, 2021.

\bibitem{rodriguez2021population}
Pablo Rodr{\'\i}guez, Santiago Gra{\~n}a, Eva~Elisa Alvarez-Le{\'o}n, Manuela
  Battaglini, Francisco~Javier Darias, Miguel~A Hern{\'a}n, Raquel L{\'o}pez,
  Paloma Llaneza, Maria~Cristina Mart{\'\i}n, Oriana Ramirez-Rubio, et~al.
\newblock A population-based controlled experiment assessing the
  epidemiological impact of digital contact tracing.
\newblock {\em Nature communications}, 12(1):1--6, 2021.

\bibitem{roy2020crypte}
Amrita Roy~Chowdhury, Chenghong Wang, Xi~He, Ashwin Machanavajjhala, and Somesh
  Jha.
\newblock Crypte: Crypto-assisted differential privacy on untrusted servers.
\newblock In {\em SIGMOD}, 2020.

\bibitem{salathe2020early}
Marcel Salath{\'e}, Christian~L Althaus, Nanina Anderegg, Daniele Antonioli,
  Tala Ballouz, Edouard Bugnion, Srjan Capkun, Dennis Jackson, Sang-Il Kim,
  James Larus, et~al.
\newblock Early evidence of effectiveness of digital contact tracing for
  sars-cov-2 in switzerland.
\newblock {\em medRxiv}, 2020.

\bibitem{she2017feedback}
Jieying She, Yongxin Tong, Lei Chen, and Tianshu Song.
\newblock Feedback-aware social event-participant arrangement.
\newblock In {\em SIGMOD}, 2017.

\bibitem{DBLP:conf/icde/TaoTZSC020}
Qian Tao, Yongxin Tong, Zimu Zhou, Yexuan Shi, Lei Chen, and Ke~Xu.
\newblock Differentially private online task assignment in spatial
  crowdsourcing: {A} tree-based approach.
\newblock In {\em ICDE}, 2020.

\bibitem{TongJoS17}
Yongxin Tong, Ye~Yuan, Yurong Cheng, Lei Chen, and Guoren Wang.
\newblock Survey on spatiotemporal crowdsourced data management techniques.
\newblock {\em {J. Softw.}}, 28(1):35--58, 2017.

\bibitem{tong2019two}
Yongxin Tong, Yuxiang Zeng, Bolin Ding, Libin Wang, and Lei Chen.
\newblock Two-sided online micro-task assignment in spatial crowdsourcing.
\newblock {\em {TKDE}}, 2021.

\bibitem{DBLP:journals/debu/TrieuSSSS20}
Ni~Trieu, Kareem Shehata, Prateek Saxena, Reza Shokri, and Dawn Song.
\newblock Epione: Lightweight contact tracing with strong privacy.
\newblock {\em {IEEE} Data Eng. Bull.}, 2020.

\bibitem{wagh2021dp}
Sameer Wagh, Xi~He, Ashwin Machanavajjhala, and Prateek Mittal.
\newblock Dp-cryptography: marrying differential privacy and cryptography in
  emerging applications.
\newblock {\em Communications of the ACM}, 64(2):84--93, 2021.

\bibitem{reuter2021chinese}
Ryan Woo and Engen Tham.
\newblock Chinese city says it mass tested 30,000 for covid-19 at mega centre,
  rounded-up runaways.
\newblock Reuters, 2021.

\bibitem{zahur2015obliv}
Samee Zahur and David Evans.
\newblock Obliv-c: A language for extensible data-oblivious computation.
\newblock {\em IACR Cryptol. ePrint Arch.}, 2015:1153, 2015.

\end{thebibliography}
\newpage
\appendix

\section{Additional technical details}
\label{sec:appendix}

\begin{algorithm}[t]
	\DontPrintSemicolon
	\KwIn{$\epsilon, l=(x,y)$. }
	\KwOut{$l'$.}

    draw $\theta$ uniformly in $[0, 2\pi)$\;
    draw $p$ uniformly in $[0, 1)$ \;
    $d=-\frac{1}{\epsilon}(W_{-1}(\frac{p-1}{e})+1)$ \label{line:random_d}\;
    $x' = l.x + \cos(\theta)\cdot d$ \;
    $y' = l.y + \sin(\theta)\cdot d$ \;
    
	\Return{$l'=(x', y')$}\;
	\caption{\texttt{Geo-I}\cite{andres13} }
	\label{algo:geo_i}
\end{algorithm}

For easy reference, the major notations used in this paper are summarized in Table \ref{tab:table1}. 

\begin{table}
	\centering \vspace{0ex}
	{\small\scriptsize
		\caption{\small Major notations used in this paper. \label{tab:table1}}
		\begin{tabular}{l|m{6cm}}
			\hline
			{\bf Symbol} & {\bf \qquad \qquad \qquad\qquad\qquad Description} \\ \hline 
			$u, U$   & A user and the set of  $n$ users\\
			\hline
			$p, P$   & A patient and the set of $m$ patients \\
			\hline
			$l, l'$  & A location and its perturbed location\\
			\hline
			$L_u, L'_u$  & The trajectories of user $u$ and its perturbed locations\\
			\hline
			$L_P $   & The set of trajectories for all the patients  \\
			\hline
			$I, \tilde{I} $   & The index of high risk locations and the noisy indexes \\
			\hline
			$r$  & Distance threshold for determining a close contact\\
			\hline
			$r'$  & Our system parameter for determining a high-risk location\\
			\hline
			$\epsilon, \epsilon_P$  & The privacy budget for the user and the patients\\
			\hline

		\end{tabular}
	}\vspace{0ex}
\end{table}

\subsection{Application Settings and Adversary Model}

\subsubsection{Application settings}

We model the contact tracing application as a client-server model (illustrated briefly in \figref{subfig:problem}):

\fakeparagraph{Client side} Each user $u \in U$ corresponds to a client. Each user stores his/her own locations $L_u$ locally on a mobile device. 

\fakeparagraph{Server side} All trajectories by the patients (represented as $L_P$ as in \defref{def:patient}) are aggregated and stored at the server. In our contact tracing application, the server is owned by the government, or more specifically, a central public health organization (\eg the Centers of Disease Control and Prevention). The setting follows the real-world situation: governments often collect whereabouts of confirmed patients in order to minimize any further transmission, usually starting with tracing the close contacts of the patients. For example, this is enforced by legislation in Hong Kong\footnote[4]{https://www.elegislation.gov.hk/hk/cap599D!en?INDEX\_CS=N}. 

For the purpose of contact tracing, rather than storing each patient's individual trajectories, it suffices to store the union of trajectories of all patients (represented as $L_P$). This is because given a user, our goal is only to determine whether the user co-visits the nearby location as \textit{some patient} $p\in P$, without the need to identify specifically \text{which patient}. This partially enhances privacy protection, as no individual patient's trajectory is identified and stored. 

As a summary, there are two major roles in our application: the users (the clients) and the government (the server). 

For the adversary on the server, the adversary tries his/her best to obtain as much private information as possible from the users (obtaining information about the locations $L_{u_i}$). Under this model, our proposed solution needs to limit the information shared from the user to the server. When the information about $L_u$ is needed, the computation process needs to be protected with privacy guarantee, \wrt $L_u$. 

Similarly, adversary could exist on the client side. It means that the adversary tries his/her best to obtain as much private information as possible from the server (which stores the patients' locations $L_P$) or from other users (obtaining information about other users' $L_u$). Thus, our proposed solution needs to limit the information shared from the server to each user as well. When the information about $L_P$ is needed, the computation process needs to be protected with privacy guarantee, \wrt $L_P$. 

\subsubsection{Performance metrics}

Considering our specific application -- contact tracing, we measure the quality of a solution based on the following criteria: 

\fakeparagraph{(1) Effectiveness } We use recall, precision and accuracy as the key metrics for measuring the effectiveness (utility) of a potential solution. In the contact tracing applications, \textit{recall} is defined as the ratio of the correctly identified contacts among all contacts. \textit{Precision} is defined as the ratio of correctly identified contacts among all contacts identified by the solution. The \textit{accuracy} is defined as the ratio of correctly classified instances (both the correctly identified contact and the non-contacts) over all users. 

While all three metrics are crucial, recall is considered  relatively more important in the contact tracing application, as it measures the capability of a solution for eliminating false negatives (identifying as many true contacts as possible). This is crucial in applications of epidemic control, where missing a true contacts may lead to further uncontrolled transmission of the disease. Taking the epidemic control of COVID-19 in China as an example, when several cases were found in a district, all citizens of the entire city need to take mandated testing in order to minimize the chances of missing any positive cases \cite{reuter2021chinese}. 

\fakeparagraph{(2) Efficiency } We use the end-to-end running time as the metric to measure the efficiency of a proposed solution. The end-to-end running time is the time needed to process all users in $U$ to determine whether they are contacts or not. As the contact tracing application is normally executed on a daily basis, if a solution could not process all users in 24 hours, it is considered impractical in terms of efficiency. 

\subsection{Baselines}

\tabref{tab:comparison_methods2} lists the comparison of the two baselines.

\begin{table}
	\centering \vspace{0ex}
	{\small\scriptsize
		\caption{\small Comparison of baselines and ContactGuard} \label{tab:comparison_methods2}
		\begin{tabular}{l|c|c|c}
			\hline
			{\bf Methods} & {\bf MPC} & {\bf Geo-I} & {\bf ContactGuard} \\ 
			\hline 
			Accuracy & $\surd$ & $\times$ & $\surd$\\
			\hline
			Efficiency & $\times$ & $\surd$ & $\surd$\\
			\hline
		\end{tabular}
	}\vspace{0ex}
\end{table}

\begin{algorithm}[t]
	\DontPrintSemicolon
	\KwIn{$r, \delta$, \texttt{ProtocolIO} $io$. }
	\KwOut{True/False.}

    $n_1 := $ ocBroadcast($io$.n, party=1) \;
		$n_2 := $ ocBroadcast($io$.n, party=2) \label{line:broad_n2}\;
    obliv float ox\_array1 = ocToOblivFloatArray(io.x, party=1) \label{line:obv_start}\;
		obliv float ox\_array2 = ocToOblivFloatArray(io.x, party=2) \;
		obliv float oy\_array1 = ocToOblivFloatArray(io.y, party=1) \;
		obliv float oy\_array2 = ocToOblivFloatArray(io.y, party=2) \;
		obliv int time\_array1 = ocToOblivIntArray(io.time, party=1) \;
		obliv int time\_array2 = ocToOblivIntArray(io.time, party=2) \label{line:obv_end}\;
	\ForEach{$i \in [0\ldots n_1-1]$} {\label{line:obv_iter_start}
		\ForEach{$j \in [0\ldots n_2-1]$}{
				obliv float d\_x $:=$ ox\_array1[i] - ox\_array2[j]\;
				obliv float d\_y $:=$ oy\_array1[i] - oy\_array2[j]\;
				obliv float d\_square $:=$ d\_x * d\_x + d\_y*d\_y\;
				obliv int $t_1 := $time\_array1[i] \;
				obliv int $t_2 := $time\_array2[j] \;
				obliv int diff\_time $:=$ time\_difference($t_1, t_2$) \;
				obliv \If{(d\_square $<= r*r$ AND diff\_time $<= \delta$)}{ \label{line:obv_check} 
					\Return{True} \;				
				}

		}
	} \label{line:obv_iter_end}
	\Return{False}\;
	\caption{\texttt{MPC\_Compare\_Location} }
	\label{algo:mpc_baseline}
\end{algorithm}

 \algoref{algo:mpc_baseline} gives the detailed steps for the MPC baseline. We only show the main procedure that is shared and called by both the server (the patients) and the client (the user). The inputs to the procedure includes an object $io$ of type \texttt{ProtocolIO}. The server side's input object $io$ contains the trajectories $L_P$ for all the patients. The client side's input object $io$ holds the trajectory $L_u$ for user $u$. \algoref{algo:mpc_baseline} uses secure computation over the private input $io$ to determine whether the user $u$ (the client side) is a contact or not. 

From \lineref{line:obv_start} to \lineref{line:obv_end}, the routine \texttt{ocToOblivFloatArray} in Obliv-C is called to transform the private inputs (an array with \texttt{float} type) to oblivious data types to be used during the secure computation. Similarly, the routine \texttt{ocToOblivIntArray} is to transform an integer arrays to an oblivious data type. In our implementation, we store the x and y coordinates of all visited locations in float arrays (\texttt{io.x} and \texttt{io.y}). We store the timestamps as integer arrays. 

Note that when we call the routine \texttt{ocToOblivFloatArray}, an argument \texttt{party=1} is given to obtain the private inputs from the patients (the server, which we refer to as the 1st party). When \texttt{party=2} is given, the routine obtains the private inputs from the user (which we refer to as the 2nd party). After the private inputs are transformed to the oblivious data types, all computations based on these oblivious data types are guaranteed to be oblivious as well. 

The number of visited locations $|L|$ is stored in \texttt{io.n} (an integer). As we do not consider the number of visited locations as a private input, it is broadcasted and stored as $n_1$ ($|L_P|$ for the patients) and $n_2$ ($|L_u|$ for the user). 

Then we iterate over all location from the patients and all locations from the user to check whether the pair of visited locations and timestamps satisfy the spatial and temporal constraints in the contact definition in \defref{def:close_contact}. All computations are oblivious, including calculating the square of the Euclidean distance between the two locations and calculating the time difference in seconds between the two timestamps, because these computations are based on oblivious inputs (\eg ox\_array1 which is transformed from the private inputs), and the operations only involve oblivious summations and multiplications, which are provided by the underlying Obliv-C. 

At the last step (\lineref{line:obv_check}), the square of the Euclidean distance is compared with $r * r$, which is the square of the distance threshold $r$. In addition, the time difference in seconds is compared with $\delta$, which is the time difference threshold. $r$ and $\delta$ are inputs to our PPCT problem, as specified in \defref{def:ppct}. The comparison uses the security primitive \texttt{obliv if} provided by Obliv-C, such that the execution of the program (the program counters or the running time) does not disclose any information about the underlying private inputs (the oblivious variables). The body of \textit{obliv if} is executed regardless of the value of conditions.

\fakeparagraph{Time Complexity} The oblivious transformation from private inputs to oblivious data types from \lineref{line:obv_start} to \lineref{line:obv_end} takes $\mathcal{O}(n_1 + n_2)$ to complete. The main overhead occurs when we iterate over all locations from the patients and all locations from the user, which takes $\mathcal{O}(n_1 * n_2) \to \mathcal{O}(|L_P||L_u|)$ to complete. Overall, for each user, the MPC procedure \algoref{algo:mpc_baseline},  takes $\mathcal{O}(|L_P| + |L_u| + |L_P||L_u|) \to \mathcal{O}(|L_P||L_u|)$. For the server $S$ to finish processing all users, it takes $\mathcal{O}(\sum_u |L_u||L_P|) \to \mathcal{O}(|U||L_P|\max_u|L_u|)$. 

Note that here the time complexity analysis assumes that the security primitives (\eg \texttt{ocTo\\OblivFloat}, oblivious summation and multiplication, \texttt{oblivious if}) run in constant time ($\mathcal{O}(1)$). In practice, the security primitives run multiple orders of magnitude slower than running a non-secure statement. We stay at a high-level analysis in terms of the total number of execution of security primitives rather than evaluating the number of gates of the lower-level secure circuits that are generated by Obliv-C. In our proposed method ContactGuard, we optimize the running time by significantly reducing the number of secure primitives used. 

\fakeparagraph{Privacy Analysis} 
\begin{theorem}\label{theo:privacy_mpc} \algoref{algo:mpc_baseline}  satisfies the privacy requirements in \defref{def:ppct}. 
\end{theorem}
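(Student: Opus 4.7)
The plan is to reduce the claim to the semi-honest security of the underlying two-party computation framework (Obliv-C), treating the ``confidential'' disjunct in \defref{def:ppct} as the relevant privacy notion here (since the MPC baseline uses cryptographic rather than DP guarantees). Concretely, I would argue that Algorithm \ref{algo:mpc_baseline} reveals nothing about either party's trajectory beyond the final boolean output and the (already public) trajectory lengths $n_1, n_2$. I would begin by noting that under the semi-honest adversary model of \secref{subsec:adversary_model}, it suffices to exhibit, for each party, a simulator that produces a view computationally indistinguishable from its real view given only that party's own input and the final output.

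Next I would walk through Algorithm \ref{algo:mpc_baseline} and classify every value as either public, private-but-oblivious, or output. The sizes \texttt{io.n} of both parties are explicitly revealed via \texttt{ocBroadcast} (lines before \lineref{line:broad_n2}); we declare these as non-private in the threat model. All coordinate and timestamp arrays are lifted to oblivious types at \lineref{line:obv_start}--\lineref{line:obv_end}, and every subsequent operation (subtraction, multiplication, comparison, the \texttt{obliv if} at \lineref{line:obv_check}, and the loop bodies from \lineref{line:obv_iter_start}--\lineref{line:obv_iter_end}) is an Obliv-C primitive whose control flow depends only on the public loop bounds $n_1, n_2$, not on any private value. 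Hence the transcript of garbled-circuit messages is determined, up to cryptographic randomness, by $(n_1, n_2)$ together with the final decision bit.

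From this structural observation, the simulation-based argument follows by the standard composition of semi-honest secure gadgets: each Obliv-C primitive individually admits an efficient simulator, and their sequential composition does too. For the server's view, the simulator uses $L_P$, the public $n_1 = |L_P|$, the public $n_2$, and the output bit to produce an indistinguishable transcript; symmetrically for the user with $L_u$. This establishes confidentiality of $L_u$ against the server-side adversary and of $L_P$ against the client-side adversary, which is precisely the second bullet of each privacy requirement in \defref{def:ppct}.

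The main obstacle I anticipate is the subtle leakage through the early-termination \textbf{return True} inside the \texttt{obliv if}: a naive read of the pseudocode suggests that the iteration count (and thus the first pair of indices witnessing a contact) could leak. I would address this by appealing to the Obliv-C semantics noted in the paper, namely that ``The body of \emph{obliv if} is executed regardless of the value of conditions,'' so the return is effectively deferred and aggregated into a single oblivious output bit after the full $n_1 \times n_2$ sweep, restoring input-independent control flow. Once this point is made explicit, the remainder of the proof is a routine invocation of the Obliv-C security theorem, and the only caveat to state clearly is that $n_1, n_2$ (the trajectory lengths) are leaked by design, consistent with the scope of the privacy requirement in \defref{def:ppct}.
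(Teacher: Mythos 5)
Your proposal is correct and follows essentially the same route as the paper's proof: both arguments observe that all coordinates and timestamps are lifted to oblivious types at lines \ref{line:obv_start}--\ref{line:obv_end}, that every subsequent computation inside the loops of lines \ref{line:obv_iter_start}--\ref{line:obv_iter_end} touches only oblivious variables, and that the trajectory lengths $n_1, n_2$ are deliberately broadcast at \lineref{line:broad_n2} and declared non-sensitive. You go somewhat further than the paper by casting the argument in the standard simulation paradigm and by explicitly flagging the potential control-flow leakage of the early \textbf{return True} (a point the paper handles only in its prose description of \texttt{obliv if}, not inside the proof itself); these are welcome refinements of, rather than a departure from, the paper's argument.
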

\begin{proof}
	The privacy is obviously guaranteed since we are using MPC (provided by Obliv-C) operations over private inputs. We show that other than the size of the visited location set, \algoref{algo:mpc_baseline} discloses nothing about $L_u$ for the user and $L_P$ for the patients. 

	About $L_u$: from \lineref{line:obv_start} to \lineref{line:obv_end}, are transformed to oblivious data types. The computations inside the iterations from \lineref{line:obv_iter_start} to \lineref{line:obv_iter_end} only use the oblivious variables (\eg ox\_array2). Thus, $L_u$ is strictly protected. Note that in our problem setting (\defref{def:ppct}), the length of the trajectory ($|L_u|$) is not considered sensitive, and in fact we broadcast $|L_u|$ at \lineref{line:broad_n2}, and it is publicly available to both parties. 

	About $L_P$: from \lineref{line:obv_start} to \lineref{line:obv_end}, are transformed to oblivious data types. The computations inside the iterations from \lineref{line:obv_iter_start} to \lineref{line:obv_iter_end} only use the oblivious variables (\eg ox\_array1 and time\_array1). Thus, $L_P$ is strictly protected.
\end{proof}

\subsection{ContactGurad}

It consists of the following steps. 

\textbf{Step 1. Location perturbation}: each user perturbs his/her visited location set $L_u$ to a perturbed visited location set $L'_u$. The user submits $L'_u$ to the server. 

\textbf{Step 2. Subset selection}: the server compares $L'_u$ with $L_P$, the trajectories of the patients. The server keeps track of the indexes of the high-risk locations (\eg the 3rd/4th location) that are within distance $r'$ to some visited location of the patients. Then, the server returns the list of indexes (denoted as $\tilde{I}$, with privacy protection) of the identified high-risk locations to the user. 

\textbf{Step 3. Accelerated MPC}: the user selects the subset of visited locations based on $\tilde{I}$, which is returned by the server in Step 2, and then uses MPC protocol to compare the subset of locations with the patients' trajectories $L_P$ on the server. If there is a location in the subset (together with the visited timestamp) that meets the definition of contact as \defref{def:close_contact}, the user $u$ will be identified as a contact. Otherwise, the user will not be identified as a contact. 

\subsubsection{Proofs}

\equref{eq:proof_1} is by the definition of the combined mechanism $K_{1,2}(\mathbf{l})=(K_1(l_1), K_2(l_2))$, which applies mechanism $K_1$ to the first location $l_1$ and $K_2$ to the second location $l_2$ independently. So, the probability of generating $o_1$ and $o_2$ are the products of probability of generating $o_1$ by $K_1$ and $o_2$ by $K_2$. 

\equref{eq:proof_3} is because $K_1$ satisfies $\epsilon_1$-Geo-I (\cite{andres13}) and $K_2$ satisfies $\epsilon_2$-Geo-I. By the definition of Geo-I in \cite{andres13}, the probability of generating the same output location, $o_1$, from two input locations $l_1$ and $l'_1$ are bounded by $e^{\epsilon_1\cdot d(l_1, l'_1)}$. The multiplicative distance between two distributions, $d_{\rho}(M(x),M(x'))$ in \cite{andres13} is defined as $|\ln \frac{M(x)}{M(x')}|$ (see the original paper \cite{andres13}), here we transform the $\ln$ function back to the exponential form. 

\equref{eq:proof_4} is by the definition of $d_{\infty}(\mathbf{l}$, $d_{\infty}(\mathbf{l},\mathbf{l}') = max_id(l_i, l'_i)$. Both $d(l_1, l'_1) \leq max_id(l_i, l'_i)$ and $d(l_2, l'_2) \leq max_id(l_i, l'_i)$ hold. 

\fakeparagraph{Privacy analysis} The privacy guarantee of this step is provided with the following theorem. 

\begin{proof}
Without loss of generality, let us focus on one specific location $l$ in $L_P$. Assume that $l$ is within distance $r'$ to the $i$-th location in $L'_u$. Then the exact index constructed is $\mathbf{i} = \{0, \ldots, 1, \ldots, 0\}$, where the $i$-th location is with value 1 and all the other positions are 0. Then, with the randomized response mechanism, for a given output $\mathbf{o}=\{0, \ldots, 1, \ldots, 0\}$, the probability of generating this output at the $i$-th position is $e^{\epsilon_P}/(e^{\epsilon_P}+1)$ (ignoring the probability of other positions as they are the same for the neighboring inputs $L'_P$, and will be cancelled out). 

For the neighboring inputs $L'_P$, where all positions are the same except for $l' \in L'_P$, and $l'$ be outside distance $r'$ of the $i$-th location in $L'_u$ and all the other locations in $L'_u$. The exact index constructed for this neighboring input is $\mathbf{i'} = \{0, \ldots, 0, \ldots, 0\}$, where the $i$-th position is with value 0. Then the probability of generating the same output $\mathbf{0}=\{0, \ldots, 1, \ldots, 0\}$ at the $i$-th position is $1/(e^{\epsilon_P}+1)$ (again ignoring the probabilities for the other positions). So, the ratio of the two probabilities are:
\begin{equation}
	\frac{\Pr[R(\mathbf{i})=\mathbf{o}|l\in L_P)]}{\Pr[R(\mathbf{i}')=\mathbf{o}|l'\in L'_P)]} = \frac{e^{\epsilon_P}/(e^{\epsilon_P}+1)}{1/(e^{\epsilon_P}+1)} = e^{\epsilon_P}.
\end{equation}

Thus, the subset selection step satisfies ${\epsilon_P}$-Local Differential Privacy. 
\end{proof}

\subsubsection{Algorithms}

We use a running example next to better illustrate the basic idea of subset selection. 

\begin{example}

The example is shown in \figref{subfig:subset}. In this example, the user $u$ has visited location $l_1$ at time $t_1$ and $l_2$ at time $t_2$. At Step 1. location perturbation, the user perturbs location $l_1$ to $l'_1$ and $l_2$ to $l'_1$. The set $L'_u$ comprises these two perturbed locations and is sent to the server. 

In this example, the patient visited location $l_3$ at time $t_3$. Then, the server compares $L'_u$, which contains $l'_1$ and $l'_2$, with $l_3$. Because $l'_2$ locates close enough to $l_3$ (within a distance $r'$), $l'_2$ is identified as a high-risk location (and chosen to be included in the subset), whereas $l'_1$ is not chosen. Then the server returns the indexes of the subset of high-risk locations, denoted by $I=\{2\}$ back to the user, indicating that the 2nd location is a high-risk location. 

For privacy reason, before the server returns the \textit{exact} high-risk indexes, it uses randomized response to perturb them. Then, the noisy indexes are returned to the user. In this example, $I=\{2\}$ could also be represented as $\mathbf{i} = \{0, 1\}$ indicating the 1st location is not a high-risk location (indicated by a value of 0) and the 2nd location is high-risk (indicated by a value of 1). Each bit is perturbed with randomized response. Given a privacy budget $\epsilon_P$, the original value is returned with a probability $e^{\epsilon_P}/(e^{\epsilon_P}+1)$. Otherwise, it returns a flipped value (0 to 1, 1 to 0) with probability $1/(e^{\epsilon_P}+1)$. 

\end{example}

\algoref{algo:subset_selection} describes the detailed steps to select the subset of high-risk locations. It initializes the returned index set as the empty set, and then for each perturbed location $l'$, if it is within distance of $r'$ to some location $l$ visited by the patients (stored in $L_P$). For each location, we use randomized response to transform the original value (variable \texttt{flag}) to a noisy one. In the end, the noisy index $\tilde{I}$ is returned to the user $u$. 

Note that the parameter $r'$ is a controllable variables to balance the accuracy and efficiency trade-off of ContactGuard. If $r'$ is set as a large value, then all locations would be included in the subset of high-risk locations and later be used in the MPC computations, and thus it would lead to a high overhead. If $r'$ is set as a small value, then the perturbed location needs to be close to some patients' visited location in order to be included in the subset of high-risk locations, which would then lead to more false negatives. 

\begin{algorithm}[t]
	\DontPrintSemicolon
	\KwIn{$L'_u, L_P, r', \epsilon_P$. }
	\KwOut{$\tilde{I}$.}

	$\tilde{I}:= \{\}$ \;
	$i:= 0$ \;
	$\gamma := e^{\epsilon_P}/(e^{\epsilon_P}+1)$\;
    \ForEach{$l' \in L'_u$}{
				$i = i + 1$ \;
				flag$ := $ 0 \;
				\ForEach{$l \in L_P$}{
            \If{$d(l', l) \leq r'$}{
							flag $ = $ 1 \; 
            }
        }
				
				Sample $b := \text{Bernoulli}(\gamma)$ \tcp*{Randomized response} 
				\If{b = 1} {  
					flag = 1 - flag \tcp*{Flip the result}
				}
				\If{flag = 1} {
						$\tilde{I}$.insert($i$) \;
				}

    }
	\Return{$\tilde{I}$} \;
	\caption{\texttt{Subset\_Selection} }
	\label{algo:subset_selection}
\end{algorithm}

\algoref{algo:accelerated_mpc_input} describes the steps to prepare the private inputs using only the subset of the locations for the MPC protocol. 

\begin{algorithm}[t]
	\DontPrintSemicolon
	\KwIn{$L_u, \tilde{I}$. }
	\KwOut{\texttt{ProtocolIO} $io$.}
	$io:= $ Initialize a \texttt{ProtocolIO} object\;
	$io.n = \tilde{I}.$size() \;
	$i := 0$ \;
    \ForEach{$l_{idx} \in \tilde{I}$}{
				$io.x[i] = L_u[l_{idx}].x$ \;
				$io.y[i] = L_u[l_{idx}].y$ \;
				$io.time[i] = L_u[l_{idx}].time$ \;
				$i = i + 1$\;
    }
	\Return{$io$} \;
	\caption{\texttt{Prepare\_MPC\_Inputs} }
	\label{algo:accelerated_mpc_input}
\end{algorithm}

\subsubsection{Time complexity}

We conduct time complexity analysis for each step of ContactGuard. 

Step 1, location perturbation: the client side (the user) runs \algoref{algo:perturb_location_set}. The time complexity is $\mathcal{O}(|L_u|)$, linear to the number of visited locations given in the input $L_u$.

Step 2, subset selection: the server side conducts the subset selection. The time complexity is $\mathcal{O}(|L'_u||L_P|) \to \mathcal{O}(|L_u||L_P|)$, because it compares each location of $L'_u$ against every location of $L_P$. 

Step 3, accelerated MPC: let $|I|$ be the size of the subset selected in Step 2. Then, the oblivious transformation from private inputs to oblivious data types takes $\mathcal{O}(|n_I| + |L_P|)$ to complete. The main overhead occurs when we iterate over all locations from the patients and only locations from the subset of the user's visited locations, which takes $\mathcal{O}(|L_P||I|)$ to complete. Overall, for each user, the MPC procedure takes $\mathcal{O}(|L_P| + |I| + |L_P||I|) \to \mathcal{O}(|L_P||I|)$.  

Overall, for the server to finish processing all users, it takes $\mathcal{O}(\sum_u |I_u||L_P|) \to \mathcal{O}(|U||L_P| \\ \max_u|I_u|)$, where $I_u$ indicates the subset for each user $u$. The time complexity takes one MPC operation as a unit operation, and it overrides the $\mathcal{O}(|L_u||L_P|)$ time needed at Step 2, subset selection, which does not require secure computation. 

\subsubsection{Privacy analysis}

We show the end-to-end privacy analysis for ContactGuard. 

Step 1, location perturbation: $L_u$ is protected with $\epsilon$-General-Geo-I as shown in \theref{theo:general_geo_i}. The client side does not have knowledge about the patients' locations ($L_P$), so Step 1 is private \wrt both $L_u$ and $L_P$. 

Step 2, subset selection: the computation is only done on the perturbed location $L'_u$. According to the post-processing property of Geo-I, the computation process is still private \wrt $L_u$. About the patients' locations $L_P$, \theref{theo:privacy_subset} shows that the returned noisy index satisfies $\epsilon_P$-LDP \wrt $L_P$. Thus, this step is also private \wrt both $L_u$ and $L_P$. 

Step 3, accelerated MPC: the privacy is guaranteed by the MPC procedure (provided by Obliv-C) operations over private inputs. On the client side, the subset of $L_u$ is selected from $L_P$ with the noisy index $\tilde{I}$, and then they are transformed to the oblivious data types. On the server, $L_P$ is also transformed to the oblivious data types. All further computations are based on oblivious operators.

\section{Additional experimental results}
\label{sec:appendix_exp}

Gowalla dataset: it is a location-based social network check-in dataset \cite{DBLP:conf/kdd/ChoML11}, recording the latitude and the longitude of users' visited locations. The dataset contains records in different cities around the world, and we select the records from San Francisco city, allowing the GIS coordinates to be mapped to X,Y Cartesian coordinates in the range of $[0, 10549] \times [0, 8499]$, such that the Geo-I could be directly applied to the X,Y coordinates. Following the work \cite{DBLP:conf/kdd/DuTZTZ18}, we then randomly select a small subset of the users as the patients (the number of patient is set as 2 to 8, with the patient/user ratio set as around 1\%), and the rest are set as users to be tested. 

Synthetic dataset: we generate each location with X,Y coordinates from the range $[0, 10549] \times [0, 8499]$, which is the same as the one of the Gowalla dataset. 

\subsection{System configuration}
 The experiment is conducted on a CentOS Linux system with Intel(R) Xeon(R) Gold 6240R CPU @2.40GHz and 1007G memory. We implement the methods and conduct the experiments using C. The MPC extension is provided by Obliv-C \cite{zahur2015obliv}. We use two processes on the same machine to simulate the client-server MPC protocol, and neglect the communication time.

The detailed results are shown in \tabref{tab:acc_time_tradeoff} and \tabref{tab:acc_time_tradeoff_2nd}, respectively.

As compared to the Geo-I baseline, CG method obtains 2.28$\times$, 2.38$\times$, and 1.24$\times$ improvement in terms of recall, precision and accuracy, respectively. On the other hand, when we compare to the \textit{exact} MPC baseline, the CG method is about 2.5$\times$ faster (in both \tabref{tab:acc_time_tradeoff} and \tabref{tab:acc_time_tradeoff_2nd}).

\begin{table}[t]
	\centering
	\caption{Recall/Precision/Accuracy vs. Running time trade-off of different methods on the Gowalla dataset.  $|U|=200, \epsilon=4.0.$}
	\label{tab:acc_time_tradeoff}
	\begin{tabular}{lccc}
		\toprule
		Method & Geo-I & MPC & ContactGuard   \\
		\midrule
		Recall  & 70.0\% & 100\% & \textbf{100\%}    \\
		Precision & 46.7\% & 100\% & \textbf{100\%}   \\
        $F_1$ score & 56.0\% & 100\% & \textbf{100\%} \\
        Accuracy & 76.8\% & 100\% & \textbf{100\%}\\
        Running time (s) & 0.08 & 99.35 & \textbf{39.31 (2.53x faster)}\\
		\bottomrule
	\end{tabular}
\end{table}

\begin{table}[t]
	\centering
	\caption{Recall/Precision/Accuracy vs. Running time trade-off of different methods on the Gowalla dataset.  $|U|=400, \epsilon=3.0.$}
	\label{tab:acc_time_tradeoff_2nd}
	\begin{tabular}{lccc}
		\toprule
		Method & Geo-I & MPC & ContactGuard   \\
		\midrule
		Recall  & 39.0\% & 100\% & \textbf{88.89\%}    \\
		Precision & 42.0\% & 100\% & \textbf{100\%}   \\
        $F_1$ score & 35.0\% & 100\% & \textbf{94.1\%} \\
        Accuracy & 80.5\% & 100\% & \textbf{99.75\%}\\
        Running time (s) & 0.39 & 194.99 & \textbf{77.03 (2.52x faster)}\\
		\bottomrule
	\end{tabular}
\end{table}

For a different setting where the privacy budget is smaller ($|U|=400, \epsilon=3.0$, as in \figref{fig:exp_recall_prec_time_2nd}), CG obtains a reasonably high level of effectiveness. In addition to maintaining a 100\% precision, it obtains 88.89\% in recall and 99.75\% in accuracy. 

\begin{figure}[t]
    \centering
       \begin{subfigure}[b]{0.23\textwidth}
           \centering
           \includegraphics[width=\textwidth]{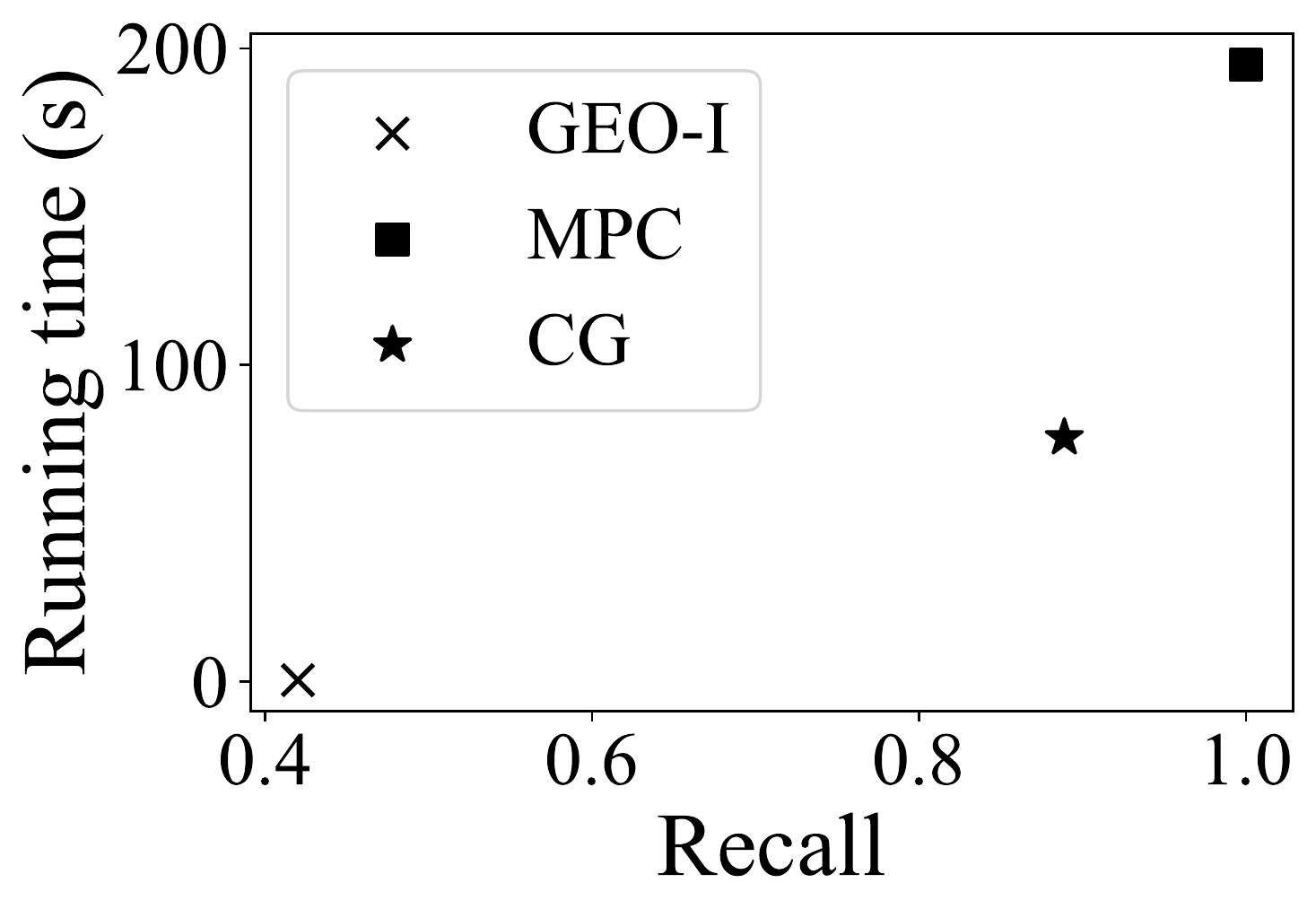}
           \caption{Recall vs. Running time.}
           \label{subfig:exp_recall_time_2nd}
       \end{subfigure}
       \hfill
        \begin{subfigure}[b]{0.23\textwidth}
           \centering
           \includegraphics[width=\textwidth]{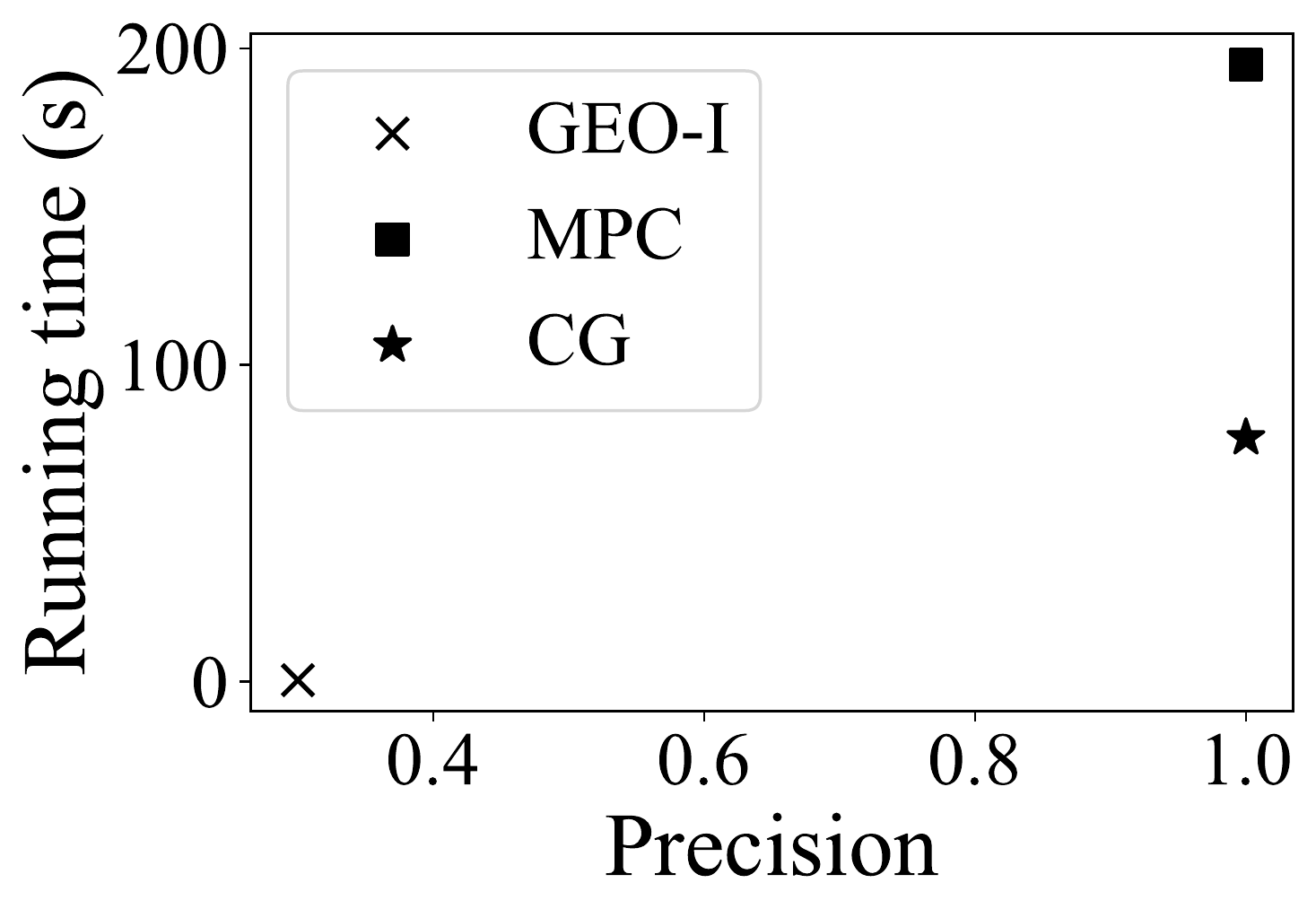}
           \caption{Precision vs. Running time.}
           \label{subfig:exp_prec_time_2nd}
       \end{subfigure}
       \hfill
        \begin{subfigure}[b]{0.23\textwidth}
           \centering
           \includegraphics[width=\textwidth]{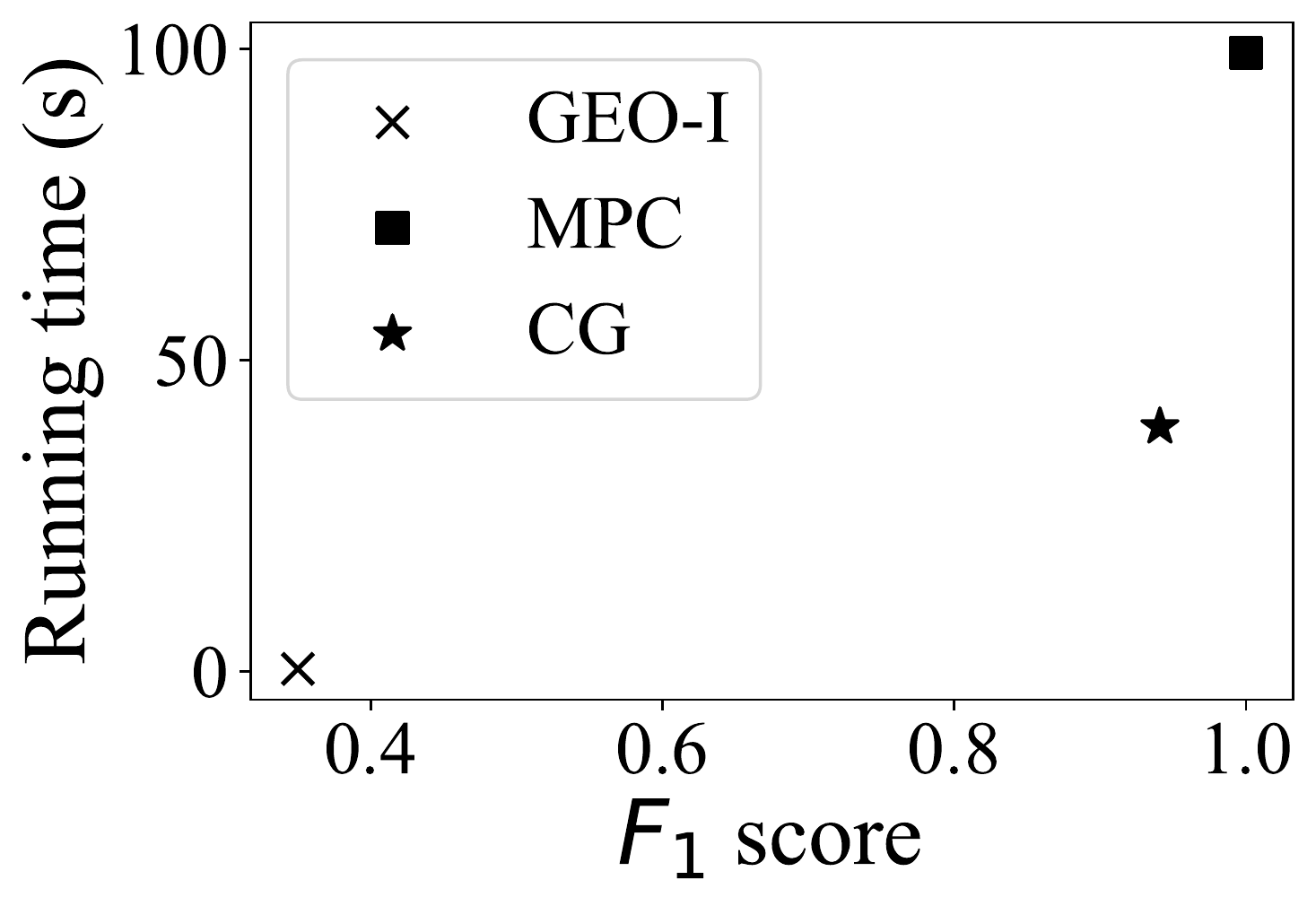}
           \caption{$F_1$ score vs. Running time.}
           \label{subfig:exp_f1_time_2nd}
       \end{subfigure}
       \hfill
        \begin{subfigure}[b]{0.23\textwidth}
           \centering
           \includegraphics[width=\textwidth]{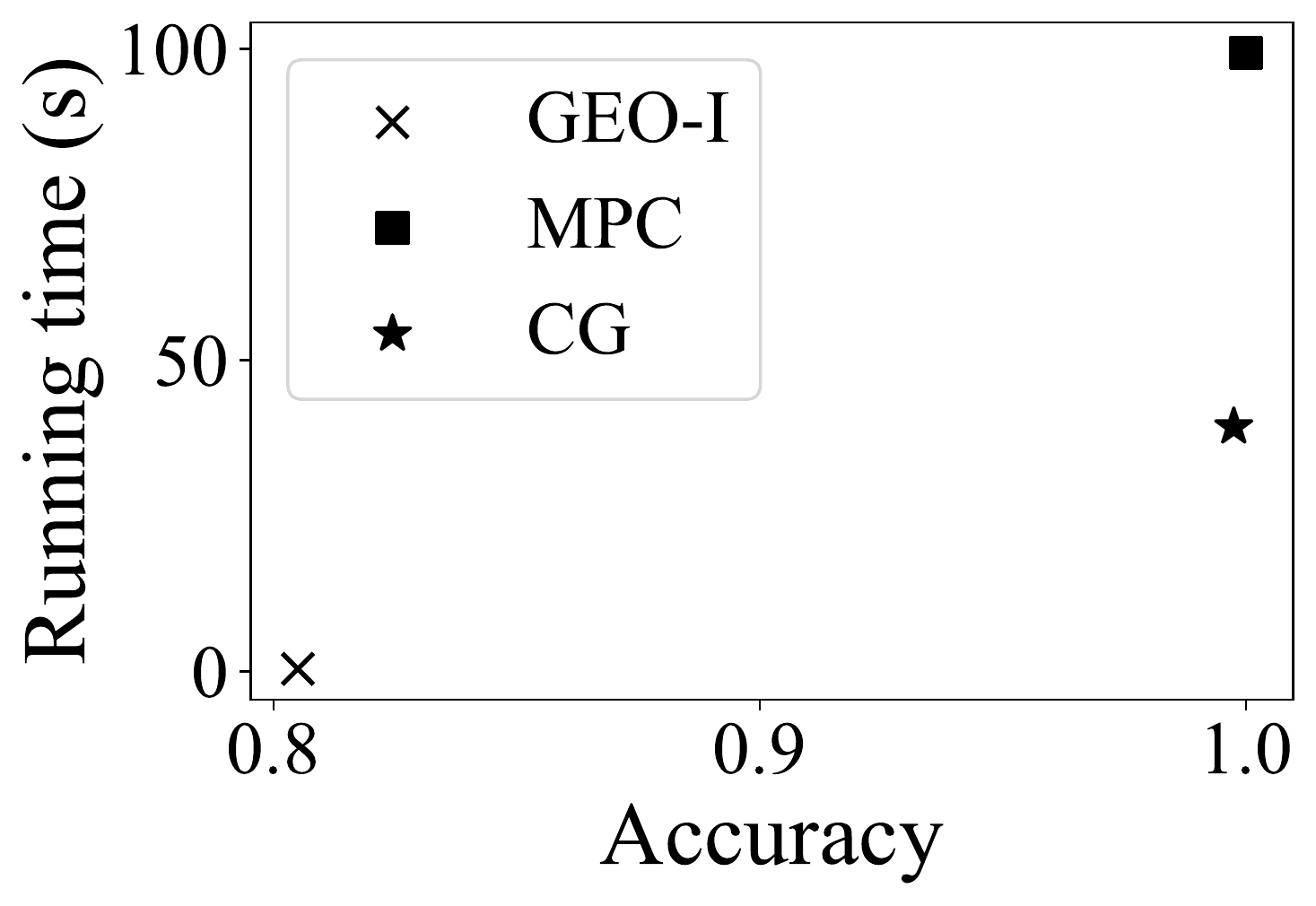}
           \caption{Accuracy vs. Running time.}
           \label{subfig:exp_acc_time_2nd}
       \end{subfigure}
    \caption{\small Effectiveness (Recall/Precision/$F_1$/Accuracy) vs. Running time trade-off of different methods on the Gowalla dataset.  $ |U|=400, \epsilon=3.0.$}\label{fig:exp_recall_prec_time_2nd}
\end{figure}

\figref{fig:syn_recall_U} shows measures of effectiveness (recall, precision, $F_1$, and accuracy) when we vary the number of users ($|U|$), for the synthetic dataset. 

\begin{figure}[t]
    \centering
       \begin{subfigure}[b]{0.23\textwidth}
           \centering
           \includegraphics[width=\textwidth]{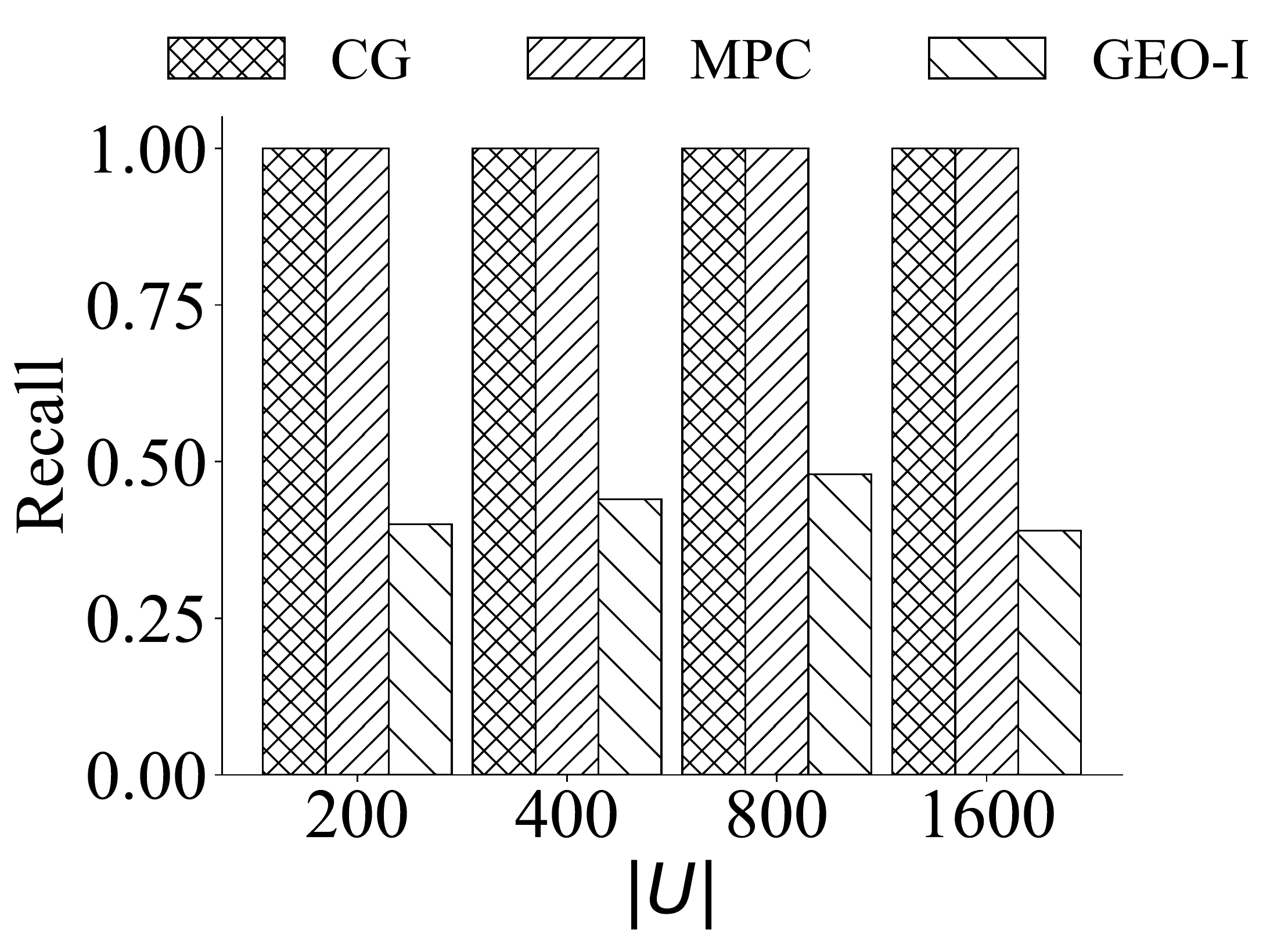}
           \caption{Recall.}
           \label{subfig:syn_recall_U}
       \end{subfigure}
       \hfill
        \begin{subfigure}[b]{0.23\textwidth}
           \centering
           \includegraphics[width=\textwidth]{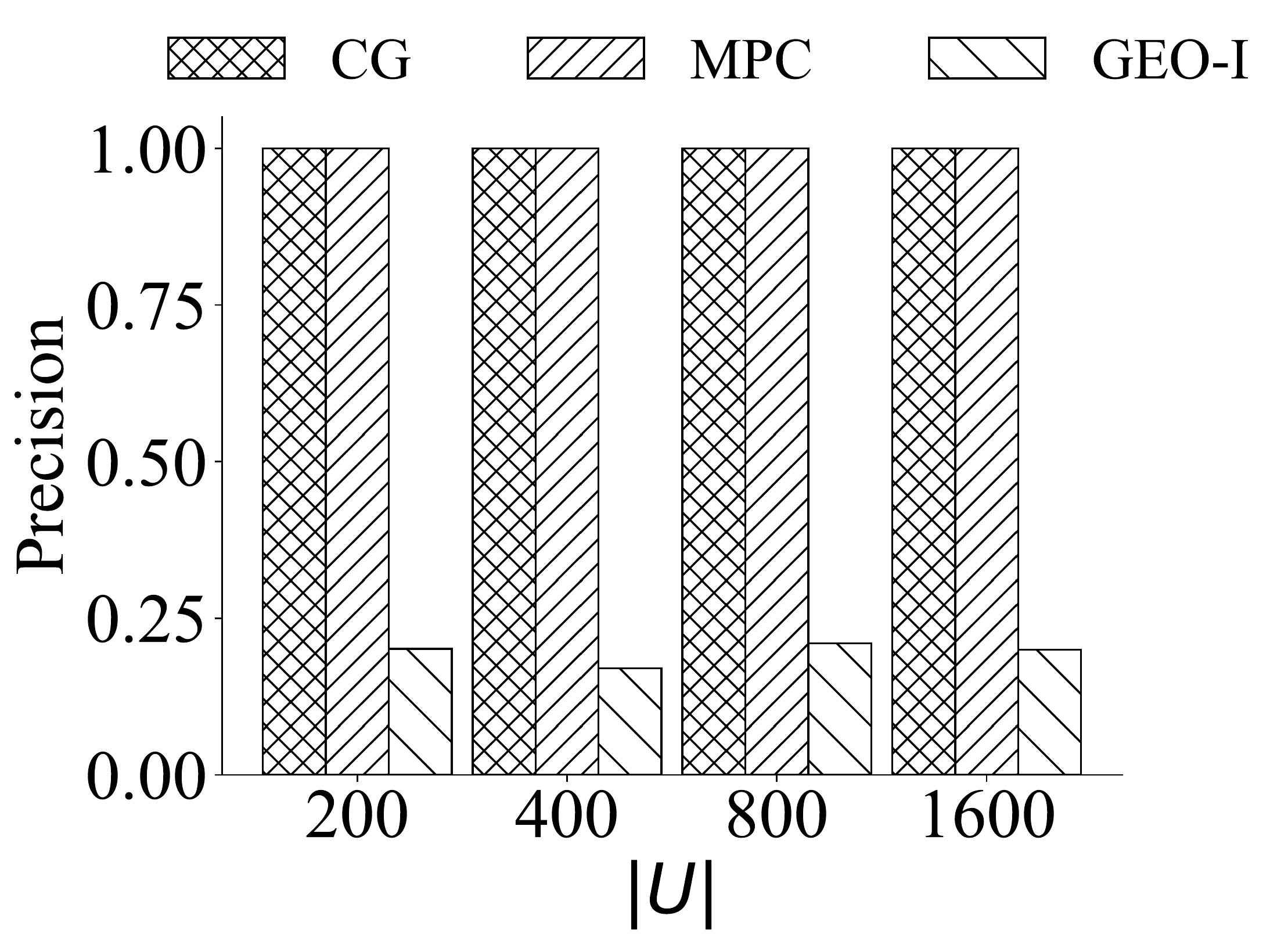}
           \caption{Precision.}
           \label{subfig:syn_prec_U}
       \end{subfigure}
       \hfill
        \begin{subfigure}[b]{0.23\textwidth}
           \centering
           \includegraphics[width=\textwidth]{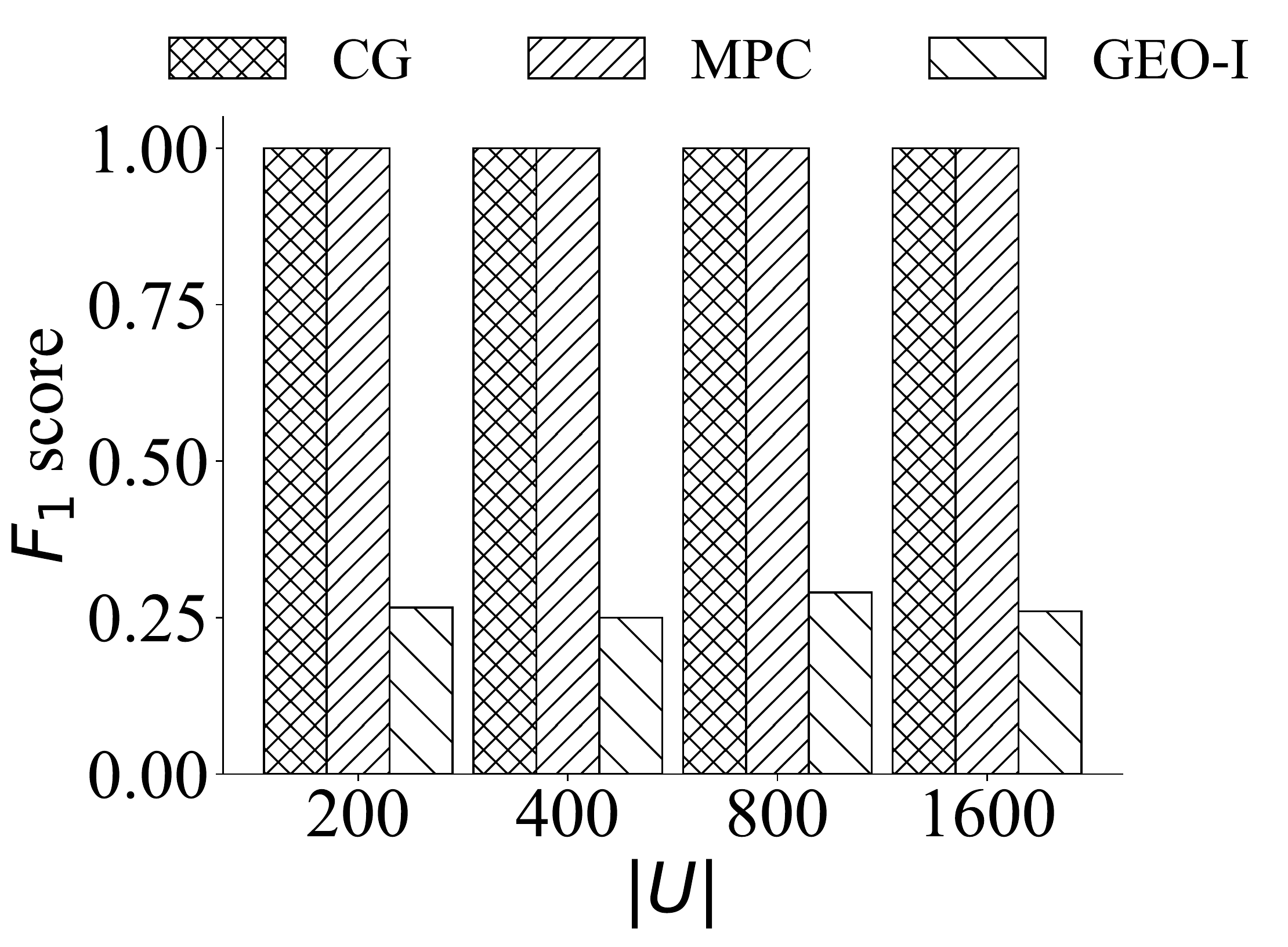}
           \caption{$F_1$ score.}
           \label{subfig:syn_f1_U}
       \end{subfigure}
       \hfill
        \begin{subfigure}[b]{0.23\textwidth}
           \centering
           \includegraphics[width=\textwidth]{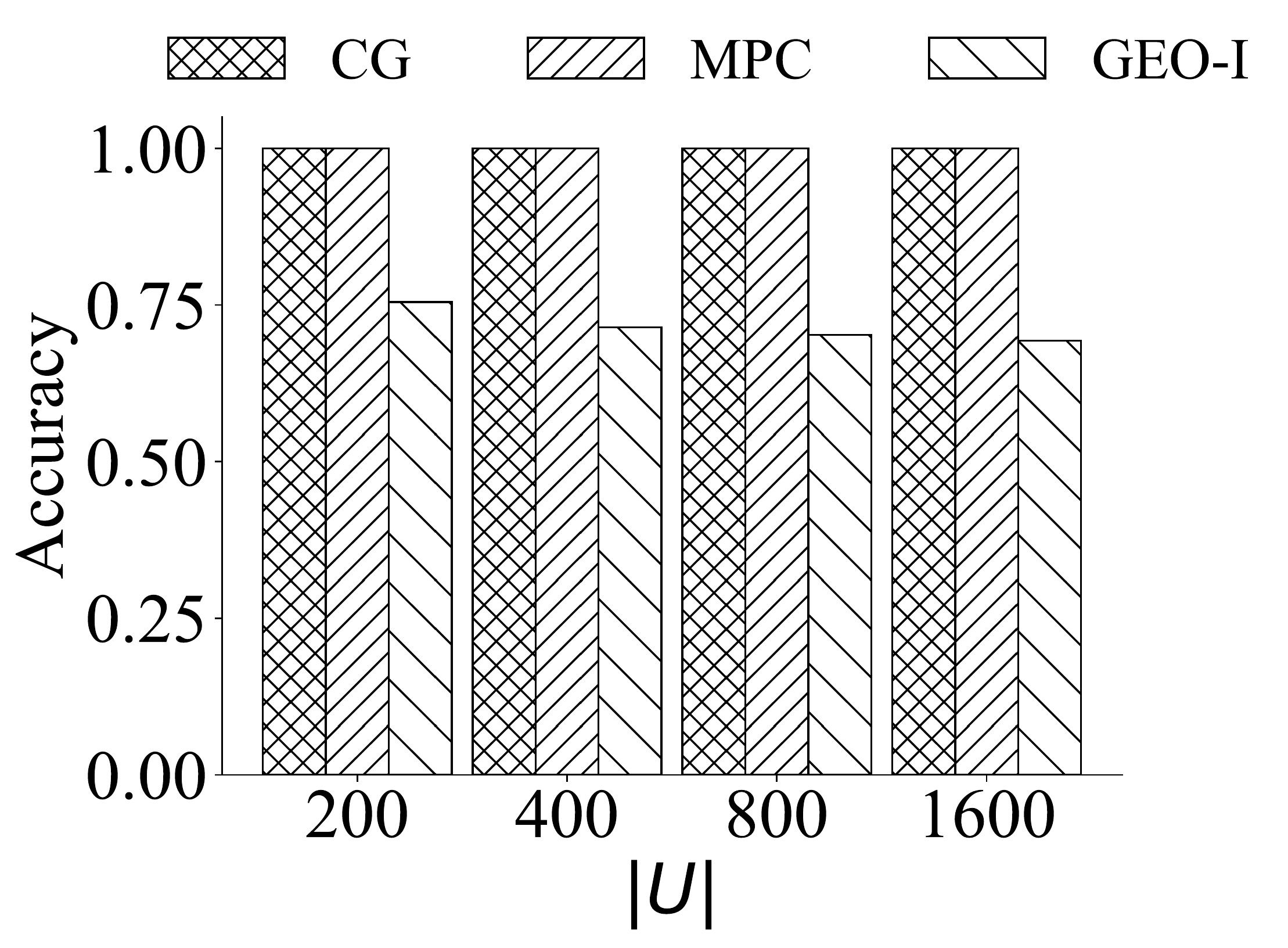}
           \caption{Accuracy.}
           \label{subfig:syn_acc_U}
       \end{subfigure}
    \caption{\small Effectiveness (Recall/Precision/$F_1$/Accuracy) when varying the number of users ($|U|$) on the synthetic dataset.  $ \epsilon=4.0.$}\label{fig:syn_recall_U}
\end{figure}

\subsection{More results on the Gowalla dataset}

\noindent \textit{More on varying $\epsilon_P$:}

The CG method obtains high effectiveness as measured by recall/\\precision/$F_1$/accuracy when we vary $\epsilon_P$. Though the recall drops to around 0.8 when $\epsilon_P=2.0$, it obtains a similar level as to the one of MPC when a larger value is given.

\begin{figure}[t]
        % \vspace{-5ex}
	\centering
		 \begin{subfigure}[b]{0.23\textwidth}
				 \centering
				 \includegraphics[width=\textwidth]{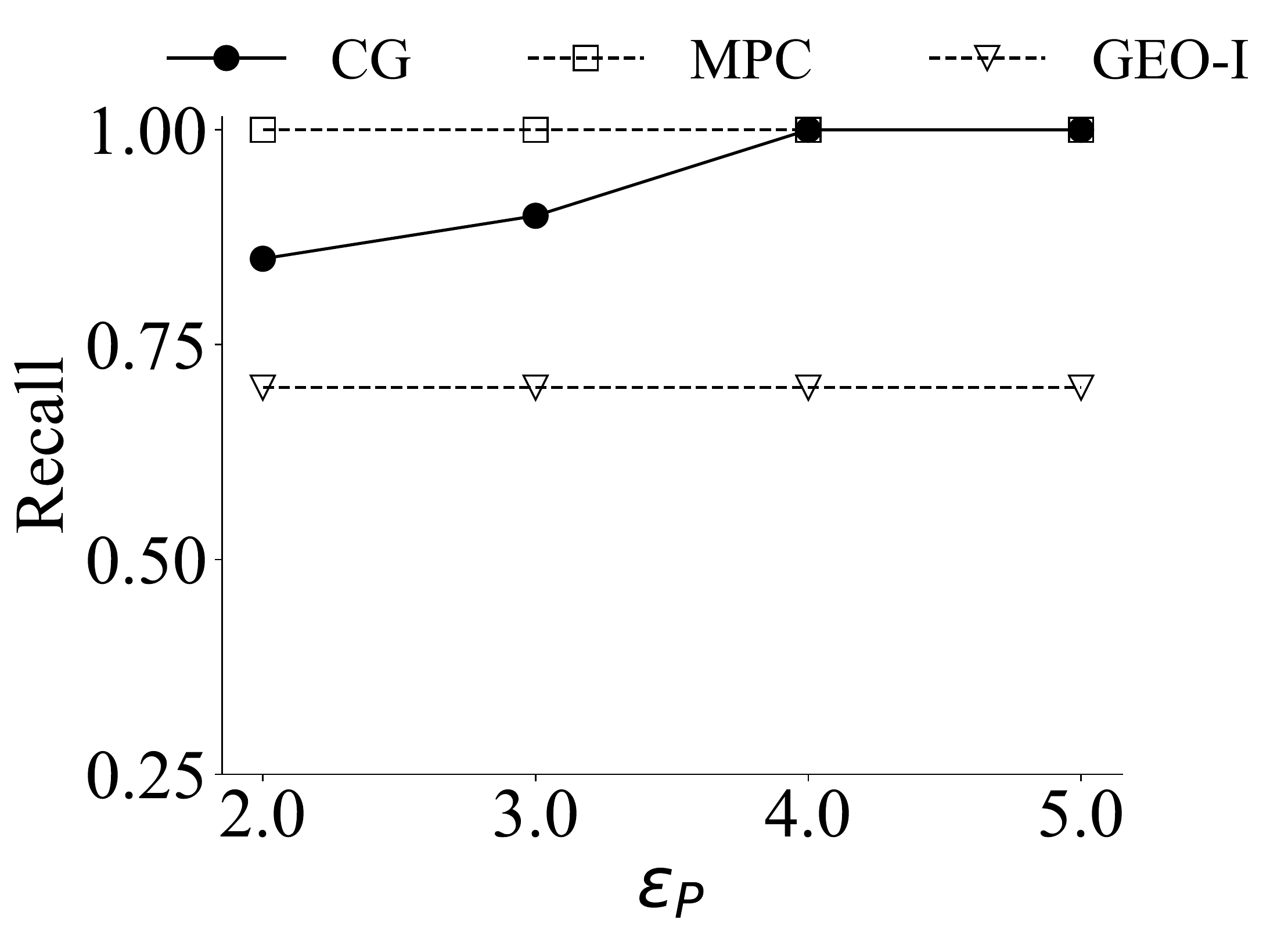}
				 \caption{Recall.}
				 \label{subfig:recall_eps_P}
		 \end{subfigure}
		 \hfill
			\begin{subfigure}[b]{0.23\textwidth}
				 \centering
				 \includegraphics[width=\textwidth]{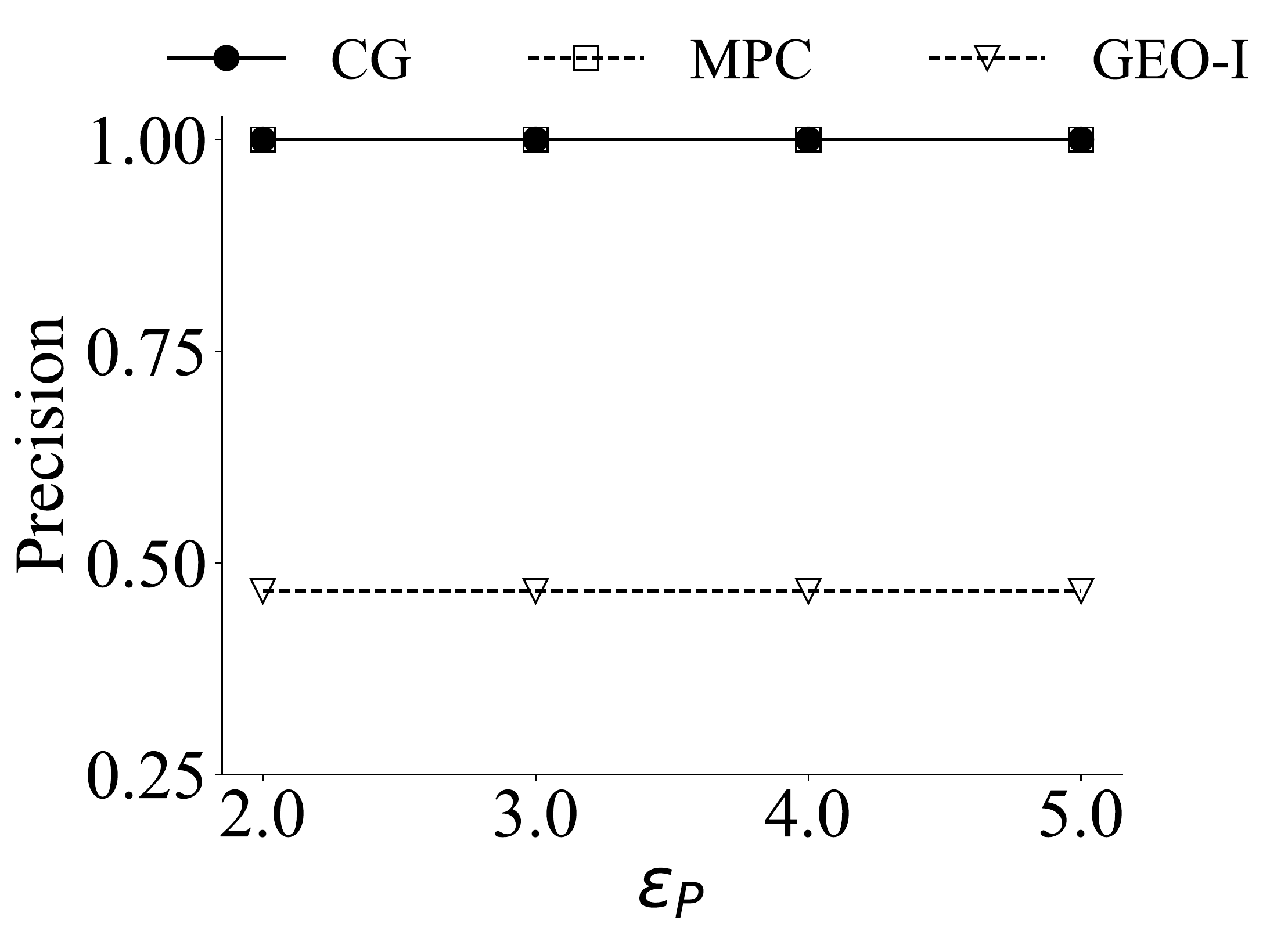}
				 \caption{Precision.}
				 \label{subfig:prec_eps_P}
		 \end{subfigure}
		 \hfill
			\begin{subfigure}[b]{0.23\textwidth}
				 \centering
				 \includegraphics[width=\textwidth]{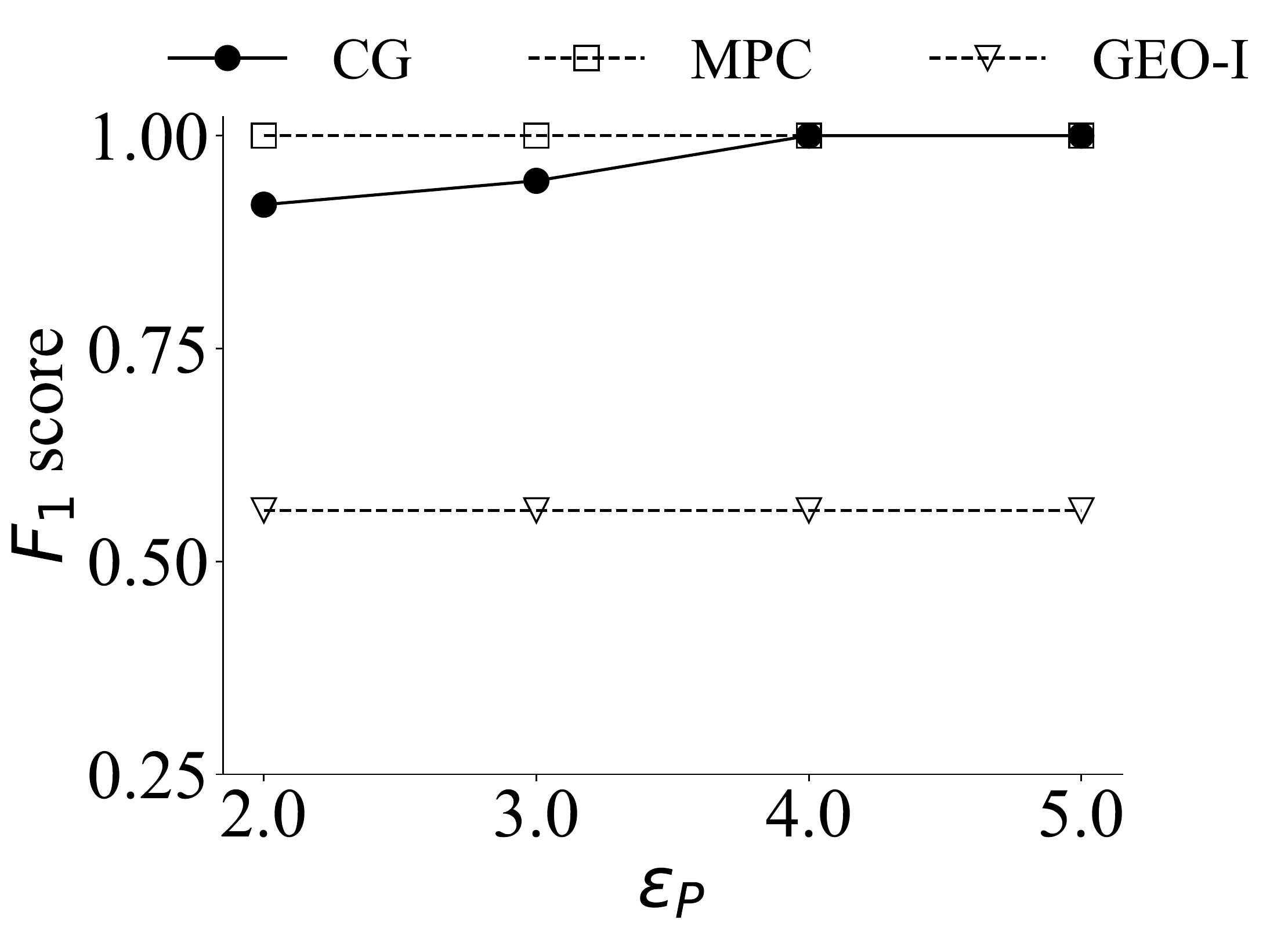}
				 \caption{$F_1$ score.}
				 \label{subfig:f1_eps_P}
		 \end{subfigure}
		 \hfill
			\begin{subfigure}[b]{0.23\textwidth}
				 \centering
				 \includegraphics[width=\textwidth]{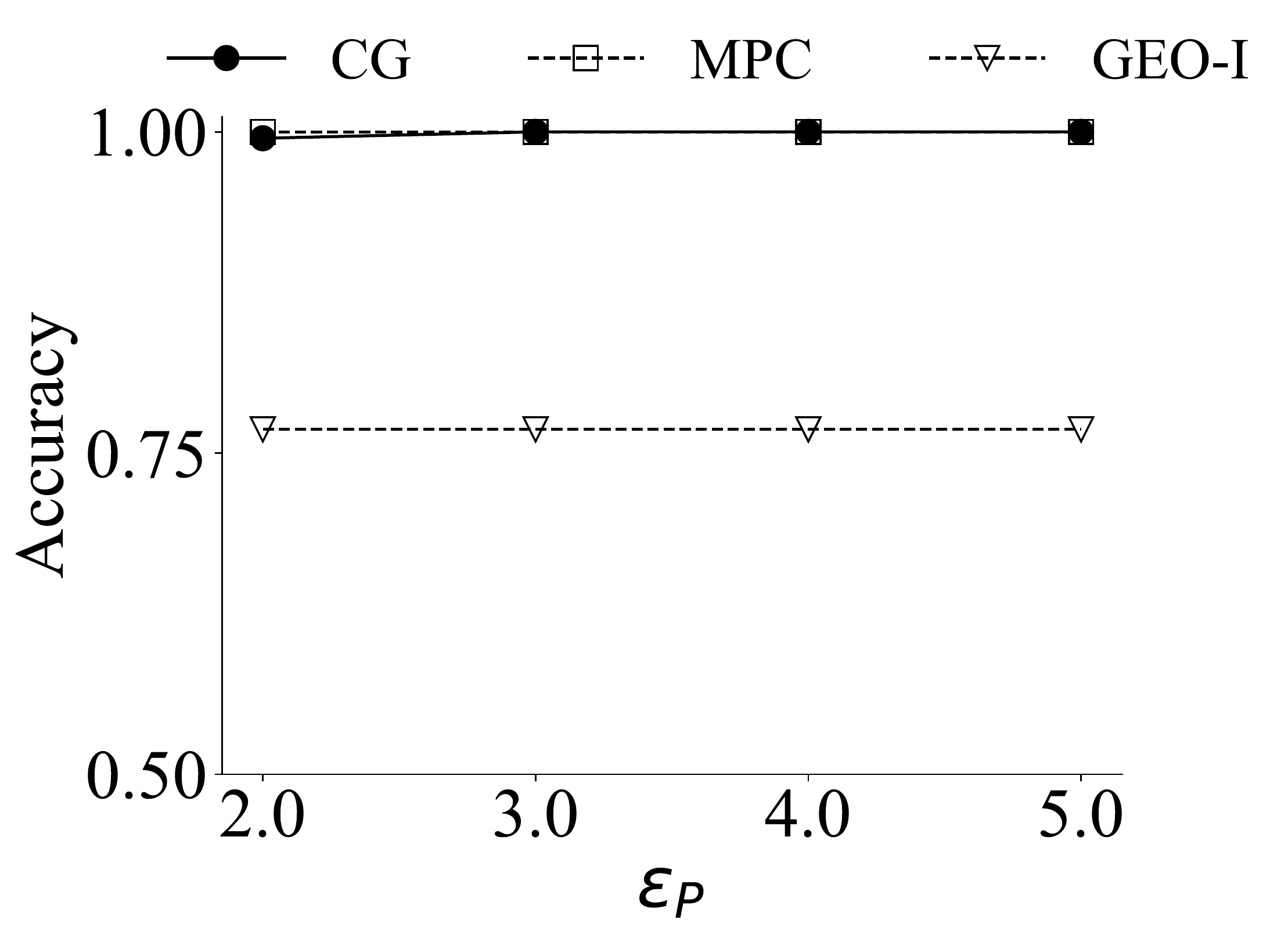}
				 \caption{Accuracy.}
				 \label{subfig:acc_eps_P}
		 \end{subfigure}
	\caption{\small Effectiveness (Recall/Precision/$F_1$/Accuracy) when varying the privacy budget for the patients ($\epsilon_P$) on the Gowalla dataset.  $eps=4.0, |U|=400.$}\label{fig:exp_effectiveness_eps_P}
\end{figure}

\bibliographystyle{plain}

\end{document}